\documentclass[reqno]{amsart}
\usepackage{amsfonts}
\usepackage{amssymb,latexsym}
\usepackage{amsmath,color}
\usepackage{mathtools}
\usepackage{amsthm}
\usepackage{epsfig}
\usepackage{graphicx}
\usepackage{hyperref}
\usepackage{titletoc}
\usepackage{showkeys}
\usepackage{fullpage}
\usepackage{palatino,mathpazo}
\usepackage{subcaption}

\numberwithin{equation}{section}
%\newcounter{mnotecount}[section]
\newtheorem{theorem}{Theorem}[section]
\newtheorem{proposition}[theorem]{Proposition}
\newtheorem{lemma}[theorem]{Lemma}

\theoremstyle{definition}

\def\XXint#1#2#3{{\setbox0=\hbox{$#1{#2#3}{\int}$}
     \vcenter{\hbox{$#2#3$}}\kern-.5\wd0}}

\newcommand{\ba}{\begin{array}}
\newcommand{\ea}{\end{array}}
\newcommand{\bed}{\begin{aligned}}
\newcommand{\eed}{\end{aligned}}
\newcommand{\la}{\lambda}

\newcommand{\s}{(-\Delta)^s}
\newcommand{\laps}{(-\Delta)^{s_1}}
\newcommand{\lapr}{(-\Delta)^{s_2}}

\newcommand{\R}{{\mathbb R}}
\newcommand{\e}{\varepsilon}
\newcommand{\p}{{\bf p}}
\newcommand{\ove}{\overline{\varepsilon}}
\newcommand{\ow}{\bar{u}}
\newcommand{\oh}{\bar{v}}
\newcommand{\tw}{\tilde{u}}

\newcommand{\hx}{\hat{\xi}}
\newcommand{\w}{w_{\varepsilon,{\bf p}}}

\newcommand{\tl}{\tilde{L}}
\newcommand{\ov}{\overline}
\newcommand{\dd}{\left(-\frac{1}{\e},\frac{1}{\e}\right)}

\definecolor{orange}{RGB}{255,127,0}

\newcommand{\bx}{\bar{x}}
\newcommand{\by}{\bar{y}}

\begin{document}

\title[fractional GM system in 1D]{Multi-Spike Solutions to the Fractional Gierer-Meinhardt System in a One-Dimensional Domain}

\author[D. Gomez]{Daniel Gomez}
\address{\noindent Daniel Gomez, ~Department of Mathematics, The University of British Columbia, Vancouver, BC Canada V6T 1Z2}
\email{dagubc@math.ubc.ca}

\author[J. Wei]{Jun-cheng Wei}
\address{\noindent Jun-cheng Wei, ~Department of Mathematics, The University of British Columbia, Vancouver, BC Canada V6T 1Z2}
\email{jcwei@math.ubc.ca}

\author[W. Yang]{ Wen Yang}
\address{\noindent Wen ~Yang,~Wuhan Institute of Physics and Mathematics, Chinese Academy of Sciences, P.O. Box 71010, Wuhan 430071, P. R. China}
\email{wyang@wipm.ac.cn}

\begin{abstract}

In this paper we consider the existence and stability of multi-spike solutions to the fractional Gierer-Meinhardt model with periodic boundary conditions. In particular we rigorously prove the existence of symmetric and asymmetric two-spike solutions using a Lyapunov-Schmidt reduction. The linear stability of these two-spike solutions is then rigorously analyzed and found to be determined by the eigenvalues of a certain $2\times 2$ matrix. Our rigorous results are complemented by formal calculations of $N$-spike solutions using the method of matched asymptotic expansions. In addition, we explicitly consider examples of one- and two-spike solutions for which we numerically calculate their relevant existence and stability thresholds.  By considering a one-spike solution we determine that the introduction of fractional diffusion for the activator or inhibitor will respectively destabilize or stabilize a single spike solution with respect to oscillatory instabilities. Furthermore, when considering two-spike solutions we find that the range of parameter values for which asymmetric two-spike solutions exist and for which symmetric two-spike solutions are stable with respect to competition instabilities is expanded with the introduction of fractional inhibitor diffusivity. However our calculations indicate that asymmetric two-spike solutions are always linearly unstable.

\

\noindent {\bf Keywords}:{ Gierer-Meinhardt system; eigenvalue; stability; fractional laplacian, localized solutions.}

\end{abstract}

\maketitle

\section{Introduction}

The Gierer-Meinhardt (GM) model is a prototypical activator-inhibitor reaction-diffusion system that has, since its introduction by Gierer and Meinhardt in 1972 \cite{gierer_1972}, been the focus of numerous mathematical studies. In the singularly perturbed limit for which the activator has an asymptotically small diffusivity the GM model is known to exhibit localized solutions in which the activator concentrates at a discrete collection of points and is otherwise exponentially small. The analysis, both rigorous and formal, of the existence, structure, and linear stability of such localized solutions has been the focus of numerous studies over the last two decades (see the book \cite{wei_2014_book}). The GM model in a one-dimensional domain has been particularly well studied using both rigorous PDE methods \cite{wei_1998,wei_2007_existence} as well as formal asymptotic methods \cite{iron_2001,ward_2002_asymmetric}. More recent extensions to the classical one-dimensional GM model have considered the effects of precursors \cite{winter_2009,kolokolnikov_2020}, bulk-membrane-coupling \cite{gomez_2019}, and anomalous diffusion \cite{nec_2012_sub,nec_2012_levi,wei_2019_multi_bump}. It is the latter of these extensions which motivates the following paper which focuses on extending the results obtained in \cite{nec_2012_levi,wei_2019_multi_bump} for the \textit{fractional} one-dimensional GM model. 		

The analysis of localized solutions to the GM model fits more broadly into the study of pattern formation in reaction-diffusion systems. Such reaction-diffusion systems have widespread applicability in the modelling of biological phenomena for which distinct agents diffuse while simultaneously undergoing prescribed reaction kinetics (see classic textbook by Murray \cite{murray_2003}). While these models have typically assumed a normal (or Brownian) diffusion process for which the mean-squared-displacement (MSD) is proportional to the elapsed time, a growing body of literature has considered the alternative of \textit{anomalous diffusion} which may be better suited for biological processes in complex environments \cite{metzler_2004,oliveira_2019,reverey_2015} (see also \S7.1 in \cite{bressloff_2014}). In contrast to normal diffusion, for anomalous diffusion the MSD and time are related by the power law $\text{MSD}\propto (\text{time})^\alpha$ where an exponent satisfying $\alpha>1$ or $\alpha<1$ corresponds to \textit{superdiffusion} or \textit{subdiffusion} respectively. Studies of reaction-diffusion systems with subdiffusion and superdiffusion suggest that anomalous diffusion can have a pronounced impact on pattern formation (see \cite{golovin_2008} and the references therein). In particular studies have shown that both superdiffusion and subdiffusion can reduce the threshold for  Turing instabilities when compared to the same systems with normal diffusion \cite{henry_2005,golovin_2008}. Likewise it has been shown that the Hopf bifurcation threshold for spike solutions to the GM model with normal diffusion for the inhibitor and superdiffusion, mainly with L\'evy flights, for the activator is decreased \cite{nec_2012_levi} whereas it is increased in the case of subdiffusion for the inhibitor and normal diffusion for the activator \cite{nec_2012_sub}.

In this paper we consider the existence and stability of localized multi-spike solutions to the periodic one-dimensional GM model with L\'evy flights for both the activator and the inhibitor. In particular we consider the \textit{fractional} Gierer-Meinhardt system
\begin{equation}
\label{1.fgm}
\begin{cases}
u_t+\e^{2s_1}\laps u+u-\frac{u^2}{v}=0, \quad &\mathrm{for}~x\in(-1,1),\\
\tau v_t+D\lapr v+v-u^2=0, &\mathrm{for}~x\in(-1,1),\\
u(x)=u(x+2),~v(x)=v(x+2),\quad &\mathrm{for}~x\in\R,
\end{cases}
\end{equation}
where $0<\e\ll1$ and the parameters  $0<D<\infty$ and $\tau\geq0$ are independent of $\e.$ We assume the exponents satisfy $1/4<s_1<1$ and $1/2<s_2<1$. The (nonlocal) fractional Laplacian $(-\Delta)^s$ replaces the classical Laplacian as the infinitesimal generator of the underlying L\'evy process for $s<1$ and is defined for all $2$-periodic functions by
\begin{subequations}
\begin{equation}\label{eq:frac_lap_def_1}
(-\Delta)^s\phi(x) \equiv C_s\int_{-\infty}^\infty\frac{\phi(x)-\phi(\bx)}{|x-\bx|^{1+2s}}d\bx = C_s\int_{-1}^1 [\phi(x)-\phi(\bx)]K_s(x-\bx)d\bx,
\end{equation}
where
\begin{equation}\label{eq:frac_lap_def_2}
C_s\equiv \frac{2^{2s}s\Gamma(s+1/2)}{\sqrt{\pi}\Gamma(1-s)},\quad  K_s(z) \equiv \frac{1}{|z|^{1+2s}} + \sum_{j=1}^\infty\biggl(\frac{1}{|z+2j|^{1+2s}}+\frac{1}{|z-2j|^{1+2s}}\biggr),
\end{equation}
\end{subequations}
and for which the second equality in \eqref{eq:frac_lap_def_1} follows from the periodicity of $\phi(x)$. We remark that the system \eqref{1.fgm} closely resembles the system considered in \cite{nec_2012_levi} with the primary difference being that we consider the effects of L\'evy flights for both the activator \textit{and} the inhibitor.

Before outlining the structure of this paper we outline our contributions as follows. Using a Lyapunov-Schmidt type reduction we rigorously prove the existence of symmetric and asymmetric two-spike steady solutions of \eqref{1.fgm} satisfying
\begin{equation}
\label{1.fgms}
\begin{cases}
\e^{2s_1}\laps u+u-\frac{u^2}{v}=0, \quad &\mathrm{for}~x\in(-1,1),\\
D\lapr v+v-u^2=0, &\mathrm{for}~x\in(-1,1),\\
u(x)=u(x+2),~v(x)=v(x+2),\quad &\mathrm{for}~x\in\R.
\end{cases}
\end{equation}
and determine their linear stability by considering the spectrum of certain $2\times 2$ matrices. In addition we use the method of matched asymptotic expansions to formally construct $N$-spike quasi-equilibrium solutions and derive a system of ordinary differential equations governing their slow dynamics. We furthermore illustrate the effects of anomalous diffusion on the stability of one- and two-spike solutions by calculating thresholds for oscillatory and competition instabilities. In particular our results indicate that L\'evy flights for the activator and inhibitor have, respectively, a destabilizing and stabilizing effect on the stability of single spike solutions. On the other hand we demonstrate that the stability of symmetric two-spike solutions with respect to competition instabilities is independent of $s$ and is stabilized when the inhibitor undergoes L\'evy flights. Finally, we show that asymmetric two-spike solutions are always linearly unstable with respect to competition instabilities.

The remainder of this paper is organized as follows. In \S\ref{sec:main-results} we outline the key rigorous results established in this paper pertaining to the existence and stability of two-spike solutions. Then in \S\ref{sec:prelim} we collect preliminary results which are used in the subsequent sections. In \S\ref{sec:formal-results} we use the method of matched asymptotic expansions as well as full numerical simulations to illustrate the effects of fractional diffusion on the structure and stability properties of one- and two-spike solutions. We then provide proofs of the existence and stability results in \S\ref{sec:proof-existence} and \S\ref{sec:proof-stability} respectively. Finally, in \S\ref{sec:conclusion} we make some concluding remarks.

\vspace{0.5cm}
\section{Main results: Existence and Stability}\label{sec:main-results}
In this section we state the main results of this paper, which include the existence of two spike solutions (symmetric and asymmetric) to the steady problem of the fractional Gierer-Meinhardt system and their stability. Instead of studying the system \eqref{1.fgms}, we replace $u(x)$ by $c_\e u(x)$ and $v(x)$ by $c_\e v(x)$, and introduce the scaling $x=\e y$ for the first equation of \eqref{1.fgms}. Then we can write system \eqref{1.fgms} as	
\begin{equation}
\label{1.fg}
\begin{cases}
\s_y u+u-\frac{u^2}{v}=0, \quad &\mathrm{for}~y\in \dd,\\
D\s v+v-{c_\varepsilon}u^2=0, &\mathrm{for}~x\in(-1,1),\\
u(\e y)=u(\e y+2),~v(x)=v(x+2),\quad &\mathrm{for}~x,y\in\R,
\end{cases}
\end{equation}
with
$$c_\e=\left(\e\int_{\R}w^2(y)dy\right)^{-1}$$
and $w$ being the unique solution of
\begin{equation}
\label{1.ground}
\s w+w-w^2=0,\quad w(x)=w(-x).
\end{equation}
From now on, we shall focus on equation \eqref{1.fg} and provide its existence and stability results.

In order to state the main results, we introduce the Green function associate to the steady problem with periodic boundary and make three assupmptions on the Green function that would be used for rigorous proof and stability analysis. For $z\in(-1,1)$, let  $G_D(x,z)$ be the function satisfying
\begin{equation}
\label{3.green}
\begin{cases}
D(-\Delta)^sG_{D}(x,z)+G_{D}(x,z)=\delta(x-z),\quad & \mathrm{for}~x\in(-1,1),\\
G_{D}(x,z)=G_{D}(x+2,z),  &\mathrm{for}~x\in\R,
\end{cases}
\end{equation}
having the Fourier series expansion
\begin{equation*}
G_D(x,z)=\frac{1}{2}\sum_{\ell=-\infty}^\infty\frac{e^{i \ell\pi (x-z)}}{1+D(\ell \pi)^{2s}}=\frac12+\sum_{\ell=1}^\infty\frac{\cos(\ell\pi(x-z))}{1+D(\ell \pi)^{2s}}.
\end{equation*}

Let $-1<p_1^0<p_2^0<1$ be $2$ points in $(-1,1)$ where the spikes concentrate. We introduce several matrices for later use. For $\p=(p_1,p_2)\in(-1,1)^2$ we let $\mathcal{G}_D$ be the $2\times 2$ matrix with entries
\begin{equation}
\label{2.matrix-0}	
(\mathcal{G}_D)_{ij}= G_D(p_i,p_j).	
\end{equation}
Let us denote $\frac{\partial}{\partial p_i}$ as $\nabla_{p_i}$. When $i\neq j$, we can define $\nabla_{p_i}G_D(p_i,p_j)$ in the classical way, while if $i=j$, since $G_D(x,x)$ is a constant due to the periodic boundary condition, we have $\nabla_{p_i}G_D(p_i,p_i)=0$. Next, we define the matrix associated with the first and second derivatives of $\mathcal{G}$ as follows:
\begin{equation}
\label{2.matrix-1}	
\nabla\mathcal{G}_D(\p)=(\nabla_{p_i} G_D(p_i,p_j)),\quad
\nabla^2\mathcal{G}_D(\p)=(\nabla_{p_i}\nabla_{p_j}G_D(p_i,p_j)).
\end{equation}

We make the following two assumptions.
\begin{enumerate}
	\item [(H1)] There exists a solution $(\hat\xi_1^0,\hat\xi_2^0)$
	of the following equation
	\begin{equation}
	\label{2.limit0}
	\sum_{j=1}^2G_D(p_i^0,p_j^0)(\hat\xi_j^0)^2=\hat\xi_i^0,\quad i=1,2.
	\end{equation}
\end{enumerate}	
\begin{enumerate}
	\item [(H2)] $\frac12\notin\lambda(\mathcal{B})$, where $\lambda(\mathcal{B})$ is the set of eigenvalues of the $2\times 2$ matrix $\mathcal{B}$ with entries
	\begin{equation}
	\label{2.matrixB}
	(\mathcal{B})_{ij}=G_D(p_i^0,p_j^0)\hat\xi^0_j
	\end{equation}
\end{enumerate}

By the assumption $(H2)$ and the implicit function theorem, for $\p=(p_1,p_2)$ near $\p^0=(p_1^0,p_2^0)$, there exists a unique solution $\hat\xi(\p)=(\hat\xi_1(\p),\hat\xi_2(\p))$ for the following equation
\begin{equation}
\label{2.limit}
\sum_{j=1}^2 G_D(p_i,p_j)\hat\xi_j^2=\hat\xi_i,\quad i=1,2.
\end{equation}

We define the following vector field:
$$F(\p):=(F_1(\p),F_2(\p)),$$
where
\begin{equation}
\label{2.f}
F_i(\p)=\sum_{j=1}^2\nabla_{p_i}G_D(p_i,p_j)\hat\xi_j^2
=\sum_{j\neq i}\nabla_{p_i}G_D(p_i,p_j)\hat\xi_j^2,~\ i=1,2.
\end{equation}
Set
\begin{equation}
\label{2.m}
\mathcal{M}(\p)=\hat\xi_i^{-1}\nabla_{p_j}F_i(\p).
\end{equation}
The final assumption concerns the vector field $F(\p).$
\begin{enumerate}
	\item [(H3)] We assume that at $\p^0=(p_1^0,p_2^0)$:
	\begin{equation}
	\label{2.reduce}
	F(\p^0)=0\quad \mathrm{and}\quad \mathrm{rank}(\mathcal{M}(\p^0))=1.
	\end{equation}
\end{enumerate}

Next, let us calculate $\mathcal{M}(\p^0)$. Particularly, we shall show that it admits a zero eigenvalue. To compute the matrix $\mathcal{M}(\p^0)$, we have to derive the derivatives of $\hat\xi$. It is easy to see that $\hat\xi(\p)$ is $C^1$ in $\p$ and from \eqref{2.limit} we can calculate:
\begin{equation}
\label{2ij}
\begin{aligned}
\nabla_{p_j}\hat\xi_i=~&2\sum_{l=1}^2G_D(p_i,p_l)\hat\xi_l\nabla_{p_j}\hat\xi_l+\sum_{l=1}^2\frac{\partial }{\partial p_j}G_D(p_i,p_l)\hat\xi_l^2\\
=~&\begin{cases}
2\sum\limits_{l=1}^2G_D(p_i,p_l)\hat\xi_l\nabla_{p_j}\hat\xi_l+\nabla_{p_j}G_D(p_i,p_j)\hat\xi_j^2,\quad &\mathrm{if}~i\neq j,\\
2\sum\limits_{l=1}^2G_D(p_i,p_l)\hat\xi_l\nabla_{p_j}\hat\xi_l+\sum\limits_{l=1}^2\frac{\partial }{\partial p_j}G_D(p_i,p_l)\hat\xi_l^2,\quad &\mathrm{if}~i=j,
\end{cases}
\end{aligned}
\end{equation}
where we used $\partial_{p_i}G_D(p_i,p_i)=0.$ Therefore, if we denote the matrix
\begin{equation}
\label{2.x}
\nabla\xi=(\nabla_{p_j}\hat\xi_i),
\end{equation}
we have
\begin{equation}
\label{2.x-2}	
\nabla\xi(\p)=(I-2\mathcal{G}_D\mathcal{H})^{-1}(\nabla\mathcal{G}_D)^T\mathcal{H}^2+O(\sum_{j=1}^2|F_j(\p)|),
\end{equation}
where a superscript $T$ denotes the transpose and where $\mathcal{H}$ is given by
\begin{equation}
\label{2.h}
\mathcal{H}(\p)=\left(\hat\xi_i(\p)\delta_{ij}\right).
\end{equation}

Let
\begin{equation}
\label{2.q}
\mathcal{Q}=(q_{ij})=(\nabla_{p_i}\nabla_{p_j}G_D(p_1,p_2)\sum_{l\neq i}\frac{\hat\xi_l^2}{\hat\xi_i^2}\delta_{ij}).
\end{equation}
We can compute $\mathcal{M}(\p^0)$ by using \eqref{2ij},
\begin{equation}
\label{2.m0}
\begin{aligned}
\mathcal{M}(\p^0)=\mathcal{H}^{-1}(\nabla^2\mathcal{G}_D+\mathcal{Q})\mathcal{H}^2+2\mathcal{H}^{-1}\nabla\mathcal{G}_D\mathcal{H}(I-2\mathcal{G}_D\mathcal{H})^{-1}
(\nabla\mathcal{G}_D)^T\mathcal{H}^2,
\end{aligned}
\end{equation}
where $A^T$ means the transpose of $A$.	To simplify our notation, we introduce the following matrices:
\begin{equation}
\label{2.p}
\mathcal{P}=(I-2\mathcal{G}_D\mathcal{H})^{-1}.
\end{equation}	
Using \eqref{2.reduce}, we can further simplify the matrix $\mathcal{M}(\p^0)$ as the following
\begin{equation}
\label{2.ms}	
\mathcal{M}(\p^0)=\left(\begin{matrix}
(\hat\xi_1^0)^{-1}\nabla_{p_1}\nabla_{p_1}G_D(p_1,p_2)
(\hat\xi_2^0)^2 & (\hat\xi_1^0)^{-1}\nabla_{p_2}\nabla_{p_1}G_D(p_1,p_2)
(\hat\xi_2^0)^2\\
\\	
(\hat\xi_2^0)^{-1}\nabla_{p_1}\nabla_{p_2}G_D(p_2,p_1)
(\hat\xi_1^0)^2 & (\hat\xi_2^0)^{-1}\nabla_{p_2}\nabla_{p_2}G_D(p_2,p_1)
(\hat\xi_1^0)^2	
\end{matrix}\right).
\end{equation}
It is easy to see that the summation of both rows is zero, thus $\mathcal{M}(\p^0)$ is singular and admits a zero eigenvalue. While the left non-zero eigenvalue can be represented as follows
\begin{equation}
\label{2.me}	
\lambda_{\mathcal{M}(\p^0)}=(\hat\xi_1^0)^{-1}\nabla_{p_1}\nabla_{p_1}G_D(p_1,p_2)(\hat\xi_2^0)^2+(\hat\xi_2^0)^{-1}\nabla_{p_2}\nabla_{p_2}G_D(p_2,p_1)(\hat\xi_1^0)^2.	
\end{equation}

Our first result is the following:

\begin{theorem}
	\label{th1.exist}	
	Assume that (H1) and (H3) are satisfied. Then for $\e\ll1$ problem \eqref{1.fgm} has a 2-spike solution which concentrates at $p_1^\e,p_2^\e$. In addition,
	\begin{equation*}
	u_\e\sim c_\e\sum_{i=1}^2\hat\xi_i^0w\left(\frac{x-p_i^\e}{\e}\right)
	\quad \mathrm{and}\quad v_\e(p_i^\e)\sim c_\e\sum \hat\xi_i^0,~i=1,2,
	\end{equation*}	
	and $(p_1^\e,p_2^\e)\to(-\frac12,\frac12)$ as $\e\to0.$	
\end{theorem}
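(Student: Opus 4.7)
The plan is to carry out a Lyapunov-Schmidt reduction adapted to the fractional setting. First I would eliminate the inhibitor: since the second equation of \eqref{1.fg} is linear in $v$, it can be solved in closed form using the Green function $G_D$ from \eqref{3.green}, producing
\begin{equation*}
v[u](x) = c_\e \int_{-1}^{1} G_D(x, z)\, u(z)^2\, dz,
\end{equation*}
and inserting this back reduces \eqref{1.fg} to a single nonlocal scalar equation $S_\e[u] = 0$. The approximate solution is the ansatz
\begin{equation*}
U_{\e, \p}(y) = \sum_{i=1}^{2} \hat\xi_i(\p)\, w\!\left(y - p_i/\e\right),
\end{equation*}
where $w$ solves \eqref{1.ground} and the amplitudes $\hat\xi_i(\p)$ are prescribed so that the outer value of $v$ at each spike matches the self-consistent algebraic system \eqref{2.limit}. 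Existence of the base amplitudes at $\p^0$ is precisely hypothesis (H1), and smoothness of the branch $\hat\xi(\p)$ in a neighbourhood of $\p^0$ follows from the implicit function theorem, which is where (H2) enters to guarantee invertibility of the linearized algebraic system.

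Next I would look for a true solution of the form $u = U_{\e, \p} + \phi$ with $\phi$ in the $L^2$-orthogonal complement of the approximate kernel $\mathrm{span}\{\partial_y w(\cdot - p_i/\e) : i = 1, 2\}$. The linearized operator behaves, near each spike, like the nonlocal operator $L_0 := \s + 1 - 2w$, whose $L^2(\R)$-kernel is exactly $\mathrm{span}\{w'\}$ by the nondegeneracy theorem for fractional ground states in the range $s > 1/4$. Combining this nondegeneracy with the invertibility encoded in (H2) and a standard decomposition argument handling the two-spike interactions, one obtains uniform-in-$\e$ invertibility of the full linearized operator on the orthogonal complement. A contraction-mapping argument in a weighted norm adapted to the algebraic decay of $w$ then yields a unique corrector $\phi = \phi_{\e, \p}$, small and quantitatively controlled by the residual of $U_{\e, \p}$, for each $\p$ near $\p^0$.

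The remaining step is the reduced finite-dimensional problem: one must choose $\p = \p^\e$ so that the $L^2$ projections of $S_\e[U_{\e, \p} + \phi_{\e, \p}]$ onto each $\partial_y w(\cdot - p_i/\e)$ both vanish. A direct expansion shows that, up to a common nonzero scalar prefactor, these two projections agree to leading order with the components of the vector field $F(\p)$ in \eqref{2.f}. By (H3), $F(\p^0) = 0$ and $\mathcal{M}(\p^0)$ in \eqref{2.ms} has rank one; its one-dimensional kernel must contain $(1,1)^T$, because the periodic Green function $G_D(x, z)$ depends only on $x - z$, forcing the rows of \eqref{2.ms} to sum to zero. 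Factoring out this translation symmetry (e.g.\ by imposing the gauge $p_1 + p_2 = 0$) and applying the implicit function theorem on the resulting nonsingular quotient produces a unique $\p^\e \to \p^0 = (-1/2, 1/2)$, which together with $U_{\e, \p^\e} + \phi_{\e, \p^\e}$ delivers the claimed asymptotic profile. The main technical obstacle is the uniform invertibility of the linearization: because the fractional operator has only algebraic decay, spike-spike interactions are not exponentially small but only polynomially small in $\e$, so both the residual estimate for $U_{\e, \p}$ and the Lipschitz bounds needed for the contraction must carefully track these algebraically decaying cross terms, which dictates the correct choice of weighted function space.
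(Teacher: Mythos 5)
Your plan matches the paper's overall Lyapunov--Schmidt strategy: eliminate $v$ through the Green function $G_D$, use the ansatz $\sum_i \hat\xi_i(\p)\,w((x-p_i)/\e)$ with amplitudes fixed by (H1)--(H2) and the implicit function theorem, establish uniform invertibility of the projected linearization from the fractional nondegeneracy of $w$ together with (H2), solve for a corrector $\phi_{\e,\p}$ by contraction, and reduce to a two-dimensional bifurcation equation whose leading part is the vector field $F(\p)$ of \eqref{2.f}. Your handling of the rank-one degeneracy of $\mathcal{M}(\p^0)$ --- gauge-fixing $p_1+p_2=0$ and then applying the implicit function theorem --- is a legitimate variant of the paper's argument, which instead uses the reflection symmetry at $\p^0=(-\tfrac12,\tfrac12)$ to collapse $W_{\e,1}=0$ and $W_{\e,2}=0$ into a single scalar equation and then applies Brouwer's fixed point theorem; both hinge on the same observation that translation invariance forces the rows of \eqref{2.ms} to sum to zero.

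There is, however, a genuine missing idea in the reduced step, and the ``weighted norm adapted to the algebraic decay of $w$'' does not repair it. The contraction mapping only yields $\|\phi_{\e,\p}\|_{H^{2s}}=O(\e^{2s-1})$ (Lemma \ref{le5.2}), and the outer inhibitor correction $T[w_{\e,\p}](p_i+\e z)-T[w_{\e,\p}](p_i)$ contains a piece $P_i(z)$ of the same size $O(\e^{2s-1})$. If these are inserted naively into the projection $W_{\e,i}=\e^{-1}\int S[w_{\e,\p}+\phi_{\e,\p}]\,\partial_y w_i\,dy$, the $\e^{-1}$ prefactor produces contributions of size $O(\e^{2s-2})$, which diverge as $\e\to 0$ since $s<1$, and in any case swamp the target term $-\tfrac13\bigl(\int w^3\bigr)F_i(\p)$. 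What saves the argument is a \emph{parity} refinement: the dominant part of both $P_i(z)$ and of the corrector is even in the local variable $z=(x-p_i)/\e$, so it integrates to zero against the odd function $\partial_y w_i$; only a strictly smaller $O(\e)$ odd remainder survives. This is precisely the content of Lemma \ref{le5.3}, which decomposes $\phi_{\e,\p}=\phi_{\e,\p,1}+\phi_{\e,\p,2}$ with $\phi_{\e,\p,2}$ even of size $O(\e^{2s-1})$ and $\phi_{\e,\p,1}$ of size $O(\e)$, and it is what makes the estimates \eqref{6.6}--\eqref{6.8} close. Without isolating the even/odd structure, the ``direct expansion'' you invoke does not yield $W_{\e,i}=-\tfrac13\bigl(\int w^3\bigr)F_i(\p)+O(\e^{2s-1})$, and the reduced problem cannot be solved on the ball $B_{\e^{2s-1}}(\p^0)$. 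This is not a choice-of-function-space issue but a structural cancellation that must be exploited explicitly.
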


\noindent{\bf Remark:} In Theorem \ref{th1.exist} the spike height may be the same or different yielding, respectively, \textit{symmetric} and \textit{asymmetric} two-spike solutions. In both cases the spike locations must satisfy $\nabla_{p_1^\varepsilon}G_D(p_1^\varepsilon,p_2^\varepsilon)=0$ and by numerically evaluating the Green's function this implies that $|p_1^\varepsilon-p_2^\varepsilon|=1$. As described in more detail in \S\ref{subsec:example-2} the limiting system \eqref{2.limit} can then be solved explicitly as
	\begin{gather*}
	\xi_1^0=\xi_2^0=\frac{1}{G_D(0,0)+G_D(1,0)}, \quad\text{and}\quad \xi_1^0 = \frac{z_1}{G_D(0,0)},\quad \xi_2^0 = \frac{z_2}{G_D(0,0)},
	\end{gather*}
	for the symmetric and asymmetric cases respectively and where $z_1$ and $z_2$ are defined in terms of $\theta=G_D(1,0)/G_D(0,0)$ in \eqref{eq:z1-z2}.

{Finally, we study the stability of the $2$-spike solution constructed in Theorem \ref{th1.exist}.}

\begin{theorem}
\label{th1.stability}
Assume that $\e\ll1$ and let $(u_\e,v_\e)$ be the solutions constructed in Theorem \ref{th1.exist} and $\mathcal{B}$ be defined in \eqref{2.matrixB}.
\begin{enumerate}
	\item If $\min_{\sigma\in\lambda(\mathcal{B})}\sigma>\frac12$, then there exists $\tau_0$ such that $(u_\e,v_\e)$ is linearly stable for $0\leq\tau<\tau_0$.
	
	\item If $\min_{\sigma\in\lambda(\mathcal{B})}\sigma<\frac12$, then there exists $\tau_0$ such that $(u_\e,v_\e)$ is linearly unstable stable for $0\leq\tau<\tau_0$.
\end{enumerate}
\end{theorem}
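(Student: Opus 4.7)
The approach is the standard two-scale spectral analysis for singularly perturbed spike patterns, now carried out for the fractional Laplacians $\laps$ and $\lapr$. Linearizing \eqref{1.fg} about $(u_\e,v_\e)$ by writing $(u,v)=(u_\e,v_\e)+e^{\lambda t}(\phi_u,\phi_v)$ and discarding nonlinear terms yields
\begin{equation*}
\begin{cases}
\laps\phi_u+\phi_u-2\tfrac{u_\e}{v_\e}\phi_u+\tfrac{u_\e^2}{v_\e^2}\phi_v=\lambda\phi_u, & y\in\dd,\\[2pt]
D\lapr\phi_v+(1+\tau\lambda)\phi_v=2c_\e u_\e\phi_u, & x\in(-1,1),
\end{cases}
\end{equation*}
with periodic boundary conditions. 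Following the classical dichotomy, the plan is to split the spectrum into \emph{large} eigenvalues with $|\lambda|\gtrsim 1$ and \emph{small} eigenvalues with $\lambda\to 0$ as $\e\to 0$, and to treat them by separate asymptotic expansions.

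For the large eigenvalues I would reduce the coupled system to a vectorial nonlocal eigenvalue problem (NLEP). Using the Green's function $G_{D,\tau\lambda}$ of $D\lapr+(1+\tau\lambda)$, which depends analytically on $\tau\lambda$ and reduces to $G_D$ at $\tau=0$, the second equation can be inverted to give, for $x$ near each spike,
\[
\phi_v(p_i^\e)\ \approx\ 2c_\e\sum_{j=1}^2 G_{D,\tau\lambda}(p_i^\e,p_j^\e)\,\hat\xi_j^0\,\frac{\int_\R w\,\phi_u^{(j)}}{\int_\R w^2},
\]
where $\phi_u^{(j)}$ denotes the inner approximation of $\phi_u$ near $p_j^\e$. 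Substituting back into the activator equation and using $u_\e^2/v_\e^2\to w^2$ uniformly on inner scales, the system collapses to
\begin{equation*}
L_0\,\boldsymbol{\phi}-2\,\mathcal{B}(\tau\lambda)\,w^2\,\frac{\int_\R w\,\boldsymbol{\phi}}{\int_\R w^2}=\lambda\,\boldsymbol{\phi},\qquad L_0:=\laps+1-2w,
\end{equation*}
where $\mathcal{B}(\tau\lambda)$ is the $2\times 2$ matrix with entries $G_{D,\tau\lambda}(p_i^0,p_j^0)\hat\xi_j^0$ and $\mathcal{B}(0)=\mathcal{B}$. Simultaneously diagonalizing $\mathcal{B}(\tau\lambda)$ reduces this to two scalar NLEPs
\begin{equation*}
L_0\phi-2\sigma(\tau\lambda)\,w^2\,\frac{\int_\R w\phi}{\int_\R w^2}=\lambda\phi,
\end{equation*}
one per eigenvalue $\sigma(\tau\lambda)$ of $\mathcal{B}(\tau\lambda)$. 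I would then invoke the fractional-NLEP stability criterion from \cite{wei_2019_multi_bump}, whose proof rests on the uniqueness and nondegeneracy of the ground state $w$ of \eqref{1.ground} together with a Pohozaev-type identity, to conclude that at $\tau=0$ the scalar NLEP satisfies $\mathrm{Re}\,\lambda<0$ for every eigenvalue iff $\sigma>1/2$, and otherwise admits a positive real eigenvalue. A winding-number/Rouch\'e argument on a sufficiently large semicircle in the right half plane then transfers this dichotomy to $\tau\in[0,\tau_0)$ for $\tau_0$ small, yielding the two cases of the theorem.

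For the small eigenvalues I would write
\[
\phi_u\ \sim\ \sum_{i=1}^2 a_i\,w'\!\left(\tfrac{x-p_i^\e}{\e}\right)+\e\,\phi_u^{(1)},
\]
solve iteratively for the correction $\phi_u^{(1)}$ using the Fredholm solvability of $L_0$ on $\{w'\}^\perp$, and project the activator equation against the translation mode $w'((\,\cdot-p_i^\e)/\e)$ at each spike. After accounting for the $O(\e)$ contribution of $\phi_v$ through the Green's function, this produces a $2\times 2$ algebraic eigenvalue problem whose principal matrix is, up to a nonzero scalar prefactor, $\mathcal{M}(\p^0)$. Assumption (H3) forces $\mathrm{rank}\,\mathcal{M}(\p^0)=1$: the zero eigenvalue reflects the translational invariance of the periodic configuration, while the nonzero eigenvalue $\lambda_{\mathcal{M}(\p^0)}$ of \eqref{2.me} fixes the sign. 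These small eigenvalues are of order $o(1)$ as $\e\to0$ and depend continuously on $\tau$, so for $\tau\in[0,\tau_0)$ they remain in the closed left half plane and cannot destabilize the pattern.

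The main obstacle will be the rigorous reduction to the vectorial NLEP in the fractional setting: the slow algebraic decay $|y|^{-1-2s_1}$ of $w$ means that the neighboring-spike contribution to $(u_\e^2/v_\e^2)\phi_v$ is only polynomially small in $1/\e$, so the inner-outer matching and the inversion of the inhibitor equation through $G_{D,\tau\lambda}$ must be carried out in a functional-analytic framework that accommodates this tail behaviour. One also has to verify that the $\tau$-dependent NLEP criterion perturbs continuously off the $\tau=0$ case, which is precisely where the smallness requirement on $\tau_0$ enters.
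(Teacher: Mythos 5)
Your treatment of the large eigenvalues essentially mirrors the paper: inverting the inhibitor equation through the $\tau\lambda$-dependent Green's function, collapsing to a vectorial NLEP with matrix $\mathcal{B}(\tau\lambda)$, simultaneous diagonalization, applying the fractional NLEP dichotomy at $\sigma=1/2$ (the paper proves this as Theorem~\ref{th3.stability} in Appendix~\ref{app:nonlocal} rather than citing it, and then extends to small $\tau>0$ via the uniform bound and perturbation argument in Theorem~\ref{th3.2}). The decomposition into large and small eigenvalues, the projection onto translation modes, and the emergence of $\mathcal{M}(\p^0)$ are all as in the paper.

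However, there is a genuine gap in the small-eigenvalue argument. You correctly identify that $\mathcal{M}(\p^0)$ has rank one with a zero eigenvalue from translational invariance and a nonzero eigenvalue $\lambda_{\mathcal{M}(\p^0)}$ given by \eqref{2.me}, and you assert that the small eigenvalues ``remain in the closed left half plane and cannot destabilize,'' but you never establish \emph{why} the nonzero one lands there. The theorem claims a clean dichotomy governed only by $\mathcal{B}$, which is false unless the small eigenvalues are always stable --- and that requires showing $\lambda_{\mathcal{M}(\p^0)}>0$, since the projected eigenvalue relation produces $\lambda_\e=-\e^2 c_2\,\lambda_{\mathcal{M}(\p^0)}(1+o(1))$ with $c_2>0$. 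The sign of $\lambda_{\mathcal{M}(\p^0)}$ is not automatic: by \eqref{2.me} it reduces to the sign of $\partial_x^2 G_D(x,0)$ at the spike separation $|p_1^0-p_2^0|=1$, which is not a structural fact but a property of the fractional periodic Green's function. The paper resolves this by deriving the rapidly converging series \eqref{eq:greens-fast-series} in Appendix~\ref{app:greens-func}, which permits termwise double differentiation, and then numerically verifying $\partial_x^2 G_D(x,0)>0$ at $x=1$. Without this step the dichotomy in the theorem statement is not justified; your continuity-in-$\tau$ remark addresses the passage from $\tau=0$ to small $\tau$ but not the sign at $\tau=0$ itself.
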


\noindent {\bf Remark}: We shall prove Theorem \ref{th1.stability} in section 6. Generally we have to study both large and small eigenvalue problem for the steady state. We shall see that the matrix associated with the small eigenvalues is degenerate: one eigenvalue is zero due to the translational invariance of the spike profiles. On the other hand, the other small eigenvalue is always stable. The stability of the $2$-spike solution therefore depends only on by the matrix $\mathcal{B}$, which naturally appears in the study of large eigenvalue problem.

\noindent {\bf Remark}: To simplify the presentation, in the proof of Theorems \ref{th1.exist} and \ref{th1.stability} we shall only give the details for the case $s_1=s_2=s$. The arguments can be also applied for more general cases where $s_1\in(\tfrac{1}{4},1)$ and $s_2\in(\frac12,1).$

\vspace{0.5cm}
\section{Preliminaries}\label{sec:prelim}

In this section we collect several key preliminary results needed for the existence and stability proofs in \S\ref{sec:proof-existence} and \S\ref{sec:proof-stability} as well as for the formal calculations in \S\ref{sec:formal-results}.

Letting $w$ be the ground state solution satisfying
\begin{equation}
\label{eq:core-problem}
\begin{cases}
(-\Delta)^sw+w-w^2=0,\quad &\mathrm{in}\quad  \R,\\ w(x)\to0~&\mathrm{as}\quad |x|\to \infty,
\end{cases}
\end{equation}
we have the following result \cite{frank_2013_uniqueness} (also see Proposition 4.1 in \cite{wei_2019_multi_bump}  and the references therein)

\begin{proposition}
	\label{pr3.1}
	Equation \eqref{eq:core-problem} admits a positive, radially symmetric solution satisfying the following properties:
	\begin{enumerate}
		\item [(a)] There exists a positive constant {$\mathfrak{b}_s$} depending only on $s$ such that
		$$
		w(x)=\frac{\mathfrak{b}_s}{|x|^{1+2s}}(1+o(1))\quad {\text{as}\quad |x|\rightarrow\infty.}
		$$
		Moreover $w'(x)<0$ for $x>0$ and
		$$w'(x)=-\frac{(1+2s)\mathfrak{b}_s}{x^{2+2s}}(1+o(1))\quad\text{as }\quad x\to\infty.$$ 	
		
		\item [(b)] Let {$L_0=(-\Delta)^s+1-2w$ be the linearized operator. Then} we have
		\begin{equation*}
		\mathrm{Ker}(L_0)=\mathrm{span}\left\{\frac{\partial w}{\partial x}\right\}.
		\end{equation*}

        \item [(c)] Considering the following eigenvalue problem
        $$(-\Delta)^s\phi+\phi-2w\phi+\alpha\phi=0.$$
        There is an unique positive eigenvalue  $\alpha>0.$
	\end{enumerate}
\end{proposition}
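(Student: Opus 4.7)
The plan is to establish parts (a)--(c) by combining the nondegeneracy theorem of Frank--Lenzmann--Silvestre with short arguments using the Green's kernel of $(-\Delta)^s+1$ and a variational/Birman--Schwinger analysis of $L_0$. For part (a), I would invert the linear part of \eqref{eq:core-problem} to write $w=\mathcal{K}_s\ast w^2$, where $\mathcal{K}_s$ is the Green's kernel of $(-\Delta)^s+1$ on $\mathbb{R}$. This kernel admits the sharp far-field expansion $\mathcal{K}_s(x)=c_s|x|^{-(1+2s)}(1+o(1))$ with an explicit $c_s>0$ computable from the Fourier symbol $(1+|\xi|^{2s})^{-1}$. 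Splitting the convolution into $|y|\leq|x|/2$ versus $|y|>|x|/2$ and exploiting the integrability of $w^2$ yields
\[
w(x)=|x|^{-(1+2s)}\Bigl(c_s\int_{\mathbb{R}}w^2\,dy+o(1)\Bigr),
\]
whence $\mathfrak{b}_s=c_s\int_{\mathbb{R}}w^2$. Differentiating the convolution produces the companion asymptotic for $w'(x)$ with prefactor $-(1+2s)\mathfrak{b}_s$, and the sign $w'(x)<0$ for $x>0$ follows from the radial symmetry of $w$ together with the moving planes method for the fractional Laplacian.

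For part (b), the inclusion $\mathrm{span}\{w'\}\subset\mathrm{Ker}(L_0)$ is obtained by differentiating \eqref{eq:core-problem} in $x$. The converse is the nondegeneracy theorem of Frank--Lenzmann--Silvestre \cite{frank_2013_uniqueness}, which I would invoke directly; their proof relies on the Caffarelli--Silvestre harmonic extension together with a Pohozaev-type identity that rules out additional kernel elements, and reproducing it here would add little. This is the main obstacle in the sense that (b) is the only assertion whose full proof is genuinely deep; since it is already in the literature the task reduces to quoting it in the correct form.

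For part (c), testing the quadratic form of $L_0$ against $w$ itself and using \eqref{eq:core-problem} to rewrite $L_0 w=-w^2$ gives
\[
\langle L_0 w,w\rangle=-\int_{\mathbb{R}}w^3\,dx<0,
\]
so the bottom of the spectrum of $L_0$ lies strictly below zero. Since $L_0$ is a relatively compact perturbation of $(-\Delta)^s+1$, whose essential spectrum is $[1,\infty)$, this negative part is discrete and yields a positive eigenvalue $\alpha$ as in the statement. Uniqueness and simplicity then follow by reformulating the problem as a Birman--Schwinger equation: setting $\psi=\sqrt{w}\phi$, the eigenvalue condition becomes $T_\alpha\psi=\psi$ with $T_\alpha:=2\sqrt{w}((-\Delta)^s+1+\alpha)^{-1}\sqrt{w}$. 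The operator $T_\alpha$ is compact and positivity-improving (the resolvent has strictly positive kernel and $w>0$), so Krein--Rutman delivers a simple largest eigenvalue $\mu(\alpha)$ with positive eigenfunction, and strict monotonicity of $\mu(\alpha)$ in $\alpha$ (which is read off from the resolvent identity) forces $\mu(\alpha)=1$ to have at most one positive root.
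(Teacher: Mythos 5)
The paper does not actually prove Proposition \ref{pr3.1}: it quotes it from \cite{frank_2013_uniqueness} (see also Proposition 4.1 of \cite{wei_2019_multi_bump}), so there is no in-paper argument to compare against. Your part (a) --- inverting the linear part to get $w=\mathcal{K}_s\ast w^2$, using the far-field expansion of the Bessel-type kernel $\mathcal{K}_s$, and splitting the convolution (the tail over $|y|>|x|/2$ is bounded by $\sup_{|y|>|x|/2}w(y)\cdot\int\mathcal{K}_s(x-y)w(y)\,dy=o(w(x))$, so no separate bootstrap is even needed) --- is the standard and correct route, and part (b) correctly defers to the Frank--Lenzmann--Silvestre nondegeneracy theorem, exactly as the paper does.

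Part (c), however, has a genuine gap in the uniqueness step. The Birman--Schwinger correspondence identifies an eigenvalue $-\alpha<0$ of $L_0$ with the condition $1\in\sigma(T_\alpha)$, \emph{not} with the condition $\mu_{\max}(T_\alpha)=1$. Krein--Rutman together with monotonicity controls only the principal branch $\alpha\mapsto\mu_1(\alpha)$: it shows that the Perron eigenvalue crosses $1$ at most once. It does not exclude that a lower branch $\mu_2(\alpha)$ --- whose eigenfunctions change sign, so Krein--Rutman says nothing about them --- also equals $1$ at some other $\alpha>0$, which would produce a second admissible $\alpha$. Ruling this out is equivalent to the Morse-index-one property of $w$ (equivalently $\mu_2(0)\le 1$), and this cannot follow from positivity of the resolvent kernel alone: for the family $(-\Delta)^s+1-2\lambda w$ the number of negative eigenvalues grows without bound as $\lambda\to\infty$, so one must use something specific to $\lambda=1$, namely the variational characterization of $w$ as an optimizer of the fractional Gagliardo--Nirenberg/Weinstein functional, which forces $\langle L_0\phi,\phi\rangle\ge0$ on a subspace of codimension one. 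That is precisely the input supplied by \cite{frank_2013_uniqueness}; you should either quote their Morse-index statement or reproduce the variational argument. The remainder of your part (c) --- existence via $\langle L_0w,w\rangle=-\int_{\R}w^3\,dx<0$ together with the essential spectrum $[1,\infty)$, and simplicity of the lowest eigenvalue via the positivity-improving semigroup --- is correct.
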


Next we consider the stability of a system of nonlocal eigenvalue problems (NLEPs). We first establish the following result which we prove in Appendix \ref{app:nonlocal}.

\begin{theorem}
	\label{th3.stability}	
	Consider the following nonlocal eigenvalue problem
	\begin{equation}
	\label{3.2}
	(-\Delta)^s\phi+\phi-2w\phi+\gamma\frac{\int_{\R}w\phi dx}{\int_{\R}w^2dx}w^2+\alpha\phi=0.
	\end{equation}	
	\begin{enumerate}
		\item [(1)] If $\gamma<1,$ then there is a eigenvalue $\alpha$ to \eqref{3.2} such that $\Re(\alpha)>0.$
		
		\item [(2)] If $\gamma>1$ and $s>\frac14$, then for any nonzero eigenvalue $\alpha$ of \eqref{3.2}, we have
		$$\Re(\alpha)\leq-c_0<0.$$
		
		\item [(3)] If $\gamma\neq1$ and $\alpha=0$, then $\phi=c_0\partial_xw$ for some constant $c_0$.
	\end{enumerate}
\end{theorem}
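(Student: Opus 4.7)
Write $L_0 := (-\Delta)^s + 1 - 2w$ and recall the key identity $L_0 w = -w^2$ coming directly from \eqref{eq:core-problem}. For \emph{Part (3)}, set $\alpha=0$ in \eqref{3.2} and use the identity to rewrite the equation as $L_0\bigl(\phi - \gamma\tfrac{\int w\phi}{\int w^2}w\bigr) = 0$. Proposition \ref{pr3.1}(b) then gives $\phi = \gamma\tfrac{\int w\phi}{\int w^2}w + c_0\,\partial_x w$; pairing with $w$ kills the $\partial_x w$ term by parity and yields $(1-\gamma)\int w\phi = 0$, so $\int w\phi = 0$ and $\phi = c_0\,\partial_x w$ since $\gamma\neq 1$.

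For \emph{Parts (1) and (2)}, the standard reduction is to take $-\alpha\notin\sigma(L_0)$ and solve \eqref{3.2} as $\phi = -\gamma(\int w\phi/\int w^2)(L_0+\alpha)^{-1}w^2$; pairing with $w$ gives the scalar characteristic equation
\begin{equation*}
F(\alpha) := \tfrac{1}{\gamma} + \tfrac{\int_{\R} w\,(L_0+\alpha)^{-1}w^2\,dx}{\int_{\R} w^2\,dx} = 0,
\end{equation*}
with $F(0) = 1/\gamma - 1$ by the identity above. A parity argument reduces the problem to the even subspace, where Proposition \ref{pr3.1}(c) furnishes a unique simple negative eigenvalue $\lambda_0<0$ of $L_0$ with positive eigenfunction $\phi_0$ (the rest of the spectrum being strictly positive).

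\emph{Part (1)} ($\gamma<1$) is then a direct intermediate value argument on the real axis: $F(0) = 1/\gamma - 1 > 0$, and the expansion near the simple pole $\alpha = -\lambda_0 > 0$ has residue $(\int w\phi_0)(\int w^2\phi_0)/[(\lambda_0+\alpha)\int w^2]$, strictly positive by positivity of $w,w^2,\phi_0$, so $F\to -\infty$ as $\alpha\to -\lambda_0^-$. The intermediate value theorem on $(0,-\lambda_0)$ produces an unstable real eigenvalue. (The case $\gamma\leq 0$ is handled by the analogous argument on $(-\lambda_0,\infty)$.)

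\emph{Part (2)} is the main obstacle. For $\gamma>1$, suppose $\alpha = \alpha_R + i\alpha_I$ with $\alpha_R\geq 0$ is a nonzero eigenvalue with eigenfunction $\phi = a+ib$; set $A = \int wa$, $B = \int wb$. Pairing \eqref{3.2} successively with $\bar\phi$ and with $w$, then using $L_0 w = -w^2$ to eliminate $\int w^2\phi$ between the two identities, one arrives at the energy identity
\begin{equation*}
\langle L_0 a,a\rangle + \langle L_0 b,b\rangle + \alpha_R\Bigl(\|a\|^2 + \|b\|^2 + \gamma\tfrac{A^2+B^2}{\|w\|_{L^2}^2}\Bigr) = -\gamma^2\,\tfrac{(A^2+B^2)\int w^3}{\|w\|_{L^2}^4}.
\end{equation*}
The bracketed factor is nonnegative when $\alpha_R\geq 0$ and $\gamma>1$, so the contradiction must come from the quadratic-form positivity
\begin{equation*}
\langle L_0\psi,\psi\rangle + \gamma^2\,\tfrac{(\int w\psi)^2 \int w^3}{\|w\|_{L^2}^4} > 0 \quad \text{for every even }\psi\notin\mathrm{span}\{\partial_x w\},\ \gamma^2>1,
\end{equation*}
which is the heart of the argument. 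I would prove it by decomposing $L_0$ on the even subspace into its single negative mode $\phi_0$ and positive part, combined with $\langle L_0 w,w\rangle = -\int w^3$ (so the test direction $\psi=w$ already realizes the critical case $\gamma=1$); the hypothesis $s>1/4$ enters through the resolvent and tail estimates needed to make the spectral decomposition and the quantity $\int w L_0^{-1} w$ meaningful in the fractional setting. The uniform bound $\Re\alpha \leq -c_0 < 0$ then follows from continuity of $F$ up to the imaginary axis together with its decay $F(\alpha)\to 1/\gamma>0$ at infinity, ruling out approach sequences in $\{\Re\alpha\geq 0\}$.
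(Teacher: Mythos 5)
Your treatments of Parts (1) and (3) are correct and essentially coincide with Appendix A of the paper: for (3) you absorb the nonlocal term into the kernel of $L_0$ using $L_0 w=-w^2$ and pair with $w$ to kill the $w$-component; for (1) you reduce to a scalar characteristic function, which is the paper's $\mathcal{F}(\alpha)$ up to normalization, and run an intermediate value argument between $\mathcal{F}(0)>0$ and the divergence at the pole corresponding to the unique negative eigenvalue of $L_0$.

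The gap is in Part (2). Your energy identity is correct and, after using the $w$-pairing identity to eliminate $\int w^2\phi$, it is equivalent to the paper's equation \eqref{a.15}. But the quadratic-form positivity you rest the contradiction on,
\begin{equation*}
Q(\psi):=\langle L_0\psi,\psi\rangle+\gamma^2\,\frac{\bigl(\int_{\R}w\psi\,dx\bigr)^2\int_{\R}w^3\,dx}{\bigl(\int_{\R}w^2\,dx\bigr)^2}>0\quad\text{for all even }\psi\neq 0,\ \gamma^2>1,
\end{equation*}
is false. Writing $\psi=cw+\psi^\perp$ with $\psi^\perp\perp w$ even gives
\begin{equation*}
Q(\psi)=(\gamma^2-1)\,c^2\int_{\R}w^3\,dx-2c\int_{\R}w^2\psi^\perp\,dx+\langle L_0\psi^\perp,\psi^\perp\rangle,
\end{equation*}
and minimizing over $c$ yields $Q_{\min}=\langle L_0\psi^\perp,\psi^\perp\rangle-\frac{\bigl(\int w^2\psi^\perp\bigr)^2}{(\gamma^2-1)\int w^3}$. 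Choosing any even $\psi^\perp\perp w$ with $\int w^2\psi^\perp\neq 0$ (such functions exist since $w^2$ is not proportional to $w$), this minimum diverges to $-\infty$ as $\gamma\downarrow 1$, so for $\gamma$ near $1$ the form $Q$ has negative directions; the direction $\psi=w$ is not the critical one. What you are missing is that $a$ and $b$ are not free: the identities \eqref{a.13} pin down $\int w^2 a^\perp$ and $\int w^2 b^\perp$ in terms of $c_R,c_I,\alpha$, and once that constraint is substituted the dangerous cross term $-2c\int w^2\psi^\perp$ cancels, leaving $\langle L_0 a^\perp,a^\perp\rangle+(\gamma-1)^2c_R^2\int w^3$ plus nonnegative $\alpha_R$-multiples. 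The paper achieves this cleanly through the self-adjoint operator $L_1$ of Lemma \ref{lea.2}: $L_1$ is $\gamma$-independent and coercive on $X_1^\perp$, which is where $s>\frac14$ enters (via $\int_{\R}w\,L_0^{-1}w\,dx=\frac{1-4s}{4s}\int_{\R}w^2\,dx<0$), while the entire $\gamma$-dependence is carried by the separately nonnegative coefficient $(\gamma-1)^2$. Replace the unconstrained $Q$-positivity with the $L_1$-coercivity of Lemma \ref{lea.2} and trace the remaining $\gamma$-dependent terms as in \eqref{a.15}.
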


In our application to the case when $\tau>0,$ we have to deal with the situation when the coefficient $\gamma$ is a function of $\tau\alpha$. Letting $\gamma=\gamma(\tau\alpha)$ be a complex function of $\tau\alpha$ let us suppose that
\begin{equation}
\label{3.3}
\gamma(0)\in\mathbb{R},\quad |\gamma(\tau\alpha)|\leq C~\ \mathrm{for}~\ \alpha_R\geq0,~\tau\geq0,
\end{equation}
where $C$ is a generic constant independent of $\tau,\alpha$. Then we have the following result. 

\begin{theorem}
	\label{th3.2}
	Consider the following nonlocal eigenvalue problem
	\begin{equation}
	\label{3.4}
	(-\Delta)^s\phi+\phi-2w\phi+\gamma(\tau\alpha)\frac{\int_{\R}w\phi dx}{\int_{\R}w^2dx}w^2+\alpha\phi=0,
	\end{equation}	
	where $\gamma(\tau\alpha)$ satisfies \eqref{3.3}. Then there is a small number $\tau_0>0$ such that for $\tau<\tau_0$,
	\begin{enumerate}
		\item [(1)] if $\gamma(0)<1,$ then there is a positive eigenvalue to \eqref{3.4};
		
		\item [(2)] if $\gamma(0)>1$ and $s>\frac14$, then for any nonzero eigenvalue $\alpha$ of \eqref{3.4}, we have
		$$\Re(\alpha)\leq-c_0<0.$$
	\end{enumerate}	
\end{theorem}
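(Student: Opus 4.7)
The plan is to treat Theorem~\ref{th3.2} as a small-$\tau$ perturbation of Theorem~\ref{th3.stability}: at $\tau=0$ the coefficient $\gamma(\tau\alpha)$ collapses to the real constant $\gamma(0)$ and \eqref{3.4} reduces to \eqref{3.2}. The only new input beyond Theorem~\ref{th3.stability} is the uniform bound $|\gamma(\tau\alpha)|\leq C$ from \eqref{3.3}, which, combined with the kernel description of $L_0:=(-\Delta)^s+1-2w$ and the algebraic decay of $w$ from Proposition~\ref{pr3.1}, suffices to propagate both conclusions.

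For part~(1), Theorem~\ref{th3.stability}(1) with $\gamma=\gamma(0)<1$ supplies a real eigenpair $(\phi_0,\alpha_0)$ of \eqref{3.2} with $\alpha_0>0$; this eigenvalue is isolated and simple in the right half-plane by Proposition~\ref{pr3.1}(c) together with $\gamma(0)\neq 1$. I would fix a scalar normalization such as $\int_\R w\phi\,dx=\int_\R w\phi_0\,dx$ and apply the implicit function theorem to
\begin{equation*}
\mathcal{F}(\phi,\alpha;\tau):=L_0\phi+\gamma(\tau\alpha)\frac{\int_\R w\phi\,dx}{\int_\R w^2\,dx}w^2+\alpha\phi
\end{equation*}
at $(\phi_0,\alpha_0,0)$. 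The Fr\'echet derivative in $(\phi,\alpha)$ is a Fredholm operator of index zero with trivial kernel, producing a continuous branch $(\phi_\tau,\alpha_\tau)$ on $[0,\tau_0)$ along which $\Re\alpha_\tau>0$.

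Part~(2) I would attack by contradiction: pick $\tau_n\downarrow 0$ and eigenpairs $(\phi_n,\alpha_n)$ of \eqref{3.4} with $\alpha_n\neq 0$, $\Re\alpha_n\geq -c_0/2$, and $\|\phi_n\|_{H^s(\R)}=1$. Dividing the equation by $\alpha_n$ and using $|\gamma(\tau_n\alpha_n)|\leq C$, $|\int_\R w\phi_n\,dx|\leq\|w\|_{L^2}$, and uniform $H^{-s}$-control on $L_0\phi_n$ rules out $|\alpha_n|\to\infty$. After extraction, $\alpha_n\to\alpha^*$ with $\Re\alpha^*\geq -c_0/2$, $\gamma(\tau_n\alpha_n)\to\gamma(0)$, and $\phi_n\rightharpoonup\phi^*$ weakly in $H^s$; a tightness step based on the algebraic decay of $w,w^2$ from Proposition~\ref{pr3.1}(a) (outside a large ball \eqref{3.4} reduces to $[(-\Delta)^s+1+\alpha_n]\phi_n\approx 0$, which is invertible for $\Re\alpha_n>-1$) prevents mass escape, so $\phi^*\not\equiv 0$ is an eigenfunction of \eqref{3.2} with $\gamma=\gamma(0)>1$. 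If $\alpha^*\neq 0$ then Theorem~\ref{th3.stability}(2) yields $\Re\alpha^*\leq -c_0$, contradicting $\Re\alpha^*\geq -c_0/2$. If $\alpha^*=0$ then Theorem~\ref{th3.stability}(3) forces $\phi^*\in\mathrm{span}\{\partial_x w\}$; but pairing \eqref{3.4} with $\partial_x w$ and using $L_0\partial_x w=0$ together with $\int_\R w^2\partial_x w\,dx=0$ gives $\alpha_n\int_\R \phi_n\partial_x w\,dx=0$, so every $\phi_n$ (and hence $\phi^*$) is orthogonal to $\partial_x w$, making $\phi^*\equiv 0$ and contradicting tightness.

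The main obstacle is the no-mass-escape/tightness step in part~(2). Because $(-\Delta)^s$ is nonlocal, smooth cutoffs do not commute with the operator and the standard concentration-compactness machinery must be adapted; I would handle this via commutator estimates between $(-\Delta)^s$ and a cutoff, or via a weighted-space reformulation, transferring the decay of the forcing $w,w^2$ into uniform $L^2$-decay of $\phi_n$. Everything else—the implicit function theorem in part~(1), the $|\alpha_n|\to\infty$ dichotomy, and the orthogonality argument that eliminates $\alpha^*=0$—should follow directly from Proposition~\ref{pr3.1} and Theorem~\ref{th3.stability}.
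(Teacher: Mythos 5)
Both you and the paper treat this as a perturbation from $\tau=0$ anchored on Theorem~\ref{th3.stability}, so the overall strategy is the same, but the one nontrivial technical step is handled quite differently and your version of that step has a gap. The paper's entire proof consists of establishing the \emph{a priori} bound that any eigenvalue of \eqref{3.4} with $\Re(\alpha)\geq 0$ satisfies $|\alpha|\leq C$ uniformly in $\tau\in(0,1)$, which it does by pairing the equation with $\bar\phi$ to get identity \eqref{3.5}: the imaginary part bounds $|\alpha_I|$ because the nonlocal term is controlled by $|\gamma(\tau\alpha)|\,\|\phi\|_{L^2}^2\leq C\|\phi\|_{L^2}^2$, and the real part bounds $\alpha_R$ from above because the quadratic form on the left of \eqref{3.5} is bounded below by $-2\|w\|_\infty\|\phi\|_{L^2}^2$. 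This argument is normalization-free and uses nothing beyond \eqref{3.3}. Your attempt to exclude $|\alpha_n|\to\infty$ by dividing by $\alpha_n$ and invoking ``uniform $H^{-s}$-control on $L_0\phi_n$'' does not close: with $\|\phi_n\|_{H^s}=1$ it yields $\|\phi_n\|_{H^{-s}}\to 0$ and hence $\|\phi_n\|_{L^2}\to 0$ by interpolation, but that is compatible with $\|\phi_n\|_{H^s}=1$ unless you have an additional elliptic estimate on $\|\phi_n\|_{H^{2s}}$, which you cannot obtain because $\alpha_n\phi_n$ is not controlled in $L^2$. You should replace this step by the paper's quadratic-form computation, which gives the bound directly.

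Beyond that one step, your proposal actually fills in details the paper leaves implicit (``a perturbation argument gives the desired conclusion''): the implicit function theorem continuation of the simple unstable eigenvalue for part~(1), the weak-limit-plus-tightness argument for part~(2), and in particular the observation that pairing \eqref{3.4} with $\partial_x w$ (using $L_0\partial_x w=0$ and $\int_{\R}w^2\partial_x w\,dx=0$) forces $\int_{\R}\phi_n\partial_x w\,dx=0$ whenever $\alpha_n\neq 0$, which kills the $\alpha^*=0$ branch cleanly. These additions are sound and go somewhat further than the paper's terse statement. The remaining issue you flag yourself, namely the no-mass-escape step for the nonlocal operator, is a real obstacle in the contradiction route; the paper sidesteps it entirely because once $|\alpha|$ is bounded on the closed half-plane a standard resolvent-continuity argument (in the spirit of the Dancer reference already cited in \S\ref{sec:proof-stability}) transfers the spectral conclusions of Theorem~\ref{th3.stability} to small $\tau$ without a compactness argument on eigenfunctions. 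Adopting the energy identity for the bound and the resolvent-perturbation viewpoint for the conclusion would bring your proof in line and remove both acknowledged difficulties.
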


\begin{proof}
	The above Theorem follows from Theorem \ref{th3.stability} by a perturbation argument. To make sure that the perturbation works, we have to show that if $\alpha_R\geq0$ and $0<\tau<1$, then $|\alpha|\leq C$, where $C$ is a generic constant (independent of $\tau$). In fact, multiplying \eqref{3.4} by $\bar\phi$ - the conjugate of $\phi$ - and integrating by parts, we obtain that
	\begin{equation}
	\label{3.5}
	\int_{\R}(|(-\Delta)^{\frac{s}{2}}\phi|^2+|\phi|^2-2w|\phi|^2)dx=-\alpha\int_{\R}|\phi|^2-\gamma(\tau\alpha)\frac{\int_{\R}w\phi dx}{\int_{\R}w^2dx}\int_{\R}w^2\overline{\phi}dx.
	\end{equation}	
	From the imaginary part of \eqref{3.5}, we obtain that
	$$|\alpha_{I}|\leq C_1|\gamma(\tau\alpha)|,$$	
	where $\alpha=\alpha_R+\sqrt{-1}\alpha_I$ and $C_1$ is a positive constant (independent of $\tau$). By assumption \eqref{3.3}, $|\gamma(\tau\alpha)|\leq C$ and so $|\alpha_I|\leq C$. Taking the real part of \eqref{3.5} and we get that
	\begin{equation}
	\label{3.6}
	\mbox{left hand side of}~\eqref{3.5}~\geq C\int_{\R}|\phi|^2dx\quad \mbox{for some}~C\in\R,
	\end{equation}
	then we obtain that $\alpha_R\leq C_2$ where $C_2$ is a positive constant (independent of $\tau>0$). Therefore, $|\alpha|$ is uniformly bounded and hence a perturbation argument gives the desired conclusion.
\end{proof}

We now consider the following system of linear operators
\begin{equation}
\label{3.7}
L\Phi:=(-\Delta)^s\Phi+\Phi-2w\Phi+2\mathcal{B}\left(\int_{\R}w {\Phi} dx\right)\left(\int_{\R}w^2dx\right)^{-1}w^2,
\end{equation}
where $\mathcal{B}$ is given by \eqref{2.matrixB} and {$\Phi:=(\phi_1,\phi_2)^T\in (H^{2s}(\R))^2$.} 
The conjugate operator of $L$ under the scalar product in $L^2(\R)$ is
\begin{equation}
\label{3.9}
L^*\Psi:=(-\Delta)^s\Psi+\Psi-2w\Psi+2\mathcal{B}^T\left(\int_{\R}w^2 {\Psi} dx\right)\left(\int_{\R}w^2dx\right)^{-1}w,
\end{equation}
where {$\Psi:=(\psi_1,\psi_2)^T\in (H^{2s}(\R))^2$.} {We then have the following result.}
\begin{lemma}
	\label{le3.1}
	Assume that $(H2)$ holds. Then
	\begin{equation}\label{eq:L_L*_kernel}
	\mathrm{Ker}(L)=\mathrm{Ker}(L^*)=X_0\oplus X_0,
	\end{equation}
	where $X_0=\mathrm{Span}\{w'(x)\}$.
\end{lemma}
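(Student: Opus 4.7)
The plan is to reduce each of the two vector kernel equations to the scalar linearized operator $L_0=(-\Delta)^s+1-2w$ whose kernel is understood by Proposition \ref{pr3.1}(b), and then use (H2) to force the nonlocal coupling constants to vanish.

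\textbf{Kernel of $L$.} Suppose $L\Phi=0$ with $\Phi=(\phi_1,\phi_2)^T$. Writing the system componentwise, and setting
$c_j := \bigl(\int_{\R}w^2\bigr)^{-1}\int_{\R}w\phi_j\,dx$ and $d_i:=(\mathcal{B}c)_i$, the $i$-th equation becomes
\begin{equation*}
L_0\phi_i = -2d_i\,w^2.
\end{equation*}
The key identity is obtained by multiplying the ground-state equation $(-\Delta)^sw+w-w^2=0$ through and rearranging: $L_0w=(-\Delta)^sw+w-2w^2=-w^2$, equivalently $L_0(-w)=w^2$. Since $\int_{\R}w^2\,w'\,dx=0$, the right-hand side $-2d_iw^2$ lies in $(\ker L_0)^\perp=\mathrm{Ran}(L_0)$, so we may invert modulo the kernel to get
\begin{equation*}
\phi_i = 2 d_i\,w + \alpha_i w',\qquad \alpha_i\in\R.
\end{equation*}
Plugging back into the definition of $c_i$ and using $\int_{\R}ww'\,dx=0$ yields $c_i=2d_i$, i.e.\ $(I-2\mathcal{B})c=0$. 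By (H2), $\tfrac12\notin\lambda(\mathcal{B})$, hence $c=0$, so $d_i=0$ and $\phi_i=\alpha_i w'\in X_0$. This gives $\ker(L)\subseteq X_0\oplus X_0$; the reverse inclusion is immediate since $L_0w'=0$ and $\int_{\R}w\,w'\,dx=0$ makes the nonlocal term vanish.

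\textbf{Kernel of $L^*$.} The argument is parallel but with roles of $w$ and $w^2$ swapped. If $L^*\Psi=0$, setting $e_j:=\bigl(\int_{\R}w^2\bigr)^{-1}\int_{\R}w^2\psi_j\,dx$ and $f_i:=(\mathcal{B}^Te)_i$, each component satisfies
\begin{equation*}
L_0\psi_i=-2f_i\,w.
\end{equation*}
Again $\int_{\R}w\,w'\,dx=0$ guarantees solvability, so
$\psi_i=-2f_i\,L_0^{-1}(w)+\beta_i w'$ modulo $\ker L_0$. To extract $e_i$ without computing $L_0^{-1}(w)$ explicitly, I use self-adjointness and the identity $L_0(-w)=w^2$:
\begin{equation*}
\int_{\R}w^2\,L_0^{-1}(w)\,dx = \int_{\R} L_0(-w)\,L_0^{-1}(w)\,dx=-\int_{\R}w\cdot w\,dx=-\int_{\R}w^2\,dx.
\end{equation*}
Combined with $\int_{\R}w^2\,w'\,dx=0$, this yields $e_i=2f_i$, so $(I-2\mathcal{B}^T)e=0$. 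Since $\lambda(\mathcal{B}^T)=\lambda(\mathcal{B})$, (H2) forces $e=0$, hence $f_i=0$ and $\psi_i=\beta_i w'$. The reverse inclusion is as before.

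\textbf{Expected difficulty.} There is no serious obstacle: the entire proof rests on the fortunate algebraic coincidence $L_0w=-w^2$, which both (i) produces the particular solution $\phi_i=2d_iw$ directly and (ii) lets us bypass any explicit knowledge of $L_0^{-1}(w)$ in the adjoint case. The only points that need a moment of care are verifying the solvability condition (orthogonality of the right-hand side to $\ker L_0^*=\ker L_0=\mathrm{Span}\{w'\}$) and invoking (H2) in the form $\tfrac12\notin\lambda(\mathcal{B}^T)$ on the adjoint side, which is automatic from $\lambda(\mathcal{B})=\lambda(\mathcal{B}^T)$.
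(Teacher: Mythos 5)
Your proof is correct, and it takes a genuinely different route from the paper's. The paper first diagonalizes $\mathcal{B}$ via $P^{-1}\mathcal{B}P=\mathcal{J}$ so as to reduce $L\Phi=0$ (and $L^*\Psi=0$) to two scalar NLEPs, then treats $\mathrm{Ker}(L)$ by invoking Theorem~\ref{th3.stability}(3) as a black box (whose appendix proof is, in fact, the same computation you do with $L_0w=-w^2$), and treats $\mathrm{Ker}(L^*)$ by the shortcut of multiplying the $i$-th diagonalized equation by $w$ and integrating, which gives $(1-2\sigma_i)\int_{\R}w^2\tilde\Psi_i=0$ directly from the self-adjointness of $L_0$ and $L_0w=-w^2$, without ever forming $L_0^{-1}w$. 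You instead bypass the diagonalization entirely and work with the full $2\times2$ system, explicitly writing the general solution of each component equation ($\phi_i=2d_iw+\alpha_iw'$ and $\psi_i=-2f_iL_0^{-1}(w)+\beta_iw'$), then evaluating the nonlocal moments to obtain $(I-2\mathcal{B})c=0$ and $(I-2\mathcal{B}^T)e=0$; this is self-contained but requires the extra self-adjointness argument to compute $\int w^2 L_0^{-1}(w)$ on the adjoint side, which the paper's multiply-and-integrate trick avoids. Both approaches rest on the same two pillars — the identity $L_0w=-w^2$ and the hypothesis (H2) — so the choice is mostly a matter of modularity (the paper reuses its scalar NLEP theorem) versus directness (your version needs no diagonalizing change of basis and no appeal to a separate theorem).
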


\begin{proof}
{We first prove $\mathrm{Ker}(L)\subset X_0\oplus X_0$}. Suppose $L\Phi=0$. 
By the fact that $\mathcal{G}_D$ is symmetry and $\mathcal{H}(\p)$ is a diagonal matrix, we could diagonalize $\mathcal{B}$. Let
$$P^{-1}\mathcal{B}P=\mathcal{J},$$
where $P$ is an orthogonal matrix and $\mathcal{J}$ is diagonal form, i.e.,
	\begin{equation*}
	\mathcal{J}=\left(\begin{matrix}
	\sigma_1 & 0\\
	0&\sigma_2
	\end{matrix}\right)
	\end{equation*}
	with suitable real numbers $\sigma_i,~i=1,2.$ Defining $\Phi=P\tilde\Phi$ we have
	\begin{equation}
	\label{3.13}
	(-\Delta)^s\tilde\Phi+\tilde\Phi-2w\tilde\Phi+2\left(\int_{\R}w^2dx\right)^{-1}\left(\int_{\R}w\mathcal{J}\tilde\Phi dx\right)w^2=0.
	\end{equation}
	For $i=1,2$ we look at the $i$-th equation of system \eqref{3.13}:
	\begin{equation}
	\label{3.14}
	(-\Delta)^s\tilde\Phi_i+\tilde\Phi_i-2w\tilde\Phi_i+2\sigma_i\left(\int_{\R}w^2dx\right)^{-1}\left(\int_{\R}w\tilde\Phi_i dx\right)w^2=0.
	\end{equation}
	By Theorem \ref{th3.stability}-(3),  equation \eqref{3.14} {implies }(since by condition $(H2)$ we know that $2\sigma_i\neq 1$)  {$\tilde\Phi_i\in X_0$.} 

{We proceed similarly to prove $\mathrm{Ker}(L^*)\subset X_0\oplus X_0$.} Using $\sigma(\mathcal{B})=\sigma(\mathcal{B}^T)$ the $i$-th equation of the diagonalized system is as follows
	\begin{equation}
	\label{3.17}
	(-\Delta)^s\tilde\Psi_i+\tilde\Psi_i-2w\tilde\Psi_i+2\sigma_i\left(\int_{\R}w^2dx\right)^{-1}\left(\int_{\R}w^2\tilde\Psi_i dx\right)w=0.
	\end{equation}
	Multiplying the above equation by $w$ and integrating over the real line, we obtain
	\begin{equation}
	\label{3.18}
	(1-2\sigma_i)\int_{\R}w^2\tilde\Psi_i=0,
	\end{equation}
	which together with the fact that $2\sigma_i\neq1$ implies that
	\begin{equation*}
	\int_{\R}w^2\tilde\Psi_i=0,\quad i=1,2.
	\end{equation*}
	Thus all the nonlocal terms vanish and we have $L_0\tilde\Psi_i=0$ for $i=1,2$, {which in turn implies that $\Psi_i\in X_0$ for $i=1,2.$} On the other hand, it is obvious that $X_0\oplus X_0\subset\mathrm{Ker}(L)$ and $X_0\oplus X_0\subset\mathrm{Ker}(L^*)$. Therefore, we conclude that
\eqref{eq:L_L*_kernel} holds.
\end{proof}

\begin{lemma}
	\label{le3.2}
	The operator $L: (H^{2s}(\R))^2\to (L^2(\R))^2$ is invertible if it is restricted as follows
	$$L:(X_0\oplus  X_0)^\perp\cap (H^{2s}(\R))^2\to (X_0\oplus X_0)^\perp\cap (L^2(\R))^2.$$
	Moreover, $L^{-1}$ is bounded.
\end{lemma}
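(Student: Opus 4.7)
The plan is to identify $L$ as a Fredholm operator of index zero from $(H^{2s}(\R))^2$ to $(L^2(\R))^2$ and then invoke the Fredholm alternative together with Lemma \ref{le3.1}. First I would decompose $L = L_{\mathrm{free}} + K$, where $L_{\mathrm{free}}\Phi := (-\Delta)^s\Phi + \Phi$ acts componentwise and
\[
K\Phi := -2w\Phi + 2\mathcal{B}\left(\int_\R w\Phi\, dx\right)\left(\int_\R w^2\, dx\right)^{-1}w^2.
\]
By the Fourier multiplier calculus (symbol $|\xi|^{2s}+1$), $L_{\mathrm{free}}$ is a topological isomorphism from $(H^{2s}(\R))^2$ onto $(L^2(\R))^2$. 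Consequently the Fredholm property of $L$ reduces to compactness of $K:(H^{2s}(\R))^2\to (L^2(\R))^2$.

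Next I would verify this compactness. The nonlocal term $\Phi\mapsto 2\mathcal{B}(\int_\R w\Phi\, dx)(\int_\R w^2\, dx)^{-1}w^2$ is of finite rank (its image is contained in $\mathrm{span}\{w^2 e_1, w^2 e_2\}$) and thus compact. For the multiplication operator $\Phi\mapsto w\Phi$, I would exploit the decay of $w$ from Proposition \ref{pr3.1}(a): writing $w = w\chi_{|x|\leq R} + w\chi_{|x|>R}$, the tail piece has operator norm bounded by $\sup_{|x|>R}|w(x)|$, which vanishes as $R\to\infty$, while the compactly supported piece is compact by localizing and invoking the compact Rellich--Kondrachov embedding $H^{2s}(B_R)\hookrightarrow\hookrightarrow L^2(B_R)$ (valid for $s>0$). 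As a norm limit of compact operators, the full multiplication operator is itself compact.

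Having established that $L$ is Fredholm of index zero, I would apply the Fredholm alternative: $\mathrm{Range}(L)$ is closed and equals $\mathrm{Ker}(L^*)^\perp$. By Lemma \ref{le3.1}, $\mathrm{Ker}(L^*)=X_0\oplus X_0$, hence
\[
\mathrm{Range}(L) = (X_0\oplus X_0)^\perp\cap (L^2(\R))^2.
\]
The restriction of $L$ to $(X_0\oplus X_0)^\perp\cap (H^{2s}(\R))^2$ is therefore injective (its kernel is exactly $X_0\oplus X_0$ by Lemma \ref{le3.1}) and surjective onto $(X_0\oplus X_0)^\perp\cap (L^2(\R))^2$. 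Both subspaces are Banach spaces under the inherited norms, so the open mapping theorem produces a bounded two-sided inverse.

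The main technical obstacle is justifying compactness of multiplication by $w$ in the nonlocal fractional Sobolev setting; once the polynomial decay of $w$ from Proposition \ref{pr3.1}(a) is combined with local Rellich compactness, the remaining steps are a routine application of Fredholm theory and the open mapping theorem.
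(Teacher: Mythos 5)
Your proposal is correct and takes essentially the same route as the paper, which simply cites the Fredholm alternative together with Lemma \ref{le3.1}; you have filled in the standard details (compactness of the perturbation via the decay of $w$ and Rellich, index zero, open mapping theorem) that the paper leaves implicit.
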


\begin{proof}
	This follows from the Fredholm Alternatives Theorem and Lemma \ref{le3.1}.
\end{proof}

Finally we study the eigenvalue problem (see \eqref{3.7} for the definition of $L$)
\begin{equation}
\label{3.20}
L\Phi+\alpha\Phi=0,
\end{equation}
for which we have the following lemma.

\begin{lemma}
	\label{le3.3}
	Assume that all the eigenvalues of $\mathcal{B}$ are real. Then we have
	\begin{enumerate}
		\item [(1)] If $2\min\limits_{\sigma\in\sigma(\mathcal{B})}\sigma>1$ then for any nonzero eigenvalue of \eqref{3.20} we must have $\Re(\alpha)\leq-c_0<0.$ 
		
		\item [(2)] If there exists $\sigma\in\sigma(\mathcal{B})$ such that $2\sigma<1$, then there exists a positive eigenvalue of \eqref{3.20}.
	\end{enumerate}	
\end{lemma}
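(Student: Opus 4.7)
The plan is to reduce the coupled eigenvalue problem \eqref{3.20} to two decoupled scalar NLEPs of the form \eqref{3.2} by diagonalizing the coupling matrix $\mathcal{B}$, and then invoke Theorem \ref{th3.stability} component-wise.

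First I would observe that $(\mathcal{B})_{ij} = G_D(p_i^0,p_j^0)\hat\xi_j^0$ can be factored as $\mathcal{B} = \mathcal{G}_D\mathcal{H}$, with $\mathcal{G}_D$ symmetric and $\mathcal{H}$ diagonal with positive entries $\hat\xi_i^0 > 0$. Consequently $\mathcal{B} = \mathcal{H}^{-1/2}(\mathcal{H}^{1/2}\mathcal{G}_D\mathcal{H}^{1/2})\mathcal{H}^{1/2}$ is similar to the symmetric matrix $\mathcal{H}^{1/2}\mathcal{G}_D\mathcal{H}^{1/2}$, so under the hypothesis that the eigenvalues of $\mathcal{B}$ are real there exists an invertible $P$ with $P^{-1}\mathcal{B}P = \mathrm{diag}(\sigma_1,\sigma_2)$, $\sigma_i\in\R$. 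Setting $\Phi = P\tilde\Phi$ in \eqref{3.20} and premultiplying by $P^{-1}$ (using that $(-\Delta)^s$ and multiplication by $w$ commute with the constant matrix $P$), the system decouples componentwise into
\begin{equation*}
(-\Delta)^s\tilde\Phi_i + \tilde\Phi_i - 2w\tilde\Phi_i + 2\sigma_i\frac{\int_\R w\tilde\Phi_i\,dx}{\int_\R w^2\,dx}w^2 + \alpha\tilde\Phi_i = 0,\qquad i=1,2,
\end{equation*}
each of which is precisely \eqref{3.2} with coupling constant $\gamma = 2\sigma_i$. Since $P$ is invertible, a complex number $\alpha$ is a nonzero eigenvalue of $L$ if and only if it is a nonzero eigenvalue of at least one of these scalar problems.

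For part (1), the hypothesis $2\sigma > 1$ for every $\sigma \in \sigma(\mathcal{B})$ makes Theorem \ref{th3.stability}-(2) applicable to both decoupled scalar NLEPs (recall $s > 1/4$), yielding constants $c_0^{(i)}>0$ such that any nonzero eigenvalue of the $i$-th scalar problem satisfies $\Re(\alpha) \leq -c_0^{(i)}$; taking $c_0 = \min\{c_0^{(1)},c_0^{(2)}\}$ gives the claimed uniform bound. For part (2), pick an index $i$ with $2\sigma_i < 1$; Theorem \ref{th3.stability}-(1) then produces a nontrivial solution $\phi$ of the $i$-th scalar NLEP with $\Re(\alpha) > 0$. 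Defining $\tilde\Phi$ by $\tilde\Phi_i = \phi$ and $\tilde\Phi_j = 0$ for $j\neq i$, the vector $\Phi = P\tilde\Phi$ is a nonzero eigenfunction of $L$ corresponding to a positive eigenvalue $\alpha$.

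The only real subtlety will be the diagonalization: if $\mathcal{B}$ admitted a nontrivial Jordan block the decoupling would break down, but this possibility is ruled out by the similarity to a symmetric matrix established above (which also makes the real-eigenvalue hypothesis natural in examples where $\mathcal{G}_D$ is positive definite, as is the case here by the Fourier series in \eqref{3.green}). Note that the transition matrix $P$ need not be orthogonal; only its invertibility is used in the reduction.
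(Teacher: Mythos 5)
Your proposal is correct and takes essentially the same route as the paper: diagonalize $\mathcal{B}$, decouple into scalar NLEPs with $\gamma=2\sigma_i$, and invoke Theorem \ref{th3.stability} componentwise, taking $\Phi_i=\Phi_0$, $\Phi_j=0$ in the inverse change of variables for part (2). The one small refinement you add is the explicit justification that $\mathcal{B}=\mathcal{G}_D\mathcal{H}$ is similar to the symmetric matrix $\mathcal{H}^{1/2}\mathcal{G}_D\mathcal{H}^{1/2}$ and hence genuinely diagonalizable over $\mathbb{R}$ with $P$ invertible (not necessarily orthogonal); this quietly corrects the paper's loose claim in the proof of Lemma \ref{le3.1} that $P$ can be taken orthogonal, which fails in general since $\mathcal{B}$ itself need not be symmetric.
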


\begin{proof}
{We first prove (1)}. Let $(\Phi,\alpha)$ satisfy \eqref{3.20} {and assume that $2\min\limits_{\sigma\in\sigma(\mathcal{B})}\sigma>1$}. Suppose $\alpha_R\geq 0$ and $\alpha\neq 0$. Similar to Lemma \ref{le3.2} we diagonalize \eqref{3.20}
	\begin{equation}
	\label{3.21}
	(-\Delta)^s\Phi+\Phi-2w\Phi+2(\int_{\R}w^2dx)^{-1}(\int_{\R}w \mathcal{J}\Phi)w^2+\alpha\Phi=0,
	\end{equation}
	and the $i$-th equation of system \eqref{3.21} becomes
	\begin{equation}
	\label{3.22}
	(-\Delta)^s\Phi_i+\Phi_i-2w\Phi_i+2\sigma_i\left(\int_{\R}w^2dx\right)^{-1}\left(\int_{\R}w\Phi_i\right)w^2+\alpha\Phi_i=0.
	\end{equation}	
	The first conclusion follows by Theorem \ref{th3.stability}-(2) and the fact that $2\sigma_i>1$. We conclude that either $\Phi_1=\Phi_2=0$ or $\alpha\leq-c_0<0$.	Since $\Phi$ does not vanish and $\alpha<0$, thus (1) is proved.
	
Next we prove (2) and assume that $2\sigma_i<1$ for some $\sigma_i\in {\sigma(\mathcal{B})}$. Then the equation corresponding to $\sigma_i$ becomes
	$$(-\Delta)^s\Phi_i+\Phi_i-2w\Phi_i+2\sigma_i\left(\int_{\R}w^2\right)^{-1}\left(\int_{\R}w\Phi_idx\right)w^2
+\alpha\Phi_i=0.$$
By Theorem \ref{th3.stability}-(1) we know that there exists an eigenvalue $\alpha_0>0$ and an eigenfunction $\Phi_0$ such that
\begin{equation}
\label{3.23} L_0\Phi_0+2\sigma_i\left(\int_{\R}w^2dx\right)^{-1}\left(\int_{\R}w\Phi_0dx\right)w^2+\alpha_0\Phi_0=0.
\end{equation}
Let us take $\Phi_i=\Phi_0$ and $\Phi_j=0$ for $j\neq i$. Then $(\Phi,\alpha_0)$ satisfies \eqref{3.20} {which establishes (2).}
\end{proof}

\vspace{0.5cm}
\section{Formal Analysis of $N$-Spike Equilibrium Solutions and their Linear Stability}\label{sec:formal-results}

Although the fractional Laplacian $(-\Delta)^s$ is \textit{nonlocal}, the method of matched asymptotic expansions can nevertheless be used to construct leading order asymptotic approximations to equilibrium solutions of \eqref{1.fgm}. Indeed, assuming $-1<p_1<...<p_N<1$ ($N\geq 1$) are well separated in the sense that $p_1+1=\mathcal{O}(1)$, $1-p_N=\mathcal{O}(1)$, and $|p_{i+1}-p_i|=\mathcal{O}(1)$ for all $i=1,...,N-1$ then it is clear from the definition \eqref{eq:frac_lap_def_2} that
\begin{equation*}
K_s(p_i+\varepsilon y - p_j - \varepsilon \by) = \begin{cases}  \mathcal{O}(1), & j\neq i, \\ \frac{1}{\varepsilon^{1+2s}}\frac{1}{|y-\bar{y}|^{1+2s}} + \mathcal{O}(1), & j= i, \end{cases}\qquad y,\bar{y}=\mathcal{O}(1).
\end{equation*}
Moreover for any bounded and periodic function $\phi(x)$ such that $\phi(x)\sim\Phi(y)$ for $x=p_i+\varepsilon y$ and $y=\mathcal{O}(1)$
\begin{equation*}
(-\Delta)^s\phi(x) \sim \varepsilon^{-2s}(-\Delta)^s\Phi + \mathcal{O}(1),\qquad (-\Delta)^s\Phi\equiv C_s\int_{-\infty}^\infty \frac{\Phi(y)-\Phi(\by)}{|y-\by|^{1+2s}}d\by,
\end{equation*}
which effectively separates the \textit{inner region} problems in the method of matched asymptotic expansions. In the remainder of this section we use the method of matched asymptotic expansions to formally construct multi-spike equilibrium solutions to \eqref{1.fgm} and determine their linear stability.

\subsection{Multi-Spike Solutions and their Slow Dynamics}\label{subsec:formal-equilibrium}

With the separation of inner region problems as outlined above, the construction of quasi-equilibrium solutions follows closely that for the classical case when $s_1=s_2=1$ as detailed in \cite{iron_2001}. In particular letting $-1<p_1<...<p_N<1$ be given as above, then we obtain the inner expansions
\begin{equation*}
u\sim \varepsilon^{-1}\bigl(\xi_iw_{s_1}(y) + o(1)\bigr),\quad v\sim \varepsilon^{-1}\bigl(\xi_i + o(1)\bigr),\qquad\text{for}\quad x = p_i + \varepsilon y,\quad y=\mathcal{O}(1)
\end{equation*}
for each $i=1,...,N$ where $w_{s_1}$ satisfies the core problem \eqref{eq:core-problem} with $s=s_1$ and $\xi_i>0$ is an undetermined constant. Therefore for all $-1<x<1$
\begin{subequations}\label{eq:quasi-eq-sol}
\begin{equation}\label{eq:quasi-eq-sol-u}
u(x) \sim \varepsilon^{-1}\sum_{i=1}^{N} \xi_i w_{s_1}(\varepsilon^{-1}|x-p_i|) + o(\varepsilon^{-1}),
\end{equation}
where the corrections due to the algebraic decay of the core solution don't contribute until $\mathcal{O}(\varepsilon^{2s_1})$. Moreover, in the sense of distributions we calculate the limit $u^2 \rightarrow \varepsilon^{-1}\omega_{s_1} \sum_{j=1}^N \xi_j^2 \delta(x-p_j)$ as $\varepsilon\rightarrow 0^+$ from which it follows that for all $x$ such that $|x-p_i|\gg\varepsilon$ for all $i=1,...,N$ the inhibitor is given by
\begin{equation}\label{eq:quasi-eq-sol-v}
v\sim \varepsilon^{-1}\omega_{s_1}\sum_{j=1}^{N} \xi_j^2 G_D(x,p_j) + o(\varepsilon^{-1}),\qquad \omega_{s_1} \equiv \int_0^\infty w_{s_1}(y)^2 dy,
\end{equation}
\end{subequations}
where $G_D(\cdot,\cdot)$ is the Green's function satisfying \eqref{3.green} with $s=s_2$. Since $v\rightarrow \varepsilon^{-1}(\xi_i + o(1))$ as $x\rightarrow p_i$ we obtain the nonlinear algebraic system
\begin{subequations}\label{eq:quasi-eq}
\begin{equation}
\pmb{\xi} - \omega_{s_1}\mathcal{G}_D\pmb{\xi}^2 = 0,
\end{equation}
where
\begin{equation}
\pmb{\xi}=(\xi_1,\cdots,\xi_N)^T,\quad \mathcal{G}_{D} = (G_D(p_i,p_j))_{i,j=1}^N.
\end{equation}
\end{subequations}
When $N=2$ we recover the system \eqref{2.limit0} and when $N=1$ we obtain $\xi_1 = [\omega_{s_1}G_D(p_1,p_1)]^{-1}$.

Given a fixed configuration $-1<p_1<...<p_N<1$ the algebraic system \eqref{eq:quasi-eq} can be solved for the unknown constants $\xi_1,...,\xi_N$ yielding \textit{quasi}-equilibrium solution to \eqref{1.fgm} given by \eqref{eq:quasi-eq-sol}. We emphasize that the resulting solutions is not, for arbitrary spike locations, a stationary solution of \eqref{1.fgm}. Indeed, while the solution \eqref{eq:quasi-eq-sol} is stationary over an $\mathcal{O}(1)$ timescale the spike locations drift slowly over an $\mathcal{O}(\varepsilon^{-2})$ timescale according to the system of differential equations (see Appendix \ref{app:slow-dynamics} for details)
\begin{equation}\label{eq:slow-dynamics}
\frac{dp_i}{dt} = -\varepsilon^2\kappa_{s_1} \xi_i^{-1}\sum_{j\neq i}\xi_j^2\nabla_1 G_D(p_i,p_j),\qquad \kappa_{s_1}\equiv \frac{\int_{-\infty}^\infty w_{s_1}^2 dy\int_{-\infty}^\infty w_{s_1}^3 dy}{3\int_{-\infty}^\infty |dw_{s_1}/dy|^2dy},
\end{equation}
where $\nabla_1$ denotes the derivative with respect to the first argument and we remark that this is to be solved concurrently with the algebraic system \eqref{eq:quasi-eq}. In particular, if $-1<p_1<...<p_N<1$ are chosen so that
\begin{equation}
\sum_{j\neq i} \xi_j^2\nabla_1G_D(p_i,p_j)=0,
\end{equation}
for all $i=1,...,N$, then \eqref{eq:quasi-eq-sol} is an \textit{equilibrium} solution of \eqref{1.fgm}. Theorem \ref{th1.exist} and the proof found in \S\ref{sec:proof-existence} rigorously establish the existence of the equilibrium solution constructed in this section.

\subsection{Linear Stability of Multi-Spike Solutions}\label{subsec:formal-stability}

We now consider the linear stability of the $N$-spike equilibrium solutions constructed above which we denote by $u_e$ and $v_e$. Substituting $u=u_e+e^{\lambda t}\phi$ and $v=v_e+e^{\lambda t}\psi$ where $|\phi|,|\psi|\ll 1$ into \eqref{1.fgm} and linearizing we obtain
\begin{subequations}
\begin{gather}
\varepsilon^{2s_1}(-\Delta)^{s_1}\phi + \phi - 2v_e^{-1}u_e\phi + v_e^{-2}u_e^{2}\psi + \lambda \phi = 0,\qquad -1<x<1, \label{eq:lin-stab-phi} \\
D(-\Delta)^{s_2}\psi + \psi - 2u_e\phi + \tau\lambda\psi = 0,\qquad -1<x<1, \label{eq:lin-stab-psi}
\end{gather}
\end{subequations}
where we assume in addition that both $\phi$ and $\psi$ are $2$-periodic. We focus first on the case where $\lambda=\mathcal{O}(1)$, the so-called \textit{large} eigenvalues, and make a brief comment on the case of \textit{small} eigenvalues for which $\lambda=\mathcal{O}(\varepsilon^2)$ at the end of this section. Proceeding with the method of matched asymptotic expansions as in the previous section we deduce that $\phi \sim \phi_i(y) + o(1)$ when $x=p_i+\varepsilon y$ and $y=\mathcal{O}(1)$ for each $i=1,...,N$. It follows that $\phi\sim \sum_{j=1}^N \phi_j(\varepsilon^{-1}(x-p_j))+o(1)$ for all $-1<x<1$ and furthermore $u_e\phi \rightarrow \sum_{j=1}^N\xi_j\int_{-\infty}^\infty w_{s_1}(y)\phi_j(y)dy \delta(x-p_j)$ as $\varepsilon\rightarrow 0^+$ in the sense of distributions. Substituting this into \eqref{eq:lin-stab-psi} we deduce that
\begin{equation*}
\psi(x) = 2\sum_{j=1}^{N} \xi_j\int_{-\infty}^{\infty} w_{s_1}(y)\phi_j(y)dy G_D^\lambda(x,p_j),
\end{equation*}
where $G_D^\lambda(x,z)$ is the eigenvalue dependent Green's function satisfying
\begin{equation}
D(-\Delta)^{s_2} G_D^\lambda + (1+\tau\lambda)G_D^\lambda = \delta(x-z),\quad -1<x,z<1,
\end{equation}
with periodic boundary conditions. It follows that for $x=p_i+\varepsilon y$ equation \eqref{eq:lin-stab-phi} becomes
\begin{equation*}
L_0 \phi_i + 2 w_{s_1}^2 \sum_{j=1}^{N} \xi_j\int_{-\infty}^{\infty} w_{s_1}(y)\phi_j(y)dy G_D^\lambda(p_i,p_j) + \lambda\phi_i = 0,
\end{equation*}
for each $i=1,...,N$ where $L_0$ is the linear operator of Proposition \ref{pr3.1} with $s=s_1$. This system of equations is conveniently rewritten as the system of NLEPs
\begin{subequations}
\begin{equation}\label{eq:nlep-sys}
L_0\Phi + 2w_{s_1}^2\frac{\int_{-\infty}^\infty w_{s_1}\mathcal{E}^\lambda \Phi dy}{\int_{-\infty}^\infty w_{s_1}^2 dy} + \lambda \Phi = 0,
\end{equation}
where
\begin{equation}\label{eq:nlep-sys-def}
\Phi\equiv\begin{pmatrix}\phi_1(y) \\ \vdots \\ \phi_N(y)\end{pmatrix},\quad \mathcal{E}^\lambda = \begin{pmatrix} \hat{\xi}_1 G_D^\lambda(p_1,p_1) & \cdots & \hat{\xi}_N G_D^\lambda(p_1,p_N) \\ \vdots & \ddots & \vdots \\ \hat{\xi}_1 G_D^\lambda(p_N,p_1) & \cdots & \hat{\xi}_N G_D^\lambda(p_N,p_N)\end{pmatrix},\quad \hat{\xi}_i = \omega_{s_1}\xi_i.
\end{equation}
\end{subequations}
Letting $\pmb{p}_k^\lambda$ and $\chi_k^\lambda$ be the eigenpairs of $\mathcal{E}^\lambda$ satisfying $\mathcal{E}^\lambda \pmb{p}_k^\lambda = \chi_k^\lambda\pmb{p}_k^\lambda$ for each $k=0,1,...,N-1$ we can further diagonalize \eqref{eq:nlep-sys} by setting $\Phi = \Phi_k\pmb{p}_k^\lambda$ to get the decoupled system of NLEPs
\begin{equation}\label{eq:nlep-scalar}
L_0\Phi_k + 2\chi_k^\lambda w_{s_1}^2\frac{\int_{-\infty}^\infty w_{s_1}\Phi_k dy}{\int_{-\infty}^\infty w_{s_1}^2dy} + \lambda\Phi_k = 0.
\end{equation}
An $N$-spike equilibrium solution is linearly stable with respect to the large eigenvalues provided that all eigenvalues of \eqref{eq:nlep-scalar} satisfy $\Re(\lambda)<0$ for all $k=0,...,N-1$. Finally, we remark that the NLEP \eqref{eq:nlep-scalar} can be further reduced to the algebraic
\begin{equation}\label{eq:nlep-algebraic}
\mathcal{A}_k(\lambda)\equiv\frac{1}{\chi_k^\lambda} + \mathcal{F}_{s_1}(\lambda) = 0,\qquad \mathcal{F}_{s_1}(\lambda)\equiv 2\frac{\int_{-\infty}^\infty w_{s_1}(L_0+\lambda)^{-1}w_{s_1}^2dy}{\int_{-\infty}^\infty w_{s_1}^2dy},
\end{equation}
which will in general require the numerical evaluation of $F_{s_1}(\lambda)$.

The stability of a multi-spike equilibrium solution with respect to the small eigenvalues is closely related to the slow dynamics given by \eqref{eq:slow-dynamics}. In particular, whereas the large eigenvalues correspond to amplitude instabilities occurring on an $\mathcal{O}(1)$ timescale, the small eigenvalues are linked to the linear stability of the spike pattern with respect to the slow dynamics \eqref{eq:slow-dynamics} and therefore occur on an $\mathcal{O}(\varepsilon^{-2})$ timescale. In the case of two-spike equilibrium solutions Theorem \ref{th1.stability} rigorously establishes the linear stability with respect to the large eigenvalues. On the other hand, as discussed in \S\ref{sec:proof-stability} two-spike equilibrium solutions are always linearly stable with respect to the small eigenvalues. In the remainder of this section we consider explicitly the asymptotic construction and linear stability of one- and two-spike solutions.

\subsection{Example: Symmetric $N$-Spike Solutions}\label{subsec:example-1}

By appropriately choosing the spike locations we can explicitly calculate an $N$-spike solution that is \textit{symmetric} in the sense that the local profile of each spike is identical. Specifically, letting
\begin{equation}
p_i = -1 + N^{-1}(2i-1),\quad \xi_i=\xi_c\equiv \biggl( \omega_{s_1} \sum_{k=0}^{N-1}G_D(2N^{-1}k,0)\biggr)^{-1},\qquad \text{for all}\quad i=1,...,N,
\end{equation}
it is then straightforward to show that \eqref{eq:quasi-eq} is satisfied and the spike locations are stationary solutions of the the slow-dynamics \eqref{eq:slow-dynamics}. Since the resulting matrix $\mathcal{E}^\lambda$ defined by \eqref{eq:nlep-sys-def} is circulant it's eigenpairs are explicitly given by
\begin{equation}
\pmb{p}_k^\lambda \equiv \biggl(1 , e^{i\frac{2\pi k}{N}} , \cdots , e^{i\frac{2\pi (N-1)k}{N}} \biggr)^T,\quad \chi_k^\lambda = \frac{\sum_{j=0}^{N-1} H_j^\lambda e^{i\frac{2\pi jk}{N}}}{\sum_{j=0}^{N-1} H_j^0},\quad H_k^\lambda \equiv G_D^\lambda(2N^{-1}k,0)
\end{equation}
for each $k=0,...,N-1$.

For the remainder of this example we focus exclusively on the calculation of the Hopf bifurcation threshold for a one-spike solution. In particular, by using the winding number argument in \cite{ward_2003}, we seek conditions under which \eqref{eq:nlep-algebraic} with $k=0$ admits an unstable solution (i.e\@  with $\Re(\lambda)>0$). Letting $C_R = \{i\lambda_I\,|\,-R\leq\lambda_I\leq R\}\cup\{Re^{i\theta}\,|\,-\pi/2\leq\theta\leq\pi/2\}$ traversed counterclockwise and noting that $|\chi_0^\lambda|>0$ for all $\lambda$ with $\Re(\lambda)\geq 0$ whereas $\mathcal{F}_{s_1}(\lambda)$ has a simple pole on the positive half-plane corresponding to the principal eigenvalue of $L_0$ we find that the number $Z$ of unstable solutions to \eqref{eq:nlep-algebraic} can be determined by
\begin{equation*}
\frac{1}{2\pi i}\lim_{R\rightarrow\infty}\oint_{C_R}\frac{d\mathcal{A}_0/d\lambda}{\mathcal{A}_0}d\lambda = Z - 1,
\end{equation*}
Noting that $\chi_0^\lambda\sim\mathcal{O}(\lambda^{\tfrac{1}{2s_2}-1})$ and therefore $\mathcal{A}_0(\lambda)\sim\mathcal{O}(\lambda^{1-\tfrac{1}{2s_2}})$ for $|\lambda|\gg 1$ we deduce that the change in argument of $\mathcal{A}_0$ over the semi-circle part of the contour is $\bigl(1-\tfrac{1}{2s_2}\bigr)\pi$ from which it follows that
\begin{equation*}
Z = \tfrac{3}{2} - \tfrac{1}{4s_2} - \tfrac{1}{\pi}\arg\mathcal{A}(i\lambda_I)\bigr|_{\lambda_I=0}^\infty.
\end{equation*}
We note that $\arg\mathcal{A}(i\lambda_I)\rightarrow \tfrac{1}{2}(1-\tfrac{1}{2s_2})$ as $\lambda_I\rightarrow\infty$ whereas $\mathcal{A}_0(0)=-1$ since  $L_0^{-1}w_{s_1}^2=-w_{s_1}$. Furthermore numerical evidence suggests that $\Re\mathcal{A}_0(i\lambda_I)$ is monotone increasing in $\lambda_I$ and so there exists a unique value $0 < \lambda_I^\star < \infty$ such that $\Re\mathcal{A}_0(i\lambda_I^\star) = 0$. It then follows that either $Z=2$ or $Z=0$ depending on whether $\Im\mathcal{A}_0(i\lambda_I^\star)>0$ or $\Im\mathcal{A}_0(i\lambda_I^\star)<0$ respectively. The Hopf bifurcation threshold can thus be calculated by numerically solving $\mathcal{A}_0(i\lambda_I)=0$ for $\tau=\tau_h(D,s_1,s_2)$ and $\lambda_I=\lambda_h(D,s_1,s_2)$. By first considering the limit $D\rightarrow\infty$ for which $\chi_0^\lambda\rightarrow (1+\tau\lambda)^{-1}$ we calculate the Hopf bifurcation threshold $\tau_h^\infty(s_1)$ and accompanying eigenvalue $\lambda_h^\infty(s_1)$, both of which are independent of $s_2$ and are plotted in Figure \ref{fig:hopf_infty}. In particular we observe that $\tau_h^\infty$ is monotone increasing with $s_1$ and therefore the introduction of L\'evy flights for the activator destabilizes the single spike solution as previously observed in \cite{nec_2012_levi}. This behaviour persists for finite values of $D>0$ but we observe that the Hopf bifurcation threshold is monotone decreasing with $s_2$ and therefore introducing L\'evy flights for the inhibitor stabilizes the single spike solution. This behaviour is illustrated in Figure \ref{fig:hopf_thresh} for which we plot the Hopf bifurcation threshold as a function of $D$ for select values of $s_1$ and $s_2$. We remark in addition that the Hopf bifurcation's dependence on the inhibitor diffusivity $D$ remains qualitative unchanged with the introduction of L\'evy flights: $\tau_h(D,s_1,s_2)$ decreases monotonically with $D$.

To illustrate the above observations, mainly the destabilization (resp. stabilization) of the single-spike solution with decreasing $s_1$ (resp. $s_2$), we numerically solve \eqref{1.fgm} starting with a single spike solution centred at $p_1=0$ with $\varepsilon=0.02$, $D=2$, and $\tau=1.5$ for three distinct pairs of exponents $(s_1,s_2)=(0.8,0.7)$, $(0.8,0.9)$, and $(0.4,0.7)$. See Appendix \ref{app:numerical} for details on the numerical calculation. From the numerically calculated threshold we find $\tau_h(2,0.8,0.7)\approx 2.306$, $\tau_h(2,0.8,0.9)\approx 1.096$, and $\tau_h(2,0.4,0.7)\approx 1.399$ and therefore with $\tau_h=1.5$ we anticipate the single spike solution to be stable for the first exponent set and unstable for the latter two. The plots of $u(0,t)$ in Figure \ref{fig:hopf_examples} support these predictions.

\begin{figure}
	\centering
	\begin{subfigure}{0.33\textwidth}
		\centering
		\includegraphics[scale=0.675]{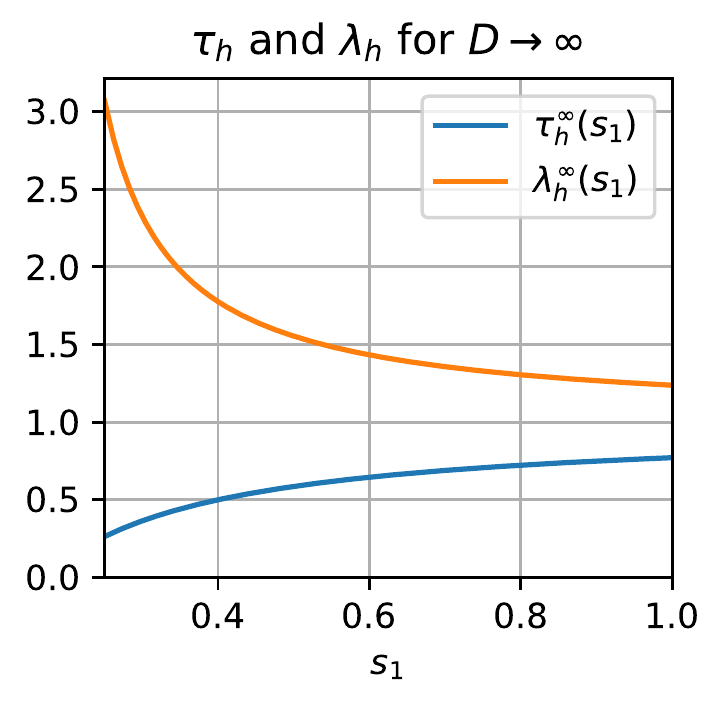}
		\caption{}\label{fig:hopf_infty}
	\end{subfigure}%
	\begin{subfigure}{0.33\textwidth}
		\centering
		\includegraphics[scale=0.675]{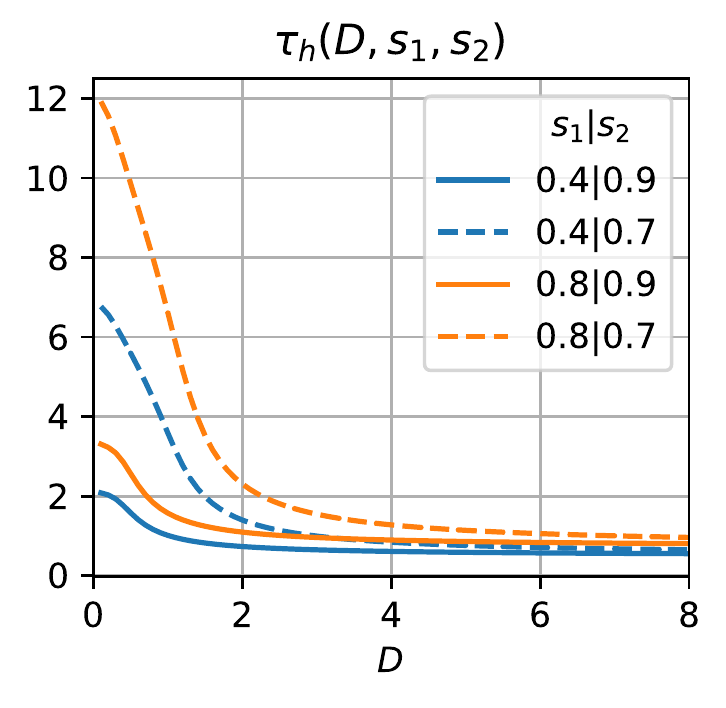}
		\caption{}\label{fig:hopf_thresh}
	\end{subfigure}%
	\begin{subfigure}{0.33\textwidth}
		\centering
		\includegraphics[scale=0.675]{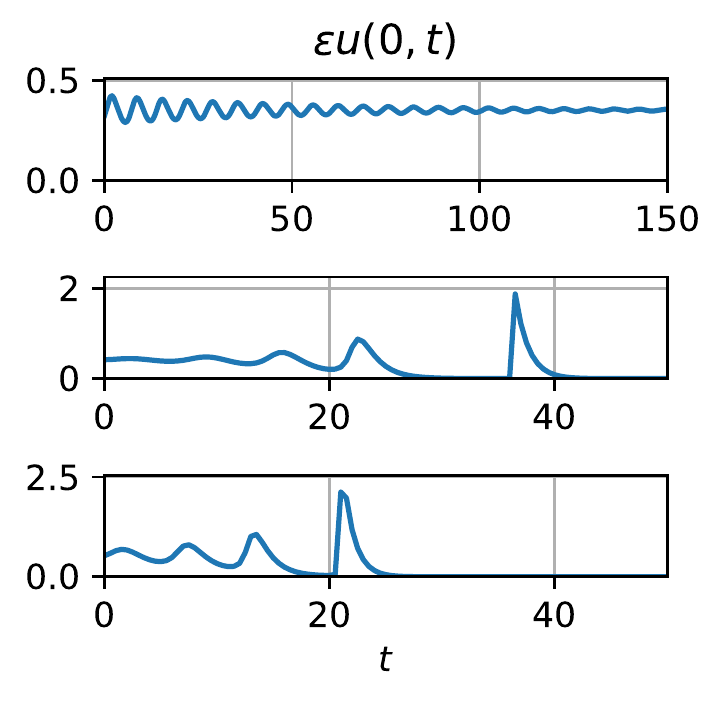}
		\caption{}\label{fig:hopf_examples}
	\end{subfigure}%
	\caption{Hopf bifurcation threshold for a one-spike solution in (A) the shadow limit $D\rightarrow\infty$, and (B) for finite $D>0$ at select values of $s_1=0.4,0.8$ and $s_2=0.7,0.9$. (C) Single spike height obtained from numerical simulations with parameters $\varepsilon=0.02$, $D=2$, and $\tau=1.5$ and exponent sets $(s_1,s_2)=(0.8,0.7)$, $(0.8,0.9)$, and $(0.4,0.7)$ for the top, middle, and bottom plots respectively. }\label{fig:hopf_thresholds}
\end{figure}

\subsection{Example: Symmetric and Asymmetric Two-Spike Solutions}\label{subsec:example-2}

When $s_1=s_2=1$ it has been shown that the one-dimensional Gierer-Meinhardt model may exhibit \textit{asymmetric} solutions consisting of spikes with different heights \cite{ward_2002_asymmetric,wei_2007_existence}. The \textit{gluing} method for constructing such asymmetric $N$-spike solutions relies crucially on the locality of the classical Laplace operator. However, since the fractional Laplace operator $(-\Delta)^s$ is nonlocal for $s<1$ we cannot use this gluing method to construct asymmetric multi-spike solutions and we are therefore restricted to solving the nonlinear algebraic system \eqref{eq:quasi-eq} directly. In this example we restrict our attention to the case of $N=2$ for which a complete characterization of all two-spike solutions can be obtained directly from the algebraic system \eqref{eq:quasi-eq}.

Assuming without loss of generality that $-1<p_1<p_2<1$ we first calculate from \eqref{eq:slow-dynamics} that
\begin{equation*}
\frac{d(p_2-p_1)}{dt} = - \varepsilon^2 \kappa_s \frac{\xi_1^3+\xi_2^3}{\xi_1\xi_2}G_D'(|p_2-p_1|,0),
\end{equation*}
where $G_D'(z,0)=dG_D(z,0)/dz$. By numerically evaluating $G_D(z,0)$ (see Appendix \ref{app:greens-func}) we observe that it is monotone decreasing for $0<z<1$, attains its global minimum at $z=1$, and is monotone increasing for $1<z<2$. Any stationary solution of \eqref{eq:slow-dynamics} must therefore satisfy $p_2-p_1=1$ and furthermore any such solution is linearly stable with respect to the slow-dynamics with the exception of having a neutral eigenvalue corresponding to translational invariance. Defining
\begin{equation*}
z_1 \equiv \omega_{s_1} G_D(0,0)\xi_1,\qquad z_2 \equiv \omega_{s_1} G_D(0,0) \xi_2,\qquad \theta\equiv G_D(1,0)/G_D(0,0),
\end{equation*}
the algebraic system \eqref{eq:quasi-eq} can be rewritten as
\begin{equation}\label{eq:z-system}
z_1 - z_1^2 - \theta z_2^2 = 0,\qquad z_2 - \theta z_1^2 - z_2^2 = 0.
\end{equation}
This system always admits the symmetric solution for which $z_1=z_2=z_c$ where $z_c=(1+\theta)^{-1}$ recovering the result from the previous example for $N=2$. One the other hand, assuming $z_1\neq z_2$ we may subtract the first equation from the second to obtain $z_2 = (1-\theta)^{-1} - z_1$. Substituting this expression for $z_2$ back into the first equation in \eqref{eq:z-system} yields a quadratic in $z_1$ which is readily solved to obtain
\begin{equation}\label{eq:z1-z2}
z_1 = \frac{1/2}{1-\theta}\biggl(1 + \sqrt{\frac{1-3\theta}{1+\theta}}\biggr),\quad z_2 = \frac{1/2}{1-\theta}\biggl(1 - \sqrt{\frac{1-3\theta}{1+\theta}}\biggr).
\end{equation}
We immediately deduce that an asymmetric two-spike solution exists if an only if $\theta<1/3$ and we obtain the bifurcation diagram shown in Figure \ref{fig:two-spike-struct}. Interestingly, the structure of two-spike solutions depends only on the ratio $\theta$ depending only on $D$ and the inhibitor exponent $s_2$.

%In Figure \ref{fig:two-spike-struct} we plot both the symmetric spike height $z_c$ and asymmetric spike heights $z_1$ and $z_2$. In Figure \ref{fig:two-spike-theta0} we plot $\theta$ versus $D$ for select values, showing that this is a monotone increasing function in $D$ and highlighting the unique values of $D$ for which $\theta = 1/3$ for each select value of $r$. Finally, in Figure \ref{fig:two-spike-existence} we plot the existence threshold $D = D_a(r)$ above which asymmetric two-spike solutions do not exist.

\begin{figure}
	\centering
	\begin{subfigure}{0.33\textwidth}
		\centering
		\includegraphics[scale=0.675]{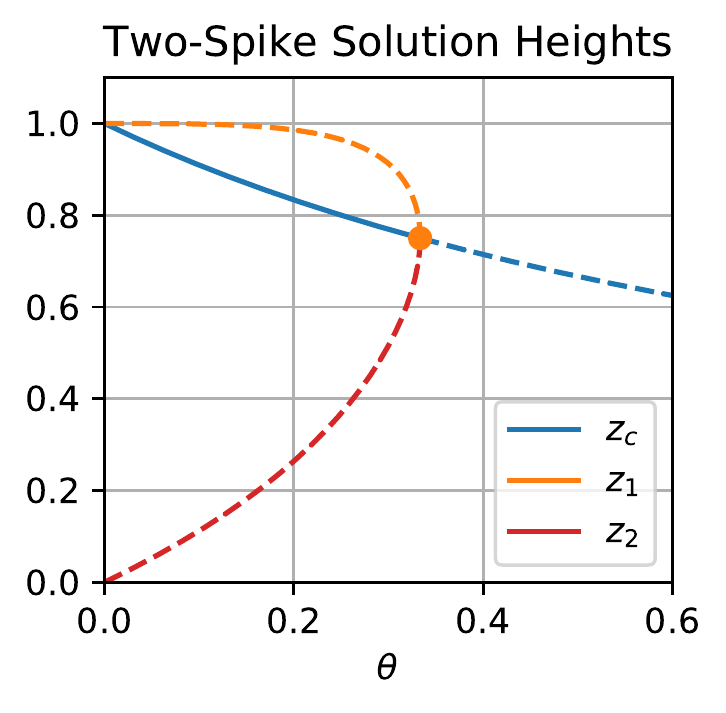}		
		\caption{}\label{fig:two-spike-struct}
	\end{subfigure}%
	\begin{subfigure}{0.33\textwidth}
		\centering
	\includegraphics[scale=0.675]{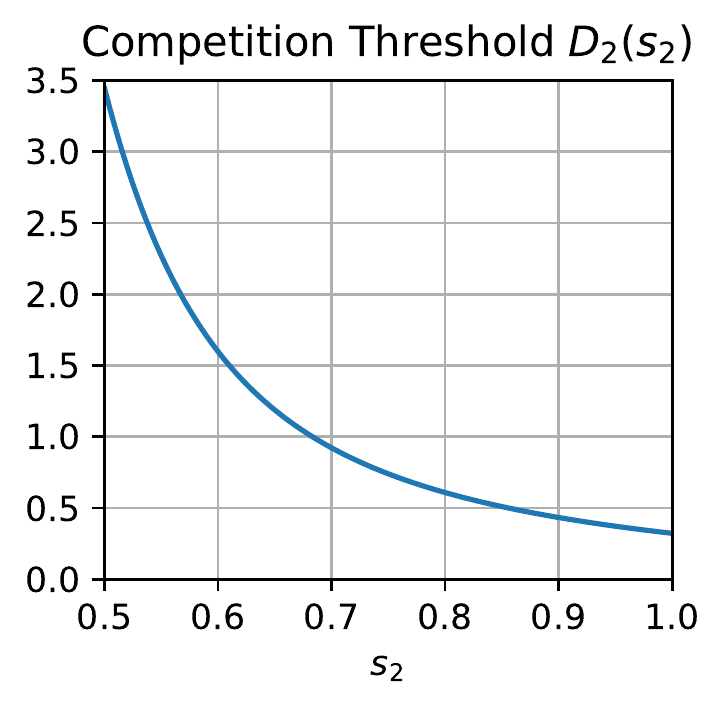}
	\caption{}\label{fig:two-spike-existence}	
	\end{subfigure}%
	\begin{subfigure}{0.33\textwidth}
		\centering
		\includegraphics[scale=0.675]{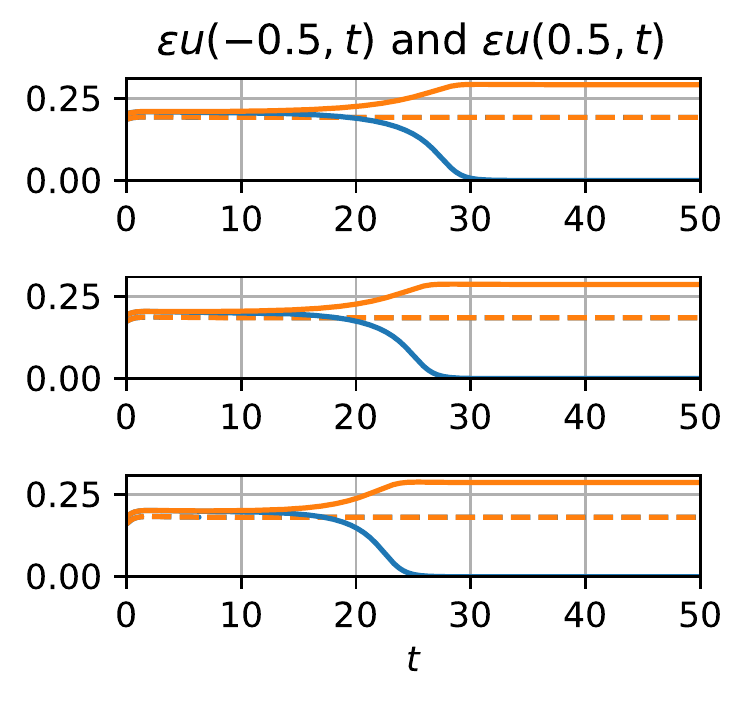}
		\caption{}\label{fig:two-spike-examples}	
	\end{subfigure}%
	\caption{(A) Bifurcation diagram showing the rescaled spike heights $z_i=\omega_{s_1}G_D(0,0)\xi_i$ versus $\theta$. Solid (resp. dashed) lines indicate the resulting two-spike solution is linearly stable (resp. unstable) with respect to competition instabilities. (B) The competition instability threshold for a symmetric two-spike solution. (C) Spike heights at $x=-0.5$ (solid blue) and $x=0.5$ (solid orange) obtained from numerical simulations with initial condition consisting of a symmetric two-spike solution and with parameter values of $s_1=0.8$, $\varepsilon=0.02$, $\tau=0.05$, and $D=1.2D_2(s_2)$  where $s_2=0.0.9$ (top), $0.8$ (middle), and $0.7$ (bottom). The dashed orange line indicates the (common) spike height obtained with the same parameters but with $D=0.8D_2(s_2)$.}\label{fig:two-spike}
\end{figure}

We conclude this section by considering the linear stability of two-spike solutions with respect to competition instabilities, neglecting the possibility of Hopf bifurcations by assuming that $\tau$ is sufficiently small. In view of \eqref{eq:nlep-scalar} and Theorem \ref{th3.stability} it suffices to consider the eigenvalues of
\begin{equation*}
\mathcal{E}_0 = \begin{pmatrix} z_1 & \theta z_2 \\ \theta z_1 & z_2\end{pmatrix}.
\end{equation*}
When $z_1=z_2=z_c$ it is easy to see that $\mathcal{E}_0$ has eigenvectors $\pmb{p}_0 = (1,1)$ and $\pmb{p}_1=(1,-1)$ with corresponding eigenvalues $\chi_0^0=1$ and $\chi_1^0 = (1-\theta)/(1+\theta)$. Since $\chi_0^0>1/2$ by Theorem \ref{th3.stability} the $k=0$ mode is always linearly stable. On the other hand the $k=1$ mode is stable if and only if $\chi_1^0 < 1/2$ and in particular the symmetric solution is linearly stable when $\theta<1/3$ and unstable otherwise. Finally, when $z_1$ and $z_2$ are given by \eqref{eq:z1-z2} it can be shown that the eigenvalues of $\mathcal{E}_0$ are given by
\begin{equation*}
\chi_0^0 = \frac{1/2}{1-\theta}\biggl(1 + \sqrt{\frac{4\theta^2 - 3\theta + 1}{1+\theta}}\biggr),\quad \chi_1^0 = \frac{1/2}{1-\theta}\biggl(1 - \sqrt{\frac{4\theta^2 - 3\theta + 1}{1+\theta}}\biggr),
\end{equation*}
from which we deduce that $\chi_0^0 > 1$ and $\chi_1^0 < \frac{3-\sqrt{3}}{4}<\frac{1}{2}$ for all $0<\theta<1/3$. Therefore by Theorem \ref{th3.stability} the $k=1$ mode is linearly unstable. In Figure \ref{fig:two-spike-struct} we indicate the values of $\theta$ where the two-spike solution is linearly stable (resp. unstable) with respect to competition instabilities by solid (resp. dashed) curves. By numerically solving $\theta=1/3$ for $D$ as a function of $s_2$ we can calculate the competition instability threshold $D=D_2(s_2)$ for the symmetric two-spike solution and this is shown in Figure \ref{fig:two-spike-existence}. In Figure \ref{fig:two-spike-examples} we illustrate the onset of competition instabilities when $s_1=0.8$, $\varepsilon=0.02$, $\tau=0.05$, and for values of $s_2=0.0.9$, $0.8$, and $0.7$ and $D=1.2\times D_2(s_2)$ by performing full numerical simulations of \eqref{1.fgm} (see Appendix \ref{app:numerical} for details). We remark that the accuracy of the leading order approximation to the competition instability calculated above grows increasingly inaccurate as $s_2\rightarrow 0.5$ for a fixed value of $\varepsilon>0$. Indeed, as described in more detail in the derivation of the slow dynamics found in Appendix \ref{app:slow-dynamics}, the first order correction to the quasi-equilibrium solution is $\mathcal{O}(\varepsilon^{2s_2-1})$ and this tends to $\mathcal{O}(1)$ as $s_2\rightarrow 1/2$. When $s_2=1/2$ the Green's function is known to have a logarithmic singularity (see Lemma 2.2. in \cite{wei_2019_multi_bump}) and we anticipate that the method of matched asymptotic expansions will lead to an asymptotic expansion in powers of $\nu=-1/\log\varepsilon$ as is often the case for singularly perturbed reaction-diffusion systems in two-dimensions \cite{kolokolnikov_2009,chen_2011}.

% -----------------------------------------------------------------

\vspace{0.5cm}
\section{Rigorous proof of the existence results}\label{sec:proof-existence}

In this section we shall prove the existence theorem, i.e., Theorem \ref{th1.exist}. We divide the discussion into three sections. In first subsection, we give an approximate solution. Then we apply the classical Liapunov-Schmidt reduction method to reduce the infinite dimensional problem to a finite dimensional problem in second subsection. In last subsection we solve the finite dimensional problem and thereby prove the Theorem \ref{th1.exist}.

\subsection{Study of the Approximate Solutions}

Let $-1<p_1^0<p_2^0<1$ be $2$ points satisfying the assumptions $(H1)-(H3)$. Let $\hat\xi^0=(\hat\xi_1^0,\hat\xi_2^0)$ be the solution of \eqref{2.limit0} and let $\p^0=(p_1^0,p_2^0)$. We shall construct an approximate solution to \eqref{1.fg} which concentrates near these $2$ points.

Let $-1<p_1<p_2<1$ be such that $\p=(p_1,p_2)\in B_{\e^{2s-1}}{(\p^0)}$. Set
$$r_0=\frac{1}{10}\min\left\{p_1^0+1,~1-p_2^0,~\frac12|p_1^0-p_2^0|\right\}$$
and define a cut-off function $\chi(x)$ such that $\chi(x)=1$ for $|x|<1$ and $\chi(x)=0$ for $|x|>2$. Letting
\begin{equation}
\label{4.1}	
w_i(y)=w\left(y-\frac{p_i}{\e}\right)\chi\left(\frac{\e y-p_i}{r_0}\right),
\end{equation}
{where $w$ is the ground state solution of \eqref{eq:core-problem},} it is then straightforward to check that
\begin{equation}
\label{4.2}	
\s_y w_i(y)+w_i(y)-w_i^2(y)=h.o.t.,
\end{equation}
where $h.o.t.$ {refers to terms} of order $\e^{1+2s}$ in $L^2\dd$. Let $\hat\xi(\p)=(\hat\xi_1,\hat\xi_2)$ be defined as in $(H1)$. Fix any function $u\in H^{2s}\dd$ and define $T[u]$ to be the solution of
\begin{equation}
\label{4.3}
\begin{cases}
D(-\Delta)^sT[u]+T[u]-c_{\e}u^2=0,\quad &x\in(-1,1),\\
T[u](x)=T[u](x+2),\quad &x\in\mathbb{R},
\end{cases}
\end{equation}
where
\begin{equation}
\label{4.4}
c_\e=\left(\e\int_{\R}w^2(y)dy\right)^{-1}.
\end{equation}
\medskip

Letting $\p\in B_{\e^{2s-1}}(\p^0)$ we define $$w_{\e,\p}=\sum_{i=1}^2\hat\xi_iw_i(y)$$ and using \eqref{4.3} we compute
\begin{equation}
\label{4.6}
\begin{aligned}
\tau_i:=T[\w](p_i)=~&\e c_\e\int_{-\frac{1}{\e}}^{\frac{1}{\e}}G_D(p_i,\e y)\w^2(y)dy
=\e c_\e\sum_{j=1}^2\hat\xi_j^2
\int_{-\frac{1}{\e}}^{\frac{1}{\e}}G_D(p_i,\e y)w_j^2(y)dy \\
=~&\e c_{\e}\sum_{j=1}^2\hx_j^2\left(G_D(p_i,p_j)\int_{\R}w^2(y)dy\right)+\mathcal{P}_i=\sum_{j=1}^2G_D(p_i,p_j)\hx_j^2+ \mathcal{P}_i,
\end{aligned}
\end{equation}
where $G_D(x,y)$ is defined in \eqref{3.green} and
$\mathcal{P}_i$ is a number with order $\e^{2s-1}$.
Thus, we have obtained the following system of equations:
\begin{equation}
\label{4.7}
\tau_i=\sum_{j=1}^2G_D(p_i,p_j)\hx_j^2+\mathcal{P}_i.
\end{equation}
According to the assumption $(H1)$-$(H3)$ and the implicit function theorem, we have the above equation has a unique solution
$$\tau_i=\hat\xi_i+\vartheta_i,\quad i=1,2,\quad \vartheta_i=O(\e^{2s-1}).$$
Hence
\begin{equation*}
T[\w](p_i)=\hat\xi_i+O(\e^{2s-1}).
\end{equation*}
Now for $x=p_i+\e z$ we calculate 
\begin{equation}
\label{4.8}
\begin{aligned}
&T[\w](x)-T[\w](p_i)=c_{\e}\int_{-1}^1[G_D(x,\zeta)-G_D(p_i,\zeta)]\w^2\left(\frac{\zeta}{\e}\right)d\zeta\\
&=c_\e\hat\xi_i^2\int_{-1}^{1}[G_D(x,\zeta)-G_D(p_i,\zeta)]w_i^2\left(\frac{\zeta}{\e}\right)d\zeta+c_\e\sum_{j\neq i}\hat\xi_j^2\int_{-1}^1[G_D(x,\zeta)-G_D(p_i,\zeta)]w_j^2\left(\frac{\zeta}{\e}\right)d\zeta \\
&=c_\e\hat\xi_i^2\int_{\R}[G_D(\e y-\e z)-G_D(\e y)]w^2(y)dy+c_\e\sum_{j\neq i}\hat\xi_j^2\int_{-1}^1[G_D(x,\zeta)-G_D(p_i,\zeta)]w_j^2\left(\frac{\zeta}{\e}\right)d\zeta+h.o.t.\\
&=P_i(z)+\e\sum_{j\neq i}\left(\hat\xi_j^2z\nabla_{p_i}G_D(p_i,p_j)+O(\e z^2)\right)+h.o.t.,
\end{aligned}
\end{equation}
where
$$P_i(z)=c_\e\hat\xi_i^2\int_{\R}[G_D(\e y-\e z)-G_D(\e y)]w^2(y)dy~\mbox{is an even function and of order}~\e^{2s-1}.$$

{Next we define}
\begin{equation}
\label{4.9}
S[u]:=(-\Delta)_y^su+u-\frac{u^2}{T[u]},
\end{equation}
{for which we calculate} 
\begin{equation}
\label{4.10}
\begin{aligned}
S[w_{\e,\p}](y)=~&\s_y w_{\e,\p}+w_{\e,\p}-\frac{w_{\e,\p}^2}{T[w_{\e,\p}]}\\
=~&\sum_{j=1}^2\hat\xi_j\chi\left(\frac{\e y-p_j}{r_0}\right)\s w\left(y-\frac{p_j}{\e}\right)+\sum_{j=1}^2\hat\xi_jw_j
-\frac{w_{\e,\p}^2}{T[w_{\e,\p}]}+h.o.t.\\
=~&\left[\sum_{j=1}^2\hat\xi_j{w}_j^2-\frac{(\sum_{j=1}^2\hat\xi_jw_j)^2}{T[w_{\e,\p}]}\right]+h.o.t.\\
=~&E_1+E_2+h.o.t.\quad \mathrm{in}\quad L^2\left(-\frac{1}{\e},\frac{1}{\e}\right),
\end{aligned}
\end{equation}
where
\begin{equation*}
E_1=\sum_{j=1}^2\hat\xi_jw_j^2-\frac{(\sum_{j=1}^2\hat\xi_jw_j)^2}{T[w_{\e,\p}](p_i)},
\quad\mathrm{and}\quad
E_2=\frac{(\sum_{j=1}^2\hat\xi_jw_j)^2}{T[w_{\e,\p}](p_i)}-\frac{(\sum_{j=1}^2\hat\xi_jw_j)^2}{T[w_{\e,\p}](x)}.
\end{equation*}
{Using \eqref{4.7} we calculate}
\begin{equation*}
E_1=\sum_{j=1}^2\hat\xi_jw_j^2-\frac{(\sum_{j=1}^2\hat\xi_jw_j)^2}{T[w_{\e,\p}](p_i)}
=\sum_{j=1}^{2}\left(\hat\xi_j-\frac{\hat\xi_j^{2}}{\hat\xi_i+\vartheta_i}\right)w_j^2
=O(\e^{2s-1})\sum_{j=1}^2\hat\xi_jw_j^2,
\end{equation*}
{and therefore}
\begin{equation}
\label{4.11}
\|E_1\|_{L^2\dd}=O(\e^{2s-1}).
\end{equation}
In addition since $x$ is close to $p_i$ we see that $E_1$ can be decomposed into two parts: one part of order $\e^{2s-1}$ and symmetric in $x-p_i$, and the other part of order $\e$. Next we calculate
\begin{equation}
\label{4.12}
\begin{aligned}
E_2=~&\sum_{j=1}^2\frac{(\hat\xi_jw_j)^2}{(T[w_{\e,\p}](p_i))^2}
(T[w_{\e,\p}](x)-T[w_{\e,\p}](p_i))\left(1+\sum_{n=1}^\infty\left(\frac{T[w_{\e,\p}](p_i)-T[w_{\e,\p}](x)}{T[w_{\e,\p}](p_i)}\right)^n\right)\\
=~&\sum_{j=1}^2\frac{(\hat\xi_jw_j)^2}{(T[w_{\e,\p}](p_i))^2}P_i(z)\left(1+\sum_{n=1}^\infty\left(\frac{P_i(z)}{T[\w](p_i)}\right)^n\right)\\
&+\e\sum_{j=1}^2w_j^2\sum_{l\neq i}\hat\xi_l^2z\nabla_{p_i}G(p_i,p_l)+h.o.t.\\
=~&E_{21}+E_{22}+h.o.t.,
\end{aligned}
\end{equation}
where 
\begin{equation}
\label{4.13}
E_{21}=O(\e^{2s-1})~\mbox{is symmetry in}~x-p_i,~i=1,2,
\quad \mathrm{and}\quad
\|E_{22}\|_{L^2\left(-\frac{1}{\e},\frac{1}{\e}\right)}=O(\e).
\end{equation}
{We have thus established the following lemma.}
\begin{lemma}
	\label{le4.1}
	For $x=p_i+\e z$, $|\e z|<r_0$, we have the decomposition for $S[\w](x)$,
	$$S[\w]=S_{1,1}+S_{1,2},$$
	where
	\begin{equation*}
	S_{1,1}(z)=\e\sum_{j=1}^2w_j^2\sum_{l\neq i}\hat\xi_l^2z\nabla_{p_i}G(p_i,p_l)+h.o.t.,
	\end{equation*}
	and
	\begin{equation*}
	S_{1,2}(z)=\frac{(\hat\xi_iw_i)^2}{(T[\w](p_i))^2}R_i(z)+h.o.t.,
	\end{equation*}
	where $R_i(z)$ is even in $z$ and $\|S_{1,2}\|_{L^2\dd}\leq C\e^{2s-1}$. Furthermore,
	$$S[\w]=h.o.t.\quad \mbox{for}\quad  |x-p_i|\geq r_0,\quad i=1,2.$$
\end{lemma}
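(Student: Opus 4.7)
The plan is to treat Lemma \ref{le4.1} essentially as a bookkeeping consequence of the expansions already carried out in \eqref{4.10}--\eqref{4.13}: the decomposition $S[\w]=S_{1,1}+S_{1,2}$ will be obtained by sorting the terms in $E_1+E_2$ according to their parity in the inner variable $z=(x-p_i)/\e$. Concretely, for $x=p_i+\e z$ with $|\e z|<r_0$, I would proceed by (i) refining the estimate for $E_1$, (ii) using \eqref{4.8} to split $E_2$ into an even-in-$z$ piece coming from the symmetric Green's-function contribution $P_i(z)$ and an odd-in-$z$ piece coming from the first-order Taylor correction $\e z\sum_{l\neq i}\hat\xi_l^2\nabla_{p_i}G_D(p_i,p_l)$, and then (iii) defining $S_{1,2}$ to be the collection of even-in-$z$ terms (of order $\e^{2s-1}$) and $S_{1,1}$ the collection of odd-in-$z$ terms (of order $\e$), absorbing cross terms into $h.o.t.$

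For the near-field step, I would first note that the choice of $r_0$ forces $\chi((\e y-p_j)/r_0)=0$ for $j\neq i$ whenever $|\e y-p_i|<r_0$, so in this region the sums collapse to the single index $i$, with the remaining $w_j$ contributions deferred to $h.o.t.$ because of the algebraic decay of $w$ combined with the cutoff. Using $T[\w](p_i)=\hat\xi_i+\vartheta_i$ with $\vartheta_i=O(\e^{2s-1})$ from \eqref{4.7}, one expands
\begin{equation*}
E_1=\left(\hat\xi_i-\frac{\hat\xi_i^2}{\hat\xi_i+\vartheta_i}\right)w_i^2+h.o.t.=\frac{\hat\xi_i\vartheta_i}{\hat\xi_i+\vartheta_i}w_i^2+h.o.t.,
\end{equation*}
which is manifestly even in $z$ and of order $\e^{2s-1}$. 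Next I would expand $E_{21}$ explicitly using the geometric series from \eqref{4.12}; since $P_i(z)$ is an even function in $z$ by its definition in \eqref{4.8}, every term of the expansion of $E_{21}$ inherits the even-in-$z$ property, producing a factor $R_i(z)$ which is even in $z$. Absorbing $E_1$ into the even-in-$z$ term of $E_{21}$ and factoring out $(\hat\xi_i w_i)^2/(T[\w](p_i))^2$ then yields the claimed form of $S_{1,2}$, with the $L^2\dd$ estimate $O(\e^{2s-1})$ following from the size of $P_i$ and the $L^2$-boundedness of $w_i^2$. Setting $S_{1,1}=E_{22}+h.o.t.$ gives the odd-in-$z$ contribution already computed in \eqref{4.12}.

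For the far-field step $|x-p_i|\ge r_0$ ($i=1,2$), I would observe that in this region the cutoffs make $w_{\e,\p}$ vanish (or be algebraically small), so the pointwise term $w_{\e,\p}-w_{\e,\p}^2/T[\w]$ is negligible, and the only contribution to $S[\w]$ comes from $(-\Delta)^s_y w_{\e,\p}$. Using the integral representation \eqref{eq:frac_lap_def_1}, the nonlocal contribution at a far-field point splits into a local integral on $\{|\e y-p_i|\ge r_0/2\}$, handled using the algebraic decay of $w$, and a nonlocal integral on the spike zones, which decays like $|x-p_i|^{-(1+2s)}$; both are of order $\e^{1+2s}=h.o.t.$ after integrating against the weight coming from $\chi$.

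The main obstacle is not the algebraic manipulation, which is essentially complete in \eqref{4.8}--\eqref{4.13}, but rather the verification that all cross terms in the geometric expansion of $1/T[\w](x)$ around $T[\w](p_i)$ can be cleanly allocated into the two parity classes. The product of the even $P_i(z)$ with the odd Taylor remainder produces a term of order $\e\cdot\e^{2s-1}=\e^{2s}$ which is odd in $z$ but does not match the explicit leading coefficient in $S_{1,1}$; one must verify that this, together with higher powers of $P_i/T[\w](p_i)$, is absorbed into the $h.o.t.$ remainder, which uses crucially $s>\tfrac{1}{4}$ so that $\e^{2s-1}\to 0$. A secondary technical point is controlling the commutator $[\s,\chi(\cdot/r_0)]$ that appears when writing $\s w_i$ in terms of $\s w$; this produces only $h.o.t.$ contributions by the algebraic decay estimates in Proposition \ref{pr3.1}(a), and is the reason the $h.o.t.$ terminology in \eqref{4.2} can be used throughout.
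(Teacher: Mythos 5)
Your proof follows the paper's own route: the lemma is presented in the paper as a direct consequence of the computations in \eqref{4.8}--\eqref{4.13}, and you correctly identify that the job is to sort $E_1 + E_2$ into even-in-$z$ and odd-in-$z$ pieces, placing the even contributions (driven by $P_i(z)$ and the constant error $\vartheta_i$ in $T[\w](p_i)$) into $S_{1,2}$ and the odd $\e z$-linear Green's-function Taylor term into $S_{1,1}$. Your refined computation showing $E_1 = \frac{\hat\xi_i\vartheta_i}{\hat\xi_i+\vartheta_i}w_i^2 + h.o.t.$ near $p_i$ is cleaner than the paper's discussion (it shows $E_1$ is purely even to leading order, so it is absorbed entirely into $S_{1,2}$), and your far-field argument via the tail of the nonlocal integral plus the algebraic decay of $w$ is also the intended one.

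One minor slip: you assert that absorbing the $\e^{2s}$ cross-terms into $h.o.t.$ ``uses crucially $s>\tfrac14$ so that $\e^{2s-1}\to 0$,'' but $\e^{2s-1}\to 0$ requires $s>\tfrac12$, not $s>\tfrac14$. (The threshold $s>\tfrac14$ belongs to the NLEP stability theory of Theorem \ref{th3.stability}, not to this expansion.) The paper's existence proof is carried out with $s_1=s_2=s$ and, since $s_2\in(\tfrac12,1)$, the operative hypothesis is indeed $s>\tfrac12$; this is what guarantees both that $\|S_{1,2}\|_{L^2}=O(\e^{2s-1})$ is small and that the $\e^{2s}$ cross-terms are genuinely subordinate to the $O(\e)$ odd part $E_{22}$. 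With this correction, the proposal is correct and matches the paper's argument.
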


\medskip
\subsection{The Liapunov-Schmidt Reduction Method}
In this subsection, we use the Liapunov-Schmidt reduction method to solve the problem
\begin{equation}
\label{5.1}
S[w_{\e,\p}+\phi]=\sum_{j=1}^2c_j\frac{\partial w_j}{\partial y}
\end{equation}
for real constants $c_j$ and a perturbation $\phi\in H^{2s}\dd$ which is small in the corresponding norm. {To proceed} we {first} need to study the linearized operator
$$\tilde{L}_{\e,\p}\phi:=S_{\e}'[\w]\phi=(-\Delta)_y^s\phi+\phi-2\frac{\w}{T[\w]}\phi+\frac{\w^2}{(T[\w])^2}(T'[\w]\phi).$$
For a given function $\phi\in L^2(\Omega)$ we introduce $T'[\w]\phi$ as the unique solution of
\begin{equation}
\label{5.2}
\begin{cases}
D(-\Delta)^s(T'[\w]\phi)+T'[\w]\phi-2c_\e \w\phi=0,\quad &x\in(-1,1),\\
(T'[\w]\phi)(x)=(T'[\w]\phi)(x+2), &x\in \R.
\end{cases}
\end{equation}

{The approximate kernel and co-kernel are respectively defined by}
\begin{gather*}
\mathcal{K}_{\e,\p}:=\mathrm{Span}\left\{\frac{\partial w_j}{\partial y}\Big| ~j=1,2\right\}\subset H^{2s}\dd,\\
\mathcal{C}_{\e,\p}:=\mathrm{Span}\left\{\frac{\partial w_j}{\partial y}\Big|~ j=1,2\right\}\subset L^2\dd.
\end{gather*}
{From the definition of the linear operator $L$ in \eqref{3.7} we recall that by {Lemma \ref{le3.2}} we know that
	\begin{equation*}
	L:~(X_0\oplus X_0)^\perp\cap (H^{2s}(\R))^2\to
	(X_0\oplus X_0)^\perp\cap (L^2(\R))^2
	\end{equation*}
	is invertible with a bounded inverse.} We shall see that {the linear operator $L$} is a limit of the operator $\tl_{\e,\p}$ as $\e\to0$. {First we } introduce the projection $\pi_{\e,\p}^\perp:L^2\dd\to\mathcal{C}_{\e,\p}^\perp$ and study the operator $L_{\e,\p}:=\pi_{\e,\p}^\perp\circ \tl_{\e,\p}$. Letting $\e\to0$ we shall show that $L_{\e,\p}:\mathcal{K}_{\e,\p}^\perp\to\mathcal{C}_{\e,\p}^\perp$ is invertible with a bounded inverse provided $\e$ is small enough. This result is contained in the following proposition.

\begin{proposition}
	\label{pr5.1}
	There exists positive constants $\overline{\e},\overline{\delta},C$ such that for all $\e\in(0,\overline{\e})$, $(p_1,p_2)\in(-1,1)^2$ with $\min(|1+p_1|,~|1-p_2|,~|p_1-p_2|)>\overline{\delta},$
	\begin{equation*}
	\|L_{\e,\p}\phi\|_{L^2\left(-\frac{1}{\e},\frac{1}{\e}\right)}\geq C\|\phi\|_{H^{2s}\left(-\frac{1}{\e},\frac{1}{\e}\right)}.
	\end{equation*}
	Furthermore, the map
	\begin{equation}
	\label{5.5}
	L_{\e,\p} :~\mathcal{K}_{\e,\p}^\perp\to\mathcal{C}_{\e,\p}^\perp
	\end{equation}
	is surjective.	
\end{proposition}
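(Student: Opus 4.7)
The plan is to establish the coercivity lower bound by a contradiction-and-compactness argument and then to recover surjectivity via a Fredholm index count. Suppose the estimate fails. Then there exist sequences $\varepsilon_n\to 0^+$ and admissible $\p_n\to\p^0$ (after extracting a subsequence), together with $\phi_n\in\mathcal{K}_{\varepsilon_n,\p_n}^\perp$ satisfying $\|\phi_n\|_{H^{2s}(-1/\varepsilon_n,1/\varepsilon_n)}=1$ and $\|L_{\varepsilon_n,\p_n}\phi_n\|_{L^2}\to 0$. Writing $L_{\varepsilon_n,\p_n}\phi_n=\tilde L_{\varepsilon_n,\p_n}\phi_n-\sum_{j=1}^2 c_{j,n}\partial_y w_{j,n}$, the approximate orthogonality of the $\partial_y w_{j,n}$ allows one to test against $\partial_y w_{i,n}$ and conclude $c_{j,n}\to 0$, so the hypothesis reduces to $\tilde L_{\varepsilon_n,\p_n}\phi_n\to 0$ in $L^2$.

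Next I would localise by defining $\phi_{i,n}(z):=\phi_n(z+p_{i,n}/\varepsilon_n)$ for $i=1,2$. By the uniform $H^{2s}$ bound and local compact embedding, a further subsequence satisfies $\phi_{i,n}\rightharpoonup\phi_i$ in $H^{2s}_{\mathrm{loc}}(\mathbb{R})$ and strongly in $L^2_{\mathrm{loc}}(\mathbb{R})$. The decisive step is identifying the weak limit of the nonlocal correction $T'[w_{\varepsilon_n,\p_n}]\phi_n$. Since it solves $D(-\Delta)^s\psi+\psi=2c_{\varepsilon_n}w_{\varepsilon_n,\p_n}\phi_n$ and the right-hand side converges weakly on $(-1,1)$ to
\begin{equation*}
\Bigl(\int_{\mathbb{R}}w^2\,dz\Bigr)^{-1}\sum_{j=1}^2\hat\xi_j^0\Bigl(\int_{\mathbb{R}}w\phi_j\,dz\Bigr)\delta(x-p_j^0),
\end{equation*}
the Green's function representation \eqref{3.green} gives
\begin{equation*}
(T'[w_{\varepsilon_n,\p_n}]\phi_n)(p_{i,n})\longrightarrow\frac{2}{\int_{\mathbb{R}}w^2\,dz}\sum_{j=1}^2 G_D(p_i^0,p_j^0)\hat\xi_j^0\int_{\mathbb{R}}w\phi_j\,dz.
\end{equation*}
Combined with $w_{\varepsilon_n,\p_n}^2/T[w_{\varepsilon_n,\p_n}]^2\to w^2$ in the inner region at $p_i^0$, passing to the limit in $\tilde L_{\varepsilon_n,\p_n}\phi_n\to 0$ shows that $\Phi:=(\phi_1,\phi_2)^T$ satisfies $L\Phi=0$, where $L$ is the operator in \eqref{3.7} and $\mathcal{B}$ is as in \eqref{2.matrixB}.

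By Lemma \ref{le3.1} (invoking (H2)), $\Phi\in X_0\oplus X_0$, so $\phi_i=\alpha_i w'$ for constants $\alpha_i$. The orthogonality $\phi_n\perp\partial_y w_{j,n}$ passes to the limit to give $\alpha_i\int_{\mathbb{R}}(w')^2\,dz=0$, hence $\Phi\equiv 0$. To close the contradiction, the $L^2_{\mathrm{loc}}$ vanishing must be upgraded to $\|\phi_n\|_{H^{2s}}\to 0$: outside a fixed neighbourhood of the spikes, $w_{\varepsilon_n,\p_n}/T[w_{\varepsilon_n,\p_n}]$ is uniformly small by the algebraic decay in Proposition \ref{pr3.1}(a), so there $\tilde L_{\varepsilon_n,\p_n}$ is a coercive perturbation of $(-\Delta)^s+\mathrm{Id}$, yielding outer-region control of $\phi_n$ and contradicting $\|\phi_n\|_{H^{2s}}=1$.

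For surjectivity, $\tilde L_{\varepsilon,\p}$ differs from $(-\Delta)^s+\mathrm{Id}$ by a relatively compact perturbation (the potentials $w_{\varepsilon,\p}/T[w_{\varepsilon,\p}]$ and $w_{\varepsilon,\p}^2/T[w_{\varepsilon,\p}]^2$ are localised near the spikes and $T'[w_{\varepsilon,\p}]$ is smoothing), so $L_{\varepsilon,\p}:\mathcal{K}_{\varepsilon,\p}^\perp\to\mathcal{C}_{\varepsilon,\p}^\perp$ is Fredholm of index zero between subspaces of equal codimension and the injectivity just established implies surjectivity; equivalently, the same contradiction argument applied to the formal adjoint $L_{\varepsilon,\p}^*$, whose limit is the operator $L^*$ of \eqref{3.9} with $\mathrm{Ker}(L^*)=X_0\oplus X_0$ by Lemma \ref{le3.1}, shows that the range of $L_{\varepsilon,\p}$ is dense. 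The \textbf{main obstacle} is the uniform identification of the weak limit of $T'[w_{\varepsilon_n,\p_n}]\phi_n$ near each spike: because the periodic fractional Green's function couples distinct spike locations in a genuinely nonlocal way, one must carefully exploit the algebraic tail estimates in Proposition \ref{pr3.1}(a) and the hypothesis $s>1/4$ to ensure the limit is precisely the linear combination of delta functions that produces the matrix $\mathcal{B}$.
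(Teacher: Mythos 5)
Your proposal is correct and follows essentially the same route as the paper: a contradiction--compactness argument that localises $\phi_n$ near each spike, identifies the limiting system as $L\Phi=0$ with the matrix $\mathcal{B}$, invokes Lemma \ref{le3.1} under (H2) together with the orthogonality to force $\Phi=0$, upgrades to a global contradiction via outer-region coercivity, and obtains surjectivity from injectivity of the adjoint. The extra care you take with the projection constants $c_{j,n}$ and the explicit outer estimate are welcome details the paper leaves implicit, but the argument is the same.
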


\begin{proof}
	The proof follows the standard method of Liaypunov-Schmidt reduction which was also used in \cite{gui_1999,gui_2000,wei_2001_gm_2d_weak,wei_2002_gm_2d_strong,wei_2007_existence}. Suppose the proposition is not true. Then there exist sequences $\{\e_k\},~\{\p^k\},~\phi^k$ {satisfying} $\e_k\to0$  {as $k\rightarrow 0$}, $\p^k\in(-1,1)^2$, $\min(|1+p_1^k|,~|1-p_2^k|,~|p_1^k-p_2^k|)>\ov\delta,$  {and }$\phi^k=\phi_{\e_k}\in\mathcal{K}_{\e_k,\p^k}^\perp$ {for all $k\geq 1$} such that
	\begin{equation}
	\label{5.6}
	\|\phi^k\|_{H^{2s}\dd}=1,\qquad
	\|L_{\e_k,\p^k}\phi^k\|_{L^2\dd}\to0,\quad \mathrm{as}\quad k\to\infty.
	\end{equation}
	We define $\phi_{i}^k,~i=1,2$ and $\phi_{3}^k$ as follows:
	\begin{equation}
	\label{5.7}
	\phi_{i}^k(y)=\phi^k(y)\chi\left(\frac{\e y-p_i}{r_0}\right),~i=1,2,\quad
	\phi_{3}^k(y)=\phi^k(y)-\sum_{i=1}^2\phi_{i}^k(y),\quad
	y\in\dd.
	\end{equation}	
	{Although each} $\phi_{i}^k$ is defined only in $\left(-\frac{1}{\e},\frac{1}{\e}\right)$. By a standard result they can be extended to $\R$ such that their norm in $H^{2s}(\R)$ is still bounded by a constant independent of $\e$ and $\p$ for $\e$ small enough. In the following we shall study the corresponding problem in $\R$. To simplify our notation, we keep the same notation for the extension. Since $\{\phi^k_i\}$ is bounded in $H^{2s}_{\mathrm{loc}}(\R)$ it has a weak limit in $H_{\mathrm{loc}}^{2s}(\R)$ and therefore also a strong limit in $L_{\mathrm{loc}}^2(\R)$ and $L_{\mathrm{loc}}^\infty(\R).$ We denote the limit by $\phi_i$. Then $\Phi=\left(\phi_1,\phi_2\right)^T$  solves the system
	$$L\Phi=0.$$
	By Lemma \ref{le3.1}, $\Phi\in \mathrm{Ker}(L)=X_0\oplus X_0$. Since $\phi^k\perp\mathcal{K}_{\e_k,\p^k}^\perp$, by taking $k\to\infty$ we get $\phi\in(X_0\oplus X_0)^\perp$ and therefore $\phi=0$.
	
	By elliptic estimates we get $\|\phi_{i}^k\|_{H^{2s}(\R)}\to0$ as $k\to {\infty}$ for $i=1,2$. Furthermore, $\phi_3^k\to \phi_{3}$ in $H^{2s}(\R)$, where $\phi_{3}$ solves
	\begin{equation}
	\label{5.8}
	(-\Delta)^s\phi+\phi=0\quad \mathrm{in}\quad \R.
	\end{equation}
	Therefore, we conclude $\phi_{3}=0$ and $\|\phi_{3}^k\|_{H^{2s}(\R)}\to0$ as $k\to+\infty.$ This contradicts $\|\phi^k\|_{H^{2s}\left(-\frac{1}{\e_k},\frac{1}{\e_k}\right)}=1$.To complete the proof of Proposition \ref{pr5.1} we just need to show that the operator which is conjugate to $L_{\e,\p}$ (denoted by $L_{\e,\p}^*$) is injective from $\mathcal{K}_{\e,\p}^\perp$ to $\mathcal{C}_{\e,\p}^\perp$. Note that
	$L_{\e,\p}^*=\pi_{\e,\p}\circ \tl_{\e,\p}^*$ with
	$$\tl_{\e,\p}^*\psi=\s_y\psi+\psi-2\frac{\w}{T[\w]}\psi
	+T'[\w]\left(\frac{\w^2}{(T[\w])^2}\psi\right).$$
	The proof for $L_{\e,\p}^*$ follows exactly the same as the one of $L_{\e,\p}$ and we omit the details.
\end{proof}

Now we are in position to solve the problem
\begin{equation}
\label{5.9}
\pi_{\e,\p}^\perp\circ S_{\e}(w_{\e,\p}+\phi)=0.
\end{equation}
Since $L_{\e,\p}\mid_{\mathcal{K}_{\e,\p}^\perp}$ is invertible (call the inverse $L_{\e,\p}^{-1}$) we can rewrite the above problem as
\begin{equation}
\label{5.10}
\phi=-(L_{\e,\p}^{-1}\circ \pi^\perp_{\e,\p}\circ S_{\e}(w_{\e,\p}))-
(L_{\e,\p}^{-1}\circ \pi^\perp_{\e,\p}\circ N_{\e,\p}(\phi))\equiv M_{\e,\p}(\phi),
\end{equation}
where
$$N_{\e,\p}(\phi)=S_\e(w_{\e,\p}+\phi)-S_{\e}(w_{\e,\p})-S_{\e}'(w_{\e,\p})\phi$$
and the operator $M_{\e,\p}$ is defined by $\phi\in H^{2s}\left(-\frac{1}{\e},\frac{1}{\e}\right)$. We are going to show that the operator $M_{\e,\p}$ is a contraction map on
\begin{equation}
\label{5.11}
B_{\e,\sigma}:=\left\{\phi\in H^{2s}\left(-\frac{1}{\e},\frac{1}{\e}\right)\Big|~ \|\phi_{\e}\|_{H^{2s}\left(-\frac{1}{\e},\frac{1}{\e}\right)}<\sigma\right\}
\end{equation}
if $\sigma$ and $\e$ are small enough. We have by the discussion in last section and Proposition \ref{pr5.1} that
\begin{equation}
\label{5.12}
\begin{aligned}
\|M_{\e,\p}(\phi)\|_{H^{2s}\dd}
\leq~& C\left(\|\pi_{\e,\p}^\perp\circ N_{\e,\p}(\phi)\|_{L^2\dd}+\|\pi_{\e,\p}^\perp\circ S_{\e}(w_{\e,\p})\|_{L^2\dd}\right)\\
\leq~&  C(c(\sigma)\sigma+\varepsilon^{2s-1}),
\end{aligned}
\end{equation}
where $C>0$ is a constant independent of $\sigma>0$, $\e>0$ and $c(\sigma)\to0$ as $\sigma\to0$. Similarly we show that
\begin{equation*}
\|M_{\e,\p}(\phi_1)-M_{\e,\p}(\phi_2)\|_{H^{2s}\dd}\leq C(c(\sigma)\sigma)\|\phi_1-\phi_2\|_{H^{2s}\dd},
\end{equation*}
where $c(\sigma)\to 0$ as $\sigma\to0$. If we choose $\sigma=\e^\alpha$ for $\alpha\leq2s-1$ and $\e>0$ sufficiently small then $M_{\e,\p}$ is a contraction map on $B_{\e,\sigma}$. The existence then follows by the standard the fixed point theorem and $\phi_{\e,\p}$ is a solution to \eqref{5.10}. We thus proved

\begin{lemma}
	\label{le5.2}
	There exists $\ove>0,~\overline{\delta}>0$ such that for every pair of $\e,\p$ with $0<\e<\ove$,  $\p\in(-1,1)^2$, and
	$$\min\left\{1+p_1,~1-p_2,~|p_1-p_2|\right\}>\overline\delta,$$
	there is a unique $\phi_{\e,\p}\in\mathcal{K}_{\e,\p}^\perp$ satisfying $S_{\e}(w_{\e,\p}+\phi_{\e,\p})\in\mathcal{C}_{\e,\p}$. Furthermore, we have the estimate
	$$\|\phi_{\e,\p}\|_{H^{2s}\dd}\leq C\e^\alpha$$
	for any $\alpha\leq2s-1$.	
\end{lemma}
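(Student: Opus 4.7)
My plan is to prove Lemma \ref{le5.2} by applying the Banach fixed point theorem to the operator $M_{\e,\p}$ defined in \eqref{5.10}, working in the closed ball $B_{\e,\sigma}\cap\mathcal{K}_{\e,\p}^\perp$ with $\sigma=\e^\alpha$, $\alpha\leq 2s-1$. The setup is essentially forced on us by Proposition \ref{pr5.1}: since $L_{\e,\p}:\mathcal{K}_{\e,\p}^\perp\to\mathcal{C}_{\e,\p}^\perp$ is invertible with a uniformly bounded inverse, the projected equation $\pi_{\e,\p}^\perp\circ S_\e(\w+\phi)=0$ is equivalent, for $\phi\in \mathcal{K}_{\e,\p}^\perp$, to the fixed point equation $\phi=M_{\e,\p}(\phi)$, where $M_{\e,\p}$ preserves $\mathcal{K}_{\e,\p}^\perp$ by construction.

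The two estimates that drive the argument are:
\begin{equation*}
\|\pi_{\e,\p}^\perp\circ S_\e(\w)\|_{L^2\dd}\leq C\e^{2s-1},
\qquad
\|N_{\e,\p}(\phi)\|_{L^2\dd}\leq c(\sigma)\,\|\phi\|_{H^{2s}\dd},
\end{equation*}
with $c(\sigma)\to 0$ as $\sigma\to 0$. The first estimate is a direct consequence of Lemma \ref{le4.1}: the decomposition $S[\w]=S_{1,1}+S_{1,2}$ gives an $L^2$ error bounded by $C\e^{2s-1}$ (plus the $O(\e)$ contribution from $S_{1,1}$, which is of lower order for $s<1$), and projection onto $\mathcal{C}_{\e,\p}^\perp$ does not increase the norm. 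Combining with the boundedness of $L_{\e,\p}^{-1}$ yields \eqref{5.12}.

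The main technical obstacle is the second estimate, namely controlling the nonlinear remainder
\begin{equation*}
N_{\e,\p}(\phi)=\frac{(\w+\phi)^2}{T[\w+\phi]}-\frac{\w^2}{T[\w]}-\frac{2\w\phi}{T[\w]}+\frac{\w^2}{(T[\w])^2}T'[\w]\phi.
\end{equation*}
The difficulty is that $T$ is a nonlocal operator defined through \eqref{4.3}, so one cannot simply Taylor-expand pointwise. My approach is to write $T[\w+\phi]=T[\w]+T'[\w]\phi+R[\phi]$ and derive from \eqref{4.3} that $R[\phi]$ solves a fractional equation with source $c_\e\phi^2$; standard $L^2\to H^{2s}$ elliptic regularity for $D(-\Delta)^s+1$ on the periodic torus then yields $\|R[\phi]\|_{L^\infty}\leq C\e^{-1}\|\phi\|_{L^2}^2$, where the $\e^{-1}$ comes from $c_\e\sim \e^{-1}$. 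Combined with the Sobolev embedding $H^{2s}\hookrightarrow L^\infty$ (valid since $2s>1/2$, guaranteed by $s>1/4$), the purely algebraic identity
\begin{equation*}
\frac{(a+b)^2}{c+d}-\frac{a^2}{c}-\frac{2ab}{c}+\frac{a^2d}{c^2}=\frac{b^2}{c+d}-\frac{2ab\,d}{c(c+d)}+\frac{a^2 d^2}{c^2(c+d)}-\frac{a^2\,(d-d)}{\cdots}
\end{equation*}
(expanded properly in our setting with $a=\w$, $b=\phi$, $c=T[\w]$, $d=T[\w+\phi]-T[\w]$) shows that $N_{\e,\p}(\phi)$ is quadratic in $(\phi,T[\w+\phi]-T[\w])$, hence controlled by $\|\phi\|_{H^{2s}}^2$ uniformly in $\e$, since $T[\w]\geq c>0$ from the explicit leading order computation \eqref{4.7}. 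The Lipschitz bound $\|N_{\e,\p}(\phi_1)-N_{\e,\p}(\phi_2)\|_{L^2}\leq c(\sigma)\|\phi_1-\phi_2\|_{H^{2s}}$ follows by the same algebraic manipulation applied to differences.

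With these estimates in hand the proof concludes as indicated in \eqref{5.12}: choosing $\sigma=\e^\alpha$ with $\alpha\leq 2s-1$, we obtain $\|M_{\e,\p}(\phi)\|_{H^{2s}}\leq C(c(\sigma)\sigma+\e^{2s-1})\leq \sigma$ and the contraction estimate with factor $Cc(\sigma)<1$ for $\e$ small, so $M_{\e,\p}$ has a unique fixed point in $B_{\e,\sigma}\cap\mathcal{K}_{\e,\p}^\perp$. Uniqueness of $\phi_{\e,\p}$ as a solution of $S_\e(\w+\phi)\in\mathcal{C}_{\e,\p}$ within this ball, together with the bound $\|\phi_{\e,\p}\|_{H^{2s}}\leq C\e^\alpha$, is then automatic from the Banach fixed point theorem.
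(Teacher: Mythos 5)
Your proposal follows the paper's proof of Lemma~\ref{le5.2} essentially exactly: rewrite $\pi^\perp_{\e,\p}\circ S_\e(\w+\phi)=0$ as the fixed-point equation $\phi=M_{\e,\p}(\phi)$ using the bounded inverse from Proposition~\ref{pr5.1}, establish the error estimate \eqref{5.12} from Lemma~\ref{le4.1}, establish the quadratic smallness of $N_{\e,\p}$, and conclude by Banach's fixed-point theorem on $B_{\e,\sigma}$ with $\sigma=\e^\alpha$. The paper only asserts the bound on $N_{\e,\p}(\phi)$; your attempt to actually derive it is where the slips occur.

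There is a concrete error in that derivation that, taken at face value, would destroy the contraction. You claim $\|R[\phi]\|_{L^\infty}\leq C\e^{-1}\|\phi\|_{L^2}^2$ with the $\e^{-1}$ ``from $c_\e\sim\e^{-1}$.'' But writing out the Green's representation from \eqref{4.3},
\begin{equation*}
R[\phi](x)=c_\e\int_{-1}^1 G_D(x,\zeta)\,\phi^2\!\left(\tfrac{\zeta}{\e}\right)d\zeta=\e\,c_\e\int_{-\frac{1}{\e}}^{\frac{1}{\e}}G_D(x,\e y)\,\phi^2(y)\,dy,
\end{equation*}
and $\e c_\e=(\int_\R w^2)^{-1}=O(1)$, so the Jacobian factor $\e$ cancels the $\e^{-1}$ in $c_\e$ and the correct bound is $\|R[\phi]\|_{L^\infty}\leq C\|\phi\|_{L^2}^2$, with no negative power of $\e$. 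This matters: with your stated bound, $N_{\e,\p}$ would contain a contribution of size $\e^{-1}\sigma^2=\e^{2\alpha-1}$, which for any $\alpha\leq 2s-1<1$ does not tend to zero faster than $\sigma$ and the map $M_{\e,\p}$ would fail to send $B_{\e,\sigma}$ into itself. The same $\e c_\e=O(1)$ bookkeeping underlies \eqref{4.6}--\eqref{4.8} and \eqref{7.8}--\eqref{7.9}, and is what makes the whole reduction work; it is worth internalizing. Separately, the displayed ``algebraic identity'' in your proposal is garbled (the term $a^2(d-d)/\cdots$ is a placeholder, not an identity) and your formula for $N_{\e,\p}(\phi)$ carries an overall sign error relative to $S_\e[u]=(-\Delta)^s u+u-u^2/T[u]$, though neither affects the norm estimate. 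With the corrected bound on $R[\phi]$ the conclusion you state --- that $N_{\e,\p}(\phi)$ is $O(\|\phi\|_{H^{2s}}^2)$ uniformly in $\e$ and Lipschitz with small constant on $B_{\e,\sigma}$ --- is right, and the remainder of the argument is exactly that of the paper.
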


More refined estimates for $\phi_{\e,\p}$ are needed. We recall from the discussion in last section that $S[\w]$ can be decomposed into the two parts $S_{1,1}$ and $S_{1,2}$ if $x$ is close to the center of spike, where $S_{1,1}$ is in leading order an odd function and $S_{1,2}$ is in leading order a radially symmetric function. {We can similarly decompose $\phi_{\e,\p}$ as in the following lemma.}

\begin{lemma}
	\label{le5.3}	
	Let $\phi_{\e,\p}$ be defined in Lemma \ref{le5.2}. Then for $x=p_i+\e z$, $|\e z|<\delta$, $i=1,2$, we have the decomposition 
	\begin{equation}
	\label{5.13}
	\phi_{\e,\p}=\phi_{\e,\p,1}+\phi_{\e,\p,2},
	\end{equation}
	where $\phi_{\e,\p,2}$ is an even function in $z$ which satisfies
	\begin{equation}	
	\label{5.14}
	\phi_{\e,\p,2}=O(\e^{2s-1})\quad \mathrm{in}\quad H^{2s}\dd,
	\end{equation}
	and
	\begin{equation}
	\label{5.15}
	\phi_{\e,\p,1}=O(\e)\quad \mathrm{in}\quad H^{2s}\dd.
	\end{equation}
\end{lemma}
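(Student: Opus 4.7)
\textbf{Proof proposal for Lemma \ref{le5.3}.}

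The plan is to revisit the contraction-mapping construction of $\phi_{\e,\p}$ in Lemma \ref{le5.2} and track parity information throughout. The driving observation is the decomposition of the error $S[\w]$ from Lemma \ref{le4.1}: near each spike center $p_i$, in the stretched variable $z = (x-p_i)/\e$, the piece $S_{1,1}$ is (to leading order) odd in $z$ with $\|S_{1,1}\|_{L^2(-1/\e,1/\e)} = O(\e)$, while $S_{1,2}$ is (to leading order) even in $z$ with $\|S_{1,2}\|_{L^2(-1/\e,1/\e)} = O(\e^{2s-1})$. I will use this splitting to run two separate fixed-point iterations, one in an ``even'' class and one in an ``odd'' class.

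First I would introduce parity-adapted subspaces: for $\phi \in H^{2s}(-1/\e,1/\e)$ write $\phi = \phi^e + \phi^o$ where, in each inner window $|x-p_i| < r_0$, the component $\phi^e$ is even in $z=(x-p_i)/\e$ and $\phi^o$ is odd in $z$. The key structural fact is that the linearized operator $\tl_{\e,\p}$, and hence $L_{\e,\p} = \pi_{\e,\p}^\perp \circ \tl_{\e,\p}$, approximately preserves this decomposition. Indeed, the coefficients $\w/T[\w]$ and $\w^2/T[\w]^2$ evaluated at $x=p_i+\e z$ are, by \eqref{4.8} and the evenness of $w_i$, equal to an even function of $z$ plus an $O(\e)$ correction; the fractional Laplacian commutes with the reflection $z\mapsto -z$; and the inhibitor Green's function operator $T'[\w]$ maps even-in-$z$ data near $p_i$ to functions whose restriction to the inner region near $p_i$ is even in $z$ plus an $O(\e)$ correction. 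Therefore $L_{\e,\p}^{-1}\pi_{\e,\p}^\perp$ maps ``even forcings'' to ``even perturbations'' and ``odd forcings'' to ``odd perturbations,'' modulo errors of higher order than the forcings themselves.

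Given this, I would rerun the contraction argument of Lemma \ref{le5.2} as follows. Define
$$\phi_{\e,\p,2} := -L_{\e,\p}^{-1}\pi_{\e,\p}^\perp\bigl(S_{1,2} + N_{\e,\p}(\phi_{\e,\p})^{e}\bigr),$$
where $N_{\e,\p}(\phi_{\e,\p})^{e}$ denotes the even-in-$z$ part of the nonlinear remainder. Since $S_{1,2} = O(\e^{2s-1})$ and $N_{\e,\p}(\phi_{\e,\p}) = O(\|\phi_{\e,\p}\|_{H^{2s}}^2) = O(\e^{2\alpha})$ for $\alpha \le 2s-1$ (which, under $s>1/2$, is $o(\e^{2s-1})$), the same contraction argument used in \eqref{5.12} yields $\|\phi_{\e,\p,2}\|_{H^{2s}} = O(\e^{2s-1})$, and parity preservation of $L_{\e,\p}^{-1}\pi_{\e,\p}^\perp$ forces $\phi_{\e,\p,2}$ to be even in $z$ up to higher order. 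Setting $\phi_{\e,\p,1} := \phi_{\e,\p} - \phi_{\e,\p,2}$, the remainder satisfies a fixed-point equation with forcing equal to $-L_{\e,\p}^{-1}\pi_{\e,\p}^\perp(S_{1,1} + N_{\e,\p}(\phi_{\e,\p})^{o})$ of size $O(\e)$, giving $\|\phi_{\e,\p,1}\|_{H^{2s}} = O(\e)$. Uniqueness of $\phi_{\e,\p}$ in Lemma \ref{le5.2} ensures this decomposition is consistent with the $\phi_{\e,\p}$ already constructed.

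The main obstacle is making ``preserves parity'' quantitatively precise. Two issues complicate the naive symmetry argument: the spike locations $p_1, p_2$ are only near, not equal to, the symmetric configuration $(-1/2,1/2)$, so the nonlocal operators $(-\Delta)^s$ on $(-1,1)$ and $T'[\w]$ couple the two spike windows in a way that is not exactly parity-respecting; and the cutoff $\chi((\e y - p_i)/r_0)$ in \eqref{4.1} introduces small non-even errors. To handle this I would absorb these asymmetric couplings into $h.o.t.$ terms by the same estimates that produced \eqref{4.11}--\eqref{4.13}, using that cross-terms between windows decay like $|p_i - p_j|^{-(1+2s)}$ from Proposition \ref{pr3.1}(a) and thus contribute only at order strictly higher than $\e^{2s-1}$. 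After this bookkeeping, the parity-split fixed point goes through in exactly the form stated.
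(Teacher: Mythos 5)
Your overall strategy---split $S[\w]$ via Lemma \ref{le4.1} into its odd part $S_{1,1}=O(\e)$ and even part $S_{1,2}=O(\e^{2s-1})$, then exploit the approximate parity-preservation of $L_{\e,\p}^{-1}\pi_{\e,\p}^\perp$---is the right idea and aligns with the paper's. However, there is a concrete gap in the estimate $\|\phi_{\e,\p,1}\|_{H^{2s}}=O(\e)$. After setting $\phi_{\e,\p,1}=\phi_{\e,\p}-\phi_{\e,\p,2}$ you assert the forcing $S_{1,1}+N_{\e,\p}(\phi_{\e,\p})^o$ is $O(\e)$, but your own bound gives only $\|N_{\e,\p}(\phi_{\e,\p})\|=O(\e^{4s-2})$, which is \emph{larger} than $\e$ whenever $1/2<s<3/4$; being $o(\e^{2s-1})$ is not enough. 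To make $N^o$ small one must use its structure: the odd-in-$z$ part of a quadratic $N(\phi^e+\phi^o)$ with even coefficients is dominated by the cross term $\phi^e\phi^o$, of size $\|\phi^e\|\,\|\phi^o\|=O(\e^{2s-1})\,\|\phi^o\|$. Bounding this by $O(\e)$ requires already knowing $\|\phi^o\|=O(\e)$---the very conclusion sought. One could close this circle with a finite bootstrap (start from the crude $\|\phi^o\|=O(\e^{2s-1})$, reinsert, gain a factor $\e^{2s-1}$ per iteration, and terminate at $O(\e)$ after finitely many steps since $s>1/2$), but that step is missing from your argument.

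The paper avoids the circularity by solving two \emph{successive nonlinear problems} rather than splitting the already-built $\phi_{\e,\p}$. It first solves $\pi_{\e,\p}^\perp\bigl(S[\w+\phi_{\e,\p,2}]-S[\w]-\sum_j S_{1,2}\bigr)=0$ for $\phi_{\e,\p,2}\in\mathcal{K}_{\e,\p}^\perp$; here the nonlinear remainder is quadratic in $\phi_{\e,\p,2}$ alone, hence $O(\e^{4s-2})=o(\e^{2s-1})$, so the contraction gives $\|\phi_{\e,\p,2}\|=O(\e^{2s-1})$ directly from the size of $S_{1,2}$. It then solves $\pi_{\e,\p}^\perp\bigl(S[\w+\phi_{\e,\p,2}+\phi_{\e,\p,1}]-S[\w+\phi_{\e,\p,2}]-\sum_j S_{1,1}\bigr)=0$; now the nonlinear remainder is quadratic in $\phi_{\e,\p,1}$ alone, hence $O(\e^2)=o(\e)$, so the contraction gives $\|\phi_{\e,\p,1}\|=O(\e)$ with no a priori input on any odd part. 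Adding the two membership conditions and using $S[\w]=S_{1,1}+S_{1,2}$ yields $S[\w+\phi_{\e,\p,1}+\phi_{\e,\p,2}]\in\mathcal{C}_{\e,\p}$, and uniqueness from Lemma \ref{le5.2} then identifies $\phi_{\e,\p}=\phi_{\e,\p,1}+\phi_{\e,\p,2}$. In your version uniqueness is only a consistency check; in the paper it is the load-bearing step that transfers the cleanly obtained estimates on the auxiliary problems back to the decomposition of $\phi_{\e,\p}$.
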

\begin{proof}
	We first solve	
	\begin{equation}
	\label{5.16}
	S[\w+\phi_{\e,\p,2}]-S[\w]-\sum_{j=1}^2S_{1,2}\left(y-\frac{p_j}{\e}\right)\in\mathcal{C}_{\e,\p},
	\end{equation}
	for $\phi_{\e,\p,2}\in\mathcal{K}_{\e,\p}^\perp$. Then we solve
	\begin{equation}
	\label{5.17}
	S[\w+\phi_{\e,\p,2}+\phi_{\e,\p,1}]-S[\w+\phi_{\e,\p,2}]-\sum_{j=1}^2S_{1,1}\left(y-\frac{p_j}{\e}\right)\in\mathcal{C}_{\e,\p},
	\end{equation}
	for $\phi_{\e,\p,1}\in\mathcal{K}_{\e,\p}^\perp$. Using the same proof as in Proposition \ref{pr5.1}, both equations \eqref{5.16} and \eqref{5.17} have unique solution provided $\e\ll1$. By uniqueness, $\phi_{\e,\p}=\phi_{\e,\p,1}+\phi_{\e,\p,2}$, and it is easy to see that $\phi_{\e,\p,1}$ and $\phi_{\e,\p,2}$ have the required properties.	
\end{proof}

\medskip
\subsection{The Reduced Problem}
In this subsection, we solve the reduced problem {which will will complete }the proof of  Theorem \ref{th1.exist}. By Proposition \ref{pr5.1} for every $\p\in B_{\e^{2s-1}}(\p^0)$ there exists an unique solution  $\phi_{\e,\p}\in\mathcal{K}_{\e,\p}^\perp$ such that
\begin{equation}
\label{6.1}
S[w_{\e,\p}+\phi_{\e,\p}]=v_{\e,\p}\in\mathcal{C}_{\e,\p}.
\end{equation}
{To complete the proof of Theorem \ref{th1.exist}} we need to determine $\p^\e=(p_1^\e, p_2^\e)$ near $\p^0$ such that  {$S[w_{\e,\p}+\phi_{\e,\p}]\perp \mathcal{C}_{\e,\p}$, which in turn implies that $S[w_{\e,\p}+\phi_{\e,\p}]=0$.} 
To this end, let $W_{\e}:=(W_{\e,1}(\p), W_{\e,2}(\p)):~B_{\e^{2s-1}}(\p^0)\to\mathbb{R}^2$ where
\begin{equation*}
W_{\e,i}(\p):=\e^{-1}\int_{-\frac{1}{\e}}^{\frac{1}{\e}}S[w_{\e,\p}+\phi_{\e,\p}]\frac{\partial w_i}{\partial y}dy,\qquad i=1,2.
\end{equation*} 
Then $W_{\e}(\p)$ is a map which is continuous in $\p$ and our problem is reduced to finding a zero of the vector field $W_{\e}(\p)$. Let us now calculate $W_{\e}(\p)$
\begin{equation}
\label{6.3}
\begin{aligned}
W_{\e,i}(\p)=~&\e^{-1}\int_{-\frac{1}{\e}}^{\frac{1}{\e}}S_{\e}[w_{\e,\p}+\phi_{\e,\p}]\frac{\partial w_i}{\partial y}dy\\
=~&\e^{-1}\int_{-\frac{1}{\e}}^{\frac{1}{\e}}\left[\s (\w+\phi_{\e,\p})+(\w+\phi_{\e,\p})-\frac{(\w+\phi_{\e,\p})^2}{T[\w]+\psi_{\e,\p}}\right]\frac{\partial w_i}{\partial y}dy\\
=~&\e^{-1}\int_{-\frac{1}{\e}}^{\frac{1}{\e}}\left[\s (\w+\phi_{\e,\p})+(\w+\phi_{\e,\p})-\frac{(\w+\phi_{\e,\p})^2}{T[\w]}\right]\frac{\partial w_i}{\partial y}dy\\
&-\e^{-1}\int_{-\frac{1}{\e}}^{\frac{1}{\e}}\left[\frac{(\w+\phi_{\e,\p})^2}{T[\w]+\psi_{\e,\p}}-\frac{(\w+\phi_{\e,\p})^2}{T[\w]}\right]\frac{\partial w_i}{\partial y}dy\\
=~&I_1+I_2,
\end{aligned}
\end{equation}
where $I_1,~I_2$ are defined by the last equality and $\psi_{\e,\p} $ satisifies
\begin{equation}
\label{6.4}
D(-\Delta)^s\psi_{\e,\p}+\psi_{\e,\p}-2c_\e\w\phi_{\e,\p}-c_\e\phi_{\e,\p}^2=0.
\end{equation}
For $I_1$, we have by Lemma \ref{le5.3}
\begin{equation}
\label{6.5}
\begin{aligned}
I_1=~&\e^{-1}\left(\int_{-\frac{1}{\e}}^{\frac{1}{\e}}\left[\s(\w+\phi_{\e,\p})+(w_{\e,\p}+\phi_{\e,\p})-\frac{(\w+\phi_{\e,\p})^2}{T[\w](p_i)}\right]\frac{\partial w_i}{\partial y}dy\right.\\
&\left.+\int_{-\frac{1}{\e}}^{\frac{1}{\e}}\frac{(\w+\phi_{\e,\p})^2}{(T[\w](p_i))^2}(T[\w](p_i+\e y)-T[\w](p_i))\frac{\partial w_i}{\partial y}dy\right)+O(\e^{2s-1})\\
=~&\e^{-1}\left(\int_{-\frac{1}{\e}}^{\frac{1}{\e}}\left[\s(\hat\xi_iw_i+\phi_{\e,\p})+(\hx_iw_i+\phi_{\e,\p})-\frac{(\hx_iw_i+\phi_{\e,\p})^2}{T[\w](p_i)}\right]\frac{\partial w_i}{\partial y}dy\right)\\
&+\e^{-1}\left(\int_{-\frac{1}{\e}}^{\frac{1}{\e}}
\frac{(\hx_iw_i+\phi_{\e,\p,2})^2}{(T[\w](p_i))^2}
(T[\w](p_i+\e y)-T[\w](p_i))\frac{\partial w_i}{\partial y}dy
\right)+O(\e^{2s-1}).	
\end{aligned}
\end{equation}
Note that, by Lemma \ref{le5.3}, we have
\begin{equation}
\label{6.6}
\int_{-\frac{1}{\e}}^{\frac{1}{\e}}[\s\phi_{\e,\p}+\phi_{\e,\p}-2w_{i}\phi_{\e,\p}]\frac{\partial w_i}{\partial y}dy
=\int_{-\frac{1}{\e}}^{\frac{1}{\e}}\phi_{\e,\p,1}\frac{\partial}{\partial y}\left(\s w_i+w_i-w_i^2\right)dy+O(\e^{1+2s})=O(\e^{1+2s}),
\end{equation}
and
\begin{equation}
\label{6.7}	
\int_{-\frac{1}{\e}}^{\frac{1}{\e}}\phi_{\e,\p}^2\frac{\partial w_i}{\partial y}dy
=\int_{-\frac{1}{\e}}^{\frac{1}{\e}}\phi_{\e,\p,1}\phi_{\e,\p,2}\frac{\partial w_i}{\partial y}dy+h.o.t.=O(\e^{2s}).
\end{equation}
Now by Lemma \ref{le5.3} and equations \eqref{6.5} and \eqref{6.6}  we have
\begin{equation}
\label{6.8}
\begin{aligned}
I_1=~&\e^{-1}\int_{-\frac{1}{\e}}^{\frac{1}{\e}}w_i^2(T[\w](p_i+\e z)-T[\w](p_i))\frac{\partial w_i}{\partial y}dy+O(\e^{2s-1})\\	
=~&\e^{-1}\int_{-\frac{1}{\e}}^{\frac{1}{\e}}	
w_i^2\left(P_i(z)+\e\sum_{j\neq i}\hat\xi_j^2z\nabla_{p_i}G_D(p_i,p_j)\right)\frac{\partial w_i}{\partial y}dy+O(\e^{2s-1})\\
=~&-\frac13\int_{\R}w^3(y)dy\sum_{j\neq i}\hat\xi_j^2\nabla_{p_i}G_D(p_i,p_j)+O(\e^{2s-1}).	
\end{aligned}
\end{equation}
Similarly, we calculate
\begin{equation}
\label{6.9}
\begin{aligned}
I_2=~&\e^{-1}\int_{-\frac{1}{\e}}^{\frac{1}{\e}}
\left[\frac{(\w+\phi_{\e,\p})^2}{T[\w]+\psi_{\e,\p}}-\frac{(\w+\phi_{\e,\p})^2}{T[\w]}\right]\frac{\partial w_i}{\partial y}dy\\
=~&-\e^{-1}\int_{-\frac{1}{\e}}^{\frac{1}{\e}}
\frac{(\w+\phi_{\e,\p})^2}{(T[\w])^2}\psi_{\e,\p}\frac{\partial w_i}{\partial y}dy+O(\e^{2s-1})\\
=~&-\e^{-1}\hx_i\int_{-\frac{1}{\e}}^{\frac{1}{\e}}\frac13\frac{\partial w^3}{\partial y}(\psi_{\e,\p}-\psi_{\e,\p}(p_i))dy+O(\e^{2s-1}).
\end{aligned}
\end{equation}
Since $\psi_{\e,\p}$ satisifies \eqref{6.4},  {a similar argument to that used in Lemma \ref{le5.3} gives}
\begin{equation}
\label{6.10}
\begin{aligned}	
\psi_{\e,\p}(p_i+\e z)-\psi_{\e,\p}(p_i)
=~&c_\e\int_{-1}^{1}\left(G_D(p_i+\e z,\zeta)-G_D(p_i,\zeta)\right)\left(2\w\left(\frac{\zeta}{\e}\right) \phi_{\e,\p}\left(\frac{\zeta}{\e}\right)+\phi^2_{\e,\p}\left(\frac{\zeta}{\e}\right)\right)d\zeta\\
=~&o\left(\e\sum_{j\neq i}\hx_j^2z\nabla_{p_i}G_D(p_i,p_j)\right)+\hat P_i(z)+h.o.t.,
\end{aligned}
\end{equation}
where $\hat P_i(z)$ is an even function in $z=y-\frac{p_i}{\e}$. Substituting \eqref{6.10} into \eqref{6.9} we obtain that
\begin{equation}
\label{6.11}
I_2=o( \sum_{j\neq i} \hx_j^2\nabla_{p_i}G_D(p_i,p_j))+o(\e^{2s-1}).
\end{equation}
Combining the estimates for $I_1$ and $I_2$, we obtain
\begin{equation}
\label{6.12}
W_{\e,i}(\p)=-\frac13 \int_{\R}w^3(y)dy\sum_{j\neq i}\hx_j^2\nabla_{p_i}G_D(p_i,p_j)(1+o(1))+O(\e^{2s-1})
=-\frac13F_i(\p)\int_{\R}w^3(y)dy +O(\e^{2s-1}),
\end{equation}
where $F_i(\p)$ is defined in \eqref{2.f}. {From (H3) we have $F(\p^0)=0$ and from numerial calculations of the Green's function we deduce that $p_2^0-p_1^0=1$.}  By symmetry we conclude that if there exists $\p =(p_1 ,p_2 )$ such that either one of  {$W_{\e,1}(\p )=0$ or $W_{\e,2}(\p )=0$} then $W_\e(\p )=0.$ For $W_{\e,i}$ we have
\begin{align*}
W_{\e,i}(p)=&-\frac13\int_{\R}w^3(y)dy\left((p_1-p_1^0)\hx_2^2\nabla_{p_1}\nabla_{p_1}G_D(p_1^0,p_2^0)
+(p_2-p_2^0)\hx_2^2\nabla_{p_2}\nabla_{p_1}G_D(p_1^0,p_2^0)\right)\\
&+O(|\p-\p^0|^2+\e^{2s-1}).
\end{align*}
By assumption (H3) we have $\nabla_{p_1}\nabla_{p_1}G_D(p_1^0,p_2^0)\neq 0$. As a consequence, we can  apply Brouwer's fixed point theorem to show that for $\e\ll1$ there exists a point $\p^\e$ such that $W_{\e}(\p^\e)=0$ and $\p^\e\in B_{\e^{2s-1}}(\p^0).$ Thus we have proved the following proposition

\begin{proposition}
	\label{pr6.1}
	For $\e$ sufficiently small there exist points $\p^\e$ with $\p^\e\to\p^0$ such that $W_{\e}(\p^\e)=0.$
\end{proposition}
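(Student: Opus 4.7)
\textbf{Proof Proposal for Proposition \ref{pr6.1}.}

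The plan is to find a zero of the continuous vector field $W_\e$ on the closed ball $\overline{B_{\e^{2s-1}}(\p^0)}$ via a topological fixed-point argument, combining the leading-order expansion
\begin{equation*}
W_{\e,i}(\p) = -\tfrac{1}{3}\Bigl(\textstyle\int_{\R} w^3 dy\Bigr) F_i(\p) + O(\e^{2s-1}),
\end{equation*}
established above, with the vanishing $F(\p^0)=0$ from (H3) and with the reflection symmetry noted in the preamble to the proposition.

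First I would Taylor expand $F_i(\p)$ about $\p^0$. Using $F(\p^0)=0$ and the identification of $\nabla F_i(\p^0)$ with the $i$-th row of $\mathcal{H}(\p^0)\mathcal{M}(\p^0)$, this yields
\begin{equation*}
W_\e(\p) = -\tfrac{1}{3}\Bigl(\textstyle\int_{\R}w^3 dy\Bigr)\,\mathcal{H}(\p^0)\mathcal{M}(\p^0)(\p-\p^0) + O(|\p-\p^0|^2 + \e^{2s-1}).
\end{equation*}
By (H3), $\mathcal{M}(\p^0)$ has rank one: a zero eigenvalue arising from translational invariance of the spike configuration, together with a single nonzero eigenvalue $\lambda_\mathcal{M}$ given by \eqref{2.me}, which is nonzero because $\nabla_{p_1}\nabla_{p_1}G_D(p_1^0,p_2^0)\neq 0$.

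Next I would exploit the symmetry stated just before the proposition: the invariance $x\mapsto -x$ about the midpoint of the two spikes implies that any $\p$ with $W_{\e,1}(\p)=0$ automatically satisfies $W_{\e,2}(\p)=0$. This reduces the task to finding a root of the scalar function $W_{\e,1}$ along the 1-parameter family of admissible symmetric configurations in $B_{\e^{2s-1}}(\p^0)$. Parametrizing this family by a signed scalar $t$ transverse to $\mathrm{Ker}(\mathcal{M}(\p^0))$, the restriction becomes
\begin{equation*}
W_{\e,1}(\p(t)) = c\,\lambda_\mathcal{M}\,t + O(t^2 + \e^{2s-1}),
\end{equation*}
with $c\neq 0$. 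On an interval $|t|\leq C_0\e^{2s-1}$ chosen large enough that the linear term strictly dominates the error on the endpoints, the intermediate value theorem produces a root $t^\e$; setting $\p^\e := \p(t^\e)$ and invoking the symmetry gives $W_\e(\p^\e)=0$ with $\p^\e\in B_{\e^{2s-1}}(\p^0)$, hence $\p^\e\to\p^0$ as $\e\to 0$.

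The main technical obstacle is controlling the error $O(|\p-\p^0|^2 + \e^{2s-1})$ against the linear term on the boundary of the chosen ball, since both are nominally of the same order $\e^{2s-1}$. A clean way to proceed is to track the explicit constant in front of $\lambda_\mathcal{M}$ and then choose the radius of the 1D interval to be a sufficiently large multiple of $\e^{2s-1}/|\lambda_\mathcal{M}|$ so that the error bound on the boundary is strictly smaller in absolute value than the linear contribution. An equivalent approach is to apply Brouwer's fixed-point theorem to the map $\p\mapsto \p - \gamma W_\e(\p)$ on the restricted slice, with $\gamma$ chosen to make the linearization a contraction; either route ultimately relies on the strict non-degeneracy of $\lambda_\mathcal{M}$ coming from (H3).
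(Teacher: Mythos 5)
Your proposal is correct and follows essentially the same route as the paper: the leading-order identification $W_{\e,i}(\p)=-\tfrac13\bigl(\int_\R w^3\,dy\bigr)F_i(\p)+O(\e^{2s-1})$, the symmetry reduction to a single scalar equation, the linearization about $\p^0$ using $F(\p^0)=0$ and the non-degeneracy $\nabla_{p_1}\nabla_{p_1}G_D(p_1^0,p_2^0)\neq 0$ from (H3), and a topological (IVT/Brouwer) conclusion. Your explicit remark about enlarging the radius of the one-dimensional slice to a sufficiently large multiple of $\e^{2s-1}$ so that the linear term dominates the $O(\e^{2s-1})$ residual is a point the paper glosses over, and is a worthwhile addition rather than a deviation.
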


\begin{proof}[Proof of Theorem \ref{th1.exist}.]
	By above Proposition, there exists $\p^\e\to\p^0$ such that $W_\e(\p^\e)=0$. In other words, $S[w_{\e,\p^\e}+\phi_{\e,\p^\e}]=0$. Let $u_{\e}=c_\e(w_{\e,\p^\e}+\phi_{\e,\p^\e})$, $v_{\e}=c_{\e}T[w_{\e,\p^\e}+\phi_{\e,\p^\e}]$. By the Maximum principle, $u_\e>0$ and $v_\e>0$. Moreover $(u_\e,v_\e)$ satisfies all the properties of Theorem \ref{th1.exist}.
\end{proof}

\vspace{0.5cm}
\section{Rigorous proof of the stability analysis}\label{sec:proof-stability}
{The linear stability of the two-spike solution constructed above is determined by two classes of eigenvalues: the large and small eigenvalues satisfying $\lambda_\varepsilon=O(1)$ and $\lambda_\varepsilon\rightarrow 0$ as $\varepsilon\rightarrow0$ respectively. In the following two subsections we consider each case separately.} 
\subsection{Stability Analysis: Large Eigenvalues}
In this subsection, we consider the stability of the steady state $(u_\e,v_\e)$ constructed in Theorem \ref{th1.exist}. Linearizing around the equilibrium states,
\begin{equation}
\label{7.1}
u=u_\e+\phi_\e(x)e^{\lambda_\e t},\quad v=v_{\e} +\psi_{\e}e^{\lambda_\e t}=T[u_{\e}]+\psi_{\e}e^{\lambda_\e t},		
\end{equation}
and substituting the result into (GM) we deduce the following eigenvalue problem
\begin{equation}
\label{7.2}
\begin{cases}
\s_y\phi_\e+\phi_\e-2\frac{u_\e}{T[u_\e]}\phi_\e+\frac{u_\e^2}{(T[u_\e])^2}\psi_\e+\lambda_\e\phi_\e=0,\\
D\s\psi_\e+\psi_\e-2c_\e u_\e\phi_\e+\tau\la_\e\psi_\e=0,
\end{cases}
\end{equation}
where $\la_\e$ is some complex number. In this section, we study the large eigenvalues, i.e. {those for which we may} assume that there exists $c>0$ such that $|\la_\e|\geq c>0$ for $\e$ small. If $\Re(\la_\e)<-c$ then  we are done (since these eigenvalues are always stable) and we therefore assume that $\Re(\la_\e)\geq-c$. {For a subsequence $\varepsilon\rightarrow 0$ and $\lambda_\varepsilon\rightarrow \lambda_0$ we shall derive a limiting NLEP satisfied by $\lambda_0$.}

We first present the case $\tau=0$. At the end, we shall explain how we proceed  {when $\tau>0$ is sufficiently small}. By the second equation of \eqref{7.2}, we have $\psi_\e=T'[u_\e](\phi_\e)$. Let us assume that  {$\|\phi_\e\|_{H^{2s}\dd}=1$ and cut off 
	We cut off $\phi_\e$ as follows:
	\begin{equation}
	\label{7.4}	
	\phi_{\e,i}(y)=\phi_\e(y)\chi(\frac{\e y-p_i^\e}{r_0}),
	\end{equation}
	where $\chi(x)$ is a given in \eqref{4.1} of \S\ref{sec:proof-existence}. Using Lemma \ref{le5.3}  {together with}  $\Re(\la_\e)\geq-c$, the asymptotic expansion of $u_\e$ given in Theorem \ref{th1.exist}, and the algebraic decay of $w$  {given in Proposition \ref{pr3.1}}, we get that
	\begin{equation}
	\label{7.5}
	\phi_\e=\sum_{i=1}^2\phi_{\e,i}+h.o.t.. \qquad \mathrm{in}~H^{2s}\left(-\frac{1}{\e},\frac{1}{\e}\right).
	\end{equation}
	Then by standard procedure we extend $\phi_{\e,i}$ to a function defined on $\R$ such that
	\begin{equation}
	\label{7.6}	
	\|\phi_{\e,i}\|_{H^{2s}(\R)}\leq C\|\phi_{\e,i}\|_{H^{2s}\dd},\quad i=1,2.	
	\end{equation}
	Without loss of generality we may assume that $\|\phi_\e\|_{H^{2s}(\R)}=1$ and by taking a subsequence of $\e$, we may also assume that $\phi_{\e,i}\to\phi_i$ {strongly }as $\e\to0$ in $L^2\cap L^\infty$ for $i=1,2$, on compact subsets of $\R$. Therefore we also have
	\begin{equation}
	\label{7.7}
	w\phi_{\e,i}\to w\phi_i~\ \mathrm{as}~\ \e\to0, \quad \mathrm{strongly~in}~L^\infty(\R).
	\end{equation}
	
	It is known that
	\begin{equation}
	\label{7.8}
	\psi_\e(x)=c_\e\int_{-1}^1G_D(x,\zeta)u_\e\left(\frac{\zeta}{\e}\right)\phi_\e\left(\frac{\zeta}{\e}\right)d\zeta.
	\end{equation}
	Now we use the expansion of $u_\e$ to calculate the value of $\psi_\e$ at $x=p_i^\e$ {for each $i=1,2$}
	\begin{equation}
	\label{7.9}
	\begin{aligned}
	\psi_\e(p_i^\e)=~&2c_\e\int_{-1}^1G_D(p_i^\e,\zeta)\sum_{j=1}^2\hx_jw\left(\frac{\zeta-p_j^\e}{\e}\right)\chi\left(\frac{\e\zeta-p_j}{r_0}\right)\phi_\e\left(\frac{\zeta}{\e}\right)d\zeta+h.o.t.\\
	=~&2\e c_\e \sum_{j=1}^2\hat\xi_jG_D(p_i,p_j)\int_{\R}w\phi_jdy+o_\e(1).
	\end{aligned}	
	\end{equation}
	Substituting \eqref{7.9} into the first equation of \eqref{7.2} and letting $\e\to0$, we obtain the nonlocal eigenvalue problem
	\begin{equation}
	\label{7.10}
	\s\phi_i+\phi_i-2w\phi_i+2\left(\int_{\R}w^2(y)dy\right)^{-1}\left(\int_{\R}\sum_{j=1}^2\hat\xi_jG_D(p_i,p_j)w\phi_jdy\right)w^2+\la_0\phi_i=0,\quad i=1,2.
	\end{equation}
	We can rewrite \eqref{7.10} in matrix form as
	\begin{equation}
	\label{7.11}
	\s\Phi+\Phi-2w\Phi+2\left(\int_{\R}w^2(y)dy\right)^{-1}\left(\int_{\R}w\mathcal{B}\Phi dy\right)w^2+\la_0\Phi=0
	\end{equation}
	where $\mathcal{B}$ is the matrix introduced in \eqref{2.matrixB} and {$\Phi=(\phi_1,\phi_2)^T\in (H^{2s}(\R))^2$. We then have the following conclusion} 
	
	\begin{theorem}
		\label{th7.1}
		Let $\la_\e$ be an eigenvalue of \eqref{7.2} such that $\Re(\la_\e)>-c$ for some $c>0.$
		\begin{enumerate}
			\item [(1)] Suppose that for suitable sequences $\e_n\to0$ we have $\la_{\e_n}\to\la_0\neq 0$. Then $\la_0$ is an eigenvalue of the problem given in \eqref{7.10}.
			
			\item [(2)] Let $\la_0\neq 0$ with $\Re(\lambda_0)>0$ be an eigenvalue of the problem given in \eqref{7.10}. Then for $\e$ sufficiently small, there is an eigenvalue $\la_\e$ of \eqref{7.1} with $\la_\e\to\la_0$ as $\e\to0.$
		\end{enumerate}
	\end{theorem}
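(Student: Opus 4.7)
My plan is to treat the two parts of Theorem \ref{th7.1} by the standard pair of arguments that connect the linearization at a concentrating multi-spike profile with its limit vector-valued NLEP.

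For part (1) the bulk of the work has already been carried out in the discussion preceding the statement: the cutoff decomposition \eqref{7.5}, the extension \eqref{7.6}, the local compactness yielding $\phi_{\varepsilon,i}\to\phi_i$ in $L^2_{\mathrm{loc}}\cap L^\infty_{\mathrm{loc}}$, and the Green's representation \eqref{7.8} that produces the expansion \eqref{7.9} of $\psi_\varepsilon(p_i^\varepsilon)$. The only substantial point left is to verify that the limit $\Phi=(\phi_1,\phi_2)^T$ is nontrivial. I would argue by contradiction: suppose $\phi_1\equiv\phi_2\equiv 0$. Then in the outer region where $w_i$ decays algebraically, the first equation of \eqref{7.2} with $\Re(\lambda_\varepsilon)\geq -c$ reduces, modulo $h.o.t.$, to $(-\Delta)^s\phi_\varepsilon+(1+\lambda_\varepsilon)\phi_\varepsilon=0$, and elliptic estimates for the fractional Laplacian on the periodic cell force $\phi_\varepsilon\to 0$ in $H^{2s}(\mathbb R)$, contradicting $\|\phi_\varepsilon\|_{H^{2s}(\mathbb R)}=1$. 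Passing to the limit termwise in \eqref{7.2} using \eqref{7.7} and \eqref{7.9} then gives \eqref{7.10}, so $\lambda_0$ is an eigenvalue of \eqref{7.11}.

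For part (2) I would use a Lyapunov--Schmidt reduction analogous to the one used for the existence proof. Given an eigenpair $(\lambda_0,\Phi_0)$ of \eqref{7.11} with $\Re(\lambda_0)>0$ (in particular $\lambda_0\notin\sigma(L|_{(X_0\oplus X_0)^\perp})\setminus\{\lambda_0\}$), I would form the approximate eigenfunction
\begin{equation*}
\tilde\phi_\varepsilon(y)=\sum_{i=1}^2 \phi_{0,i}\bigl(y-p_i^\varepsilon/\varepsilon\bigr)\,\chi\bigl((\varepsilon y-p_i^\varepsilon)/r_0\bigr),
\end{equation*}
together with $\tilde\psi_\varepsilon=T'[u_\varepsilon]\tilde\phi_\varepsilon$, and write $\phi_\varepsilon=\tilde\phi_\varepsilon+\phi_\varepsilon^\perp$ with $\phi_\varepsilon^\perp$ in the orthogonal complement of $\mathcal K_{\varepsilon,\mathbf p^\varepsilon}\oplus\mathrm{span}\{\tilde\phi_\varepsilon\}$. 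Substituting into \eqref{7.2} and projecting onto this complement gives an equation
\begin{equation*}
(L_{\varepsilon,\mathbf p^\varepsilon}+\lambda) \phi_\varepsilon^\perp = R_\varepsilon(\lambda)\qquad\text{in }\mathcal K_{\varepsilon,\mathbf p^\varepsilon}^\perp\cap\mathrm{span}\{\tilde\phi_\varepsilon\}^\perp,
\end{equation*}
where $R_\varepsilon(\lambda)=o(1)$ in $L^2$ as $\varepsilon\to 0$ and $\lambda\to\lambda_0$. Since the restriction of $L+\lambda_0$ to $(X_0\oplus X_0)^\perp\cap\mathrm{span}\{\Phi_0\}^\perp$ is invertible by the Fredholm alternative (Lemma \ref{le3.1} together with the hypothesis that $\lambda_0$ is a simple eigenvalue, the generic case; higher multiplicity is handled by enlarging the projected-out subspace), the perturbation of Proposition \ref{pr5.1} yields invertibility of $L_{\varepsilon,\mathbf p^\varepsilon}+\lambda$ on the corresponding complement with a uniform bound. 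A contraction mapping argument, exactly parallel to \eqref{5.10}--\eqref{5.12}, then solves for $\phi_\varepsilon^\perp=\phi_\varepsilon^\perp(\lambda)$. The remaining one-dimensional projection onto $\mathrm{span}\{\tilde\phi_\varepsilon\}$ becomes a scalar equation $g_\varepsilon(\lambda)=0$ with $g_\varepsilon(\lambda_0)=o(1)$ and $g_\varepsilon'(\lambda_0)\to c\neq 0$ (equal, up to a non-zero constant, to $\|\Phi_0\|_{L^2}^2$); the implicit function theorem produces a zero $\lambda_\varepsilon\to\lambda_0$.

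The main obstacle is verifying the uniform invertibility claim used in part (2): one must rule out that the invertibility of $L_{\varepsilon,\mathbf p^\varepsilon}$ on $\mathcal K_{\varepsilon,\mathbf p^\varepsilon}^\perp$ provided by Proposition \ref{pr5.1} is destroyed by adding $\lambda_0 I$ and restricting further to $\mathrm{span}\{\tilde\phi_\varepsilon\}^\perp$. This is a uniform spectral-gap statement that follows, as in \cite{wei_1998,wei_2007_existence}, from the combination of Lemma \ref{le3.1}, the compactness of the perturbation $\tilde L_{\varepsilon,\mathbf p}-L$ on bounded sets, and the fact that by construction $\lambda_0$ is the only eigenvalue of $L$ in a small disk around itself. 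The secondary technical point is that for $\tau>0$ small the Green's operator $G_D^\lambda$ picks up $\tau\lambda$ dependence, but by Theorem \ref{th3.2} the perturbation argument goes through uniformly on the bounded set $\{|\lambda|\leq C\}$ guaranteed by the energy estimates in the proof of Theorem \ref{th3.2}.
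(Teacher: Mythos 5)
Your part (1) follows the paper's route: the cut-off decomposition \eqref{7.5}, the extension \eqref{7.6}, the compactness giving $\phi_{\varepsilon,i}\to\phi_i$, and the Green's-function expansion \eqref{7.9} are exactly the paper's steps (the paper simply says this parallels \S\ref{sec:proof-existence}). You usefully spell out the nontriviality of the limit $\Phi$, which the paper leaves implicit; the outer-region argument is a little terse because the fractional Laplacian couples inner and outer regions, but the idea (if $\phi_1=\phi_2=0$ then all nonlocal terms vanish and invertibility of $L_0+\lambda_0$ for $\lambda_0\neq 0$ forces $\phi_\varepsilon\to 0$ in $H^{2s}$) is the right one.

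Your part (2), however, departs from the paper in a way that both is genuinely different and contains a gap. The paper rewrites the first equation of \eqref{7.2} as a compact fixed-point equation $\phi_\varepsilon=-R_\varepsilon(\lambda_\varepsilon)\bigl[2v_\varepsilon^{-1}u_\varepsilon\phi_\varepsilon-v_\varepsilon^{-2}u_\varepsilon^2\psi_\varepsilon\bigr]$, with $R_\varepsilon(\lambda)=((-\Delta)^s+1+\lambda)^{-1}$, and invokes the operator-convergence and Fredholm-stability argument of \cite{dancer_2001_hopf}. That route does not need any simplicity assumption on $\lambda_0$. Your Lyapunov--Schmidt alternative is valid in principle, but as written it treats $L+\lambda_0$ as if it were self-adjoint. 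It is not: the operator $L$ in \eqref{3.7} has adjoint $L^*$ in \eqref{3.9} built from $\mathcal{B}^T$ with the roles of $w$ and $w^2$ swapped, and in general $\mathrm{Ker}(L+\lambda_0)\neq\mathrm{Ker}(L^*+\bar\lambda_0)$. Consequently the restriction of $L+\lambda_0$ to $(X_0\oplus X_0)^\perp\cap\mathrm{span}\{\Phi_0\}^\perp$ (domain) mapped into the same space (codomain) is \emph{not} invertible; by the Fredholm alternative you must project the range against the adjoint eigenfunction $\Psi_0$ of $L^*+\bar\lambda_0$, not against $\Phi_0$. Relatedly, the leading-order derivative of the reduced bifurcation function is proportional to $\int_{\mathbb{R}}\Phi_0\,\overline{\Psi_0}\,dy$, not to $\|\Phi_0\|_{L^2}^2$; this quantity is nonzero precisely when $\lambda_0$ is \emph{algebraically} simple, which geometric simplicity alone does not guarantee. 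Both fixes are standard, but without them the contraction step and the application of the implicit function theorem do not go through, so this part of your argument needs to be repaired (or, more simply, replaced by the Dancer argument as in the paper).
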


	\begin{proof}
		The proof of (1) follows {from a similar asymptotic analysis to that used in \S\ref{sec:proof-existence}}.
		
		To prove part (2) of Theorem \ref{th7.1} we follow the argument given in \S2 of \cite{dancer_2001_hopf}.  {We assume that $\la_0\neq 0$ is an eigenvalue of problem \eqref{7.10} with $\Re(\la_0)>0$ and} we first note that from the equation for $\psi_\e$, we can express $\psi_\e$ in terms of $\phi_\e$ as in \eqref{7.8}. {Then we} rewrite the first equation \eqref{7.2} as  
		\begin{equation*}
		\phi_\e=-R_{\e}(\la_\e)\left[2\frac{u_\e\phi_\e}{v_\e}-\frac{u_\e^2}{v_{\e}^2}\psi_{\e}\right],
		\end{equation*}	
		where $R_\e(\la_{\e})$ is the inverse of $(-\Delta)^s+(1+\la_{\e})$ in $H^{2s}(\R)$ and $\psi_{\e}=T'_\e[u_\e](\phi_{\e})$ is given in the second equation of \eqref{7.2}. The {key observation} is that $R_\e(\la_\e)$ is a Fredholm type operator if $\e$ is sufficiently small. The rest of the argument follows as in \cite{dancer_2001_hopf}.	
	\end{proof}

	{By diagonalizing $\mathcal{B}$} we see that the eigenvalue problem \eqref{7.11} can be reduced to the nonlocal eigenvalue problems
	\begin{equation}
	\label{7.12}
	\s\hat\phi_i+\hat\phi_i-2w\hat\phi_i+2\sigma_i\frac{\int_{\R}w\hat\phi_idy}{\int_{\R}w^2dy}w^2+\lambda_0\hat\phi_i=0,\quad \hat\phi_i\in H^{2s}(\R), \quad i=1,2,	
	\end{equation}
	where {$\sigma_1$ and $\sigma_2$} are the two eigenvalues of $\mathcal{B}$.
	
	We now study the stability of \eqref{7.2} for large eigenvalues explicitly. Suppose that 
	\begin{equation}
	\label{7.13}	
	2\min_{\sigma\in\lambda(\mathcal{B})}\sigma<1.
	\end{equation}
	Then by Theorem \ref{th3.stability}-{(1)} there exists an {unstable } eigenvalue of \eqref{7.11} and  {therefore} by Theorem \ref{th7.1} there exists an eigenvalue $\la_\e$ of \eqref{7.2} such that $\Re(\la_\e)>c_0$ for some positive number $c_0$. This implies that $(u_\e,v_\e)$ is unstable. {On the other hand if $2\min_{\sigma\in\lambda(\mathcal{B})}\sigma>1$} {then} by Theorem \ref{th3.stability}-{(2)} any nonzero eigenvalue $\la_0$ is stable. Therefore by Theorem \ref{th7.1} for $\e$ small enough all nonzero eigenvalues $\la_\e$ of \eqref{7.2} for which $|\la_\e|\geq c>0$ holds, satisfy $\Re(\la_\e)\leq -c<0$ for $\e$ small enough. 
	
	{Finally we comment that when $\tau\neq 0$ and $\tau$ is small.} We shall apply the results of Theorem \ref{th3.2}. In this case, the matrix $\mathcal{B}$ will have to be replaced by the matrix $\mathcal{B}_{\tau\la_\e}$ which depends on $\tau\e$. In particular the Green's function $G_D$ is replaced by the { Green's function $G_D^\lambda$ satisfying}
	\begin{equation}
	\label{7.green}
	D\s G_D^\lambda+(1+\tau\la_\e)G_D^\lambda=\delta_z,\quad G_D^\lambda(x+2,z)=G_D^\lambda(x,z).
	\end{equation}
	It is {then} easy to check that {the eigenvalues of $\mathcal{B}_{\tau\la_\e}$ satisfy the same properties as those of $\mathcal{B}$} provided that $\tau$ is sufficiently small.
		
	\subsection{Stability Analysis: Small Eigenvalues}
	We now study the eigenvalue problem \eqref{7.2} with respect to small eigenvalues. Namely, we assume that $\lambda_\e\to0$ as $\e\to0$. Let
	\begin{equation}
	\label{8.1}
	\bar{u}_\e=w_{\e,\p^\e}+\phi_{\e,\p^\e},\quad
	\bar{v}_\e=T[w_{\e,\p^\e}+\phi_{\e,\p^\e}],
	\end{equation}
	where $\p^\e=(p_1^\e,p_2^\e)$. After rescaling, the eigenvalue problem \eqref{7.2} becomes
	\begin{equation}
	\label{8.2}
	\begin{cases}
	\s_y\phi_\e+\phi_\e-\frac{2\ow_\e}{\oh_\e}\phi_\e+\frac{\ow_\e^2}{\oh_\e^2}\psi_\e+\lambda_\e\phi_\e=0,\\
	D\s\psi_\e+\psi_\e-2c_\e\ow_\e\phi_\e+\tau\lambda_\e\psi_\e=0,
	\end{cases}		
	\end{equation}
	where $c_\e$ is given by \eqref{4.4}. We take $\tau=0$ for simplicity. As $\tau\lambda_\e\ll1$ the results in this section are also valid for $\tau$ finite, this is due to the fact that the small eigenvalue are of the order $O(\e^2)$, we shall prove it in this subsection.
	
	We cut off $\ow_\e$ as follows
	\begin{equation}
	\label{8.3}
	\tw_{\e,i}(y)=\chi\left(\frac{\e y-p_i^\e}{r_0}\right)\ow_\e(y),\quad i=1,2,
	\end{equation}
	where $\chi(x)$ and $r_0$ are given in \S\ref{sec:proof-existence}. Similarly to the \S\ref{sec:proof-existence} we define
	\begin{gather*}
	\mathcal{K}_{\e,\p,new}:=~\mathrm{Span}\left\{\tw'_{\e,i}\mid i=1,2\right\}\subset H^{2s}\dd,\\
	\mathcal{C}_{\e,\p,new}:=~\mathrm{Span}\left\{\tw'_{\e,i}\mid i=1,2\right\}\subset L^2\dd.
	\end{gather*}
	Then it is easy to see that
	\begin{equation}
	\label{8.4}
	\ow_{\e}(y)=\sum_{i=1}^2\tw_{\e,i}(y)+h.o.t..	
	\end{equation}
	Note that
	$$\tw_{\e,i}(y)\sim\hat\xi_iw\left(y-\frac{p_i^\e}{\e}\right) \quad \mathrm{in}\quad H^{2s}(-1,1)$$
	and $\tilde u_{\e,i}$ satisfies
	\begin{equation}
	\label{8.5}
	\s\tw_{\e,i}+\tw_{\e,i}-\frac{\tw_{\e,i}^2}{\oh_\e}+h.o.t.=0.	
	\end{equation}
	Thus $\tw_{\e,i}':=\frac{d\tw_{\e,i}}{dy}$ satisfies
	\begin{equation}
	\label{8.6}
	\s_y\tw'_{\e,i}+\tw'_{\e,i}-2\frac{\tw_{\e,i}}{\oh_\e}\tw'_{\e,i}+\e\frac{\tw_{\e,i}^2}{\oh_\e^2}\oh_\e'+h.o.t.=0,
	\end{equation}
	and we have
	\begin{equation*}
	\tw'_{\e,i}=\hat\xi_i\frac{\partial w}{\partial y}\left(y-\frac{p_i^\e}{\e}\right)(1+o(1)).
	\end{equation*}
	Let us now decompose
	\begin{equation}
	\label{8.7}
	\phi_\e=\sum_{i=1}^2a_i^\e\tilde u_{\e,i}'+\phi_\e^\perp,
	\end{equation}
	where $a_i^\e$ are complex numbers and $\phi_\e^\perp\perp\mathcal{K}_{\e}$. Similarly, we can decompose
	\begin{equation}
	\label{8.8}
	\psi_\e=\sum_{i=1}^2a_i^\e\psi_{\e,i}+\psi_\e^\perp,
	\end{equation}
	where $\psi_{\e,i}$ satisfies
	\begin{equation}
	\label{8.9}
	D\s\psi_{\e,i}+\psi_{\e,i}-2c_\e\bar{u}_\e\tw_{\e,i}'=0,\quad i=1,2,
	\end{equation}
	and $\psi_{\e}^\perp$ satisfies
	\begin{equation}
	\label{8.10}
	D\s\psi_\e^\perp+\psi_\e^\perp-2c_\e\bar{u}_\e\phi_\e^\perp=0.
	\end{equation}
	We impose periodic boundary conditions in both of these equations.
	
	Suppose that $\|\phi_\e\|_{H^{2s\dd}}=1$. Then $|a_i^\e|\leq C$ since
	\begin{equation*}
	a_i^\e=\dfrac{\int_{-\frac{1}{\e}}^{\frac{1}{\e}}\phi_\e\frac{\partial \tw_{\e,i}}{\partial y}dy}{\hat\xi_i^2\int_{\R}w^2dy}+o(1).
	\end{equation*}
	Substituting the decompositions of $\phi_\e$ and $\psi_\e$ into \eqref{8.2} we have
	\begin{equation}
	\label{8.11}
	\begin{aligned}
	\s_y\phi_\e^\perp+\phi^\perp_\e-\frac{2\ow_\e}{\oh_\e}\phi_\e^\perp+\frac{\ow_\e^2}{\oh_\e^2}\psi_\e^\perp+\lambda_\e\phi_\e^\perp-\e\sum_{i=1}^2a_i^\e\left(\frac{\tw_{\e,i}^2}{\oh_\e^2}\oh_\e'-\frac{1}{\e}\frac{\ow_\e^2}{\oh_\e^2}\psi_{\e,i}\right)+h.o.t.=-\lambda_\e\sum_{i=1}^2a_i^\e\tw_{\e,i}'.
	\end{aligned}
	\end{equation}	
	Let us first compute
	\begin{equation*}
	\begin{aligned}
	J_0:=~&\e\sum_{i=1}^2a_i^\e\left(\frac{\tw_{\e,i}^2}{\oh^2_{\e}}\oh'_\e-\frac{1}{\e}\frac{\ow_\e^2}{\oh_\e^2}\psi_{\e,i}\right)\\	
	=~&\e\sum_{i=1}^2a_i^\e\left(\frac{\tw_{\e,i}^2}{\oh_\e^2}(\oh'_\e-\frac{1}{\e}\psi_{\e,i})\right)-\sum_{i=1}^2a_i^\e\sum_{j\neq i}\frac{\tilde u_{\e,j}^2}{\oh^2_\e}\psi_{\e,i}+h.o.t.\\	
	=~&\e\sum_{i=1}^2a_i^\e\frac{\tw_{\e,i}^2}{\oh_{\e}^2}\left(-\frac{1}{\e}\psi_{\e,i}+\oh'_\e\right)-\sum_{i=1}^2\sum_{j\neq i}a_j^{\e}\psi_{\e,j}\frac{\tw_{\e,i}^2}{\oh_\e^2}+h.o.t..	
	\end{aligned}	
	\end{equation*}
	We rewrite $J_0$ as follows
	\begin{equation}
	\label{8.12}
	J_0=-\e\sum_{i=1}^2\sum_{j=1}^2a_j^\e\frac{\tw_{\e,i}^2}{\oh_\e^2}\left(\frac{1}{\e}\psi_{\e,j}-\oh'_\e\delta_{ij}\right)+h.o.t..
	\end{equation} 	
	Let us also set
	\begin{equation}
	\label{8.13}
	\tilde L_\e\phi_\e^\perp:=-\s_y\phi_\e^\perp-\phi_\e^\perp
	+2\frac{\ow_\e}{\oh_\e}\phi_\e^\perp-\frac{\ow_\e^2}{\oh_\e^2}\psi_\e^\perp
	\end{equation}
	and
	\begin{equation}
	\label{8.14}
	{\bf a}_\e:=(a_1^\e,a_2^\e)^T.
	\end{equation}
	Multiplying both sides of \eqref{8.11} by $\tw_{\e,l}'$ and integrating over $\dd$, we obtain
	\begin{equation}
	\label{8.15}
	r.h.s.=-\la_\e\sum_{i=1}^2a_i^\e\int_{-\frac{1}{\e}}^{\frac{1}{\e}}\tw'_{\e,i}\tw'_{\e,l}dy=-\la_\e a_l^\e\hat\xi_l^2\int_{\R}(w'(y))^2dy(1+O(\e^{2s+1})),
	\end{equation}
	and
	\begin{equation}
	\label{8.16}
	\begin{aligned}
	l.h.s.=~&\left(\sum_{i=1}^2\sum_{j=1}^2a_j^\e\int_{-\frac{1}{\e}}^{\frac{1}{\e}}\frac{\tw_{\e,i}^2}{\oh_\e^2}(\psi_{\e,j}-\e\oh'_\e\delta_{ij})
	\tw'_{\e,l}dy+\int_{-\frac{1}{\e}}^{\frac{1}{\e}}\frac{\tw_{\e,l}^2}{\oh_\e^2}\psi_\e^\perp \tw'_{\e,l}dy
	-\e\int_{-\frac{1}{\e}}^{\frac{1}{\e}}\frac{\tw_{\e,l}^2}{\oh_\e^2}\oh'_\e\phi_\e^\perp dx \right)(1+o(1))\\
	=~&(J_{1,l}+J_{2,l}+J_{3,l})(1+o(1)),
	\end{aligned}
	\end{equation}
	where $J_{i,l},~i=1,2,3,$ are defined by the last inequality.
	
	We define the vectors
	\begin{equation}\label{eq:J-def}
	\mathbf{J}_i=(J_{i,1},J_{i,2})^T,\quad i=1,2,3.
	\end{equation}
	To give estimates on each $\mathbf{J}_i$ ($i=1,2,3$) we need the following three lemmas.
	
	\begin{lemma}
		\label{le8.1}
		We have
		\begin{equation}
		\label{8.17}
		(\psi_{\e,j}-\e\oh'_\e \delta_{ji})(p_i^\e)=
		-\e\nabla\mathcal{G}_D^T\mathcal{H}^2+O(\e^2).
		%-\e\mathcal{H}^2\nabla\mathcal{G}_D+O(\e^2).
		\end{equation}
	\end{lemma}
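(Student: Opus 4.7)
The approach I would take is to exploit the Green's function representation of $\psi_{\e,j}$ provided by \eqref{8.9}, namely
$$\psi_{\e,j}(x) = 2c_\e \int_{-1}^{1} G_D(x,\zeta)\,\ow_\e(\zeta/\e)\,\tw'_{\e,j}(\zeta/\e)\,d\zeta,$$
and to treat the diagonal ($i=j$) and off-diagonal ($i\neq j$) entries of the claimed matrix identity separately. In both cases the dominant contribution to the integral comes from a small neighbourhood of $\zeta = p_j^\e$, since $\tw'_{\e,j}(\zeta/\e) \sim \hat\xi_j w'((\zeta-p_j^\e)/\e)$ is concentrated there up to the algebraically decaying tails controlled by Proposition \ref{pr3.1}(a).

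For the off-diagonal entries with $i\neq j$, I would rescale $\zeta = p_j^\e + \e z$ so that $\ow_\e(\zeta/\e)\tw'_{\e,j}(\zeta/\e) \sim \hat\xi_j^2 w(z) w'(z)$ up to errors that are controlled by Lemma \ref{le5.3} and by the tail contributions from the other spikes. Since $i\neq j$ the Green's function $G_D(p_i^\e, p_j^\e + \e z)$ is smooth at $z=0$, so I would Taylor expand
$$G_D(p_i^\e, p_j^\e + \e z) = G_D(p_i^\e,p_j^\e) + \e z\,\partial_2 G_D(p_i^\e, p_j^\e) + O((\e z)^2),$$
and then use the identities $\int_\R ww'\,dz = 0$ and $\int_\R zww'\,dz = -\tfrac{1}{2}\int_\R w^2\,dz$ together with the normalization $c_\e\e\int_\R w^2\,dy = 1$ to see that the leading-order term vanishes and the next-order term yields $-\e\hat\xi_j^2\,\partial_2 G_D(p_i^\e, p_j^\e)$. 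The translation invariance of the periodic Green's function gives $\partial_2 G_D(p_i,p_j) = \nabla_{p_j}G_D(p_j,p_i)$, which identifies this expression with the $(i,j)$ entry of $-\e\,\nabla\mathcal{G}_D^T\mathcal{H}^2$.

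For the diagonal entries $i=j$ the stated identity reduces, after using the convention $(\nabla\mathcal{G}_D)_{ii} = 0$ from \eqref{2.matrix-1}, to showing $\psi_{\e,i}(p_i^\e) = \e\,\oh'_\e(p_i^\e) + O(\e^2)$. Rather than directly Taylor expanding $G_D$ across its singular diagonal, I would sum \eqref{8.9} over $k$ and compare with the equation obtained by differentiating $D(-\Delta)^s\oh_\e + \oh_\e = c_\e\ow_\e^2$ in $x$; since $\sum_k \tw'_{\e,k} \sim \ow_\e'$ by \eqref{8.4}, uniqueness for the periodic operator $D(-\Delta)^s+1$ gives $\sum_k \psi_{\e,k} = \e\,\oh'_\e + h.o.t.$ The diagonal case then follows by writing $\psi_{\e,i}(p_i^\e) - \e\,\oh'_\e(p_i^\e) = -\sum_{k\neq i}\psi_{\e,k}(p_i^\e) + h.o.t.$, substituting the off-diagonal estimates established in the previous step, and invoking the equilibrium condition $F_i(\p^\e) = O(\e^{2s-1})$ supplied by Proposition \ref{pr6.1}.

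The main obstacle will be the careful bookkeeping of the remainder terms. Three separate sources contribute: the correction $\phi_{\e,\p}$ appearing in $\ow_\e$, which is controlled via Lemma \ref{le5.3}; the algebraic tails of $w$ across distinct spikes, for which Proposition \ref{pr3.1}(a) suffices; and, most delicately, the diagonal estimate, where one must extract an $O(\e^2)$ remainder from a combination of $\e\,F_i(\p^\e)$ and higher-order Taylor corrections. This last point is what forces the indirect summation argument in place of a direct expansion at the on-diagonal singularity of $G_D$, and its treatment requires the equilibrium condition $F(\p^0)=0$ from (H3) together with the evenness of $G_D$ in $p_i - p_j$.
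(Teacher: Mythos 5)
Your proposal is correct, and the off-diagonal part ($i\neq j$) is essentially identical to the paper's argument: rescale around $\zeta=p_j^\e$, Taylor expand the (smooth) Green's function, and use $\int_\R ww'\,dz=0$, $\int_\R zww'\,dz=-\tfrac12\int_\R w^2\,dz$ together with the normalization of $c_\e$ to land on $-\e\hx_j^2\nabla_{p_j}G_D(p_i,p_j)$, which by symmetry of $G_D$ is the $(i,j)$ entry of $-\e\nabla\mathcal{G}_D^T\mathcal{H}^2$. Where you genuinely diverge is the diagonal case. The paper attacks it head-on: it writes both $\bar v_\e$ and $\psi_{\e,i}$ as Green's function integrals, differentiates the former, and computes $\psi_{\e,i}(p_i^\e)-\e\oh_\e'(p_i^\e)$ directly (integrating by parts so that the term $2\int G_D \tw_{\e,i}\tw'_{\e,i}$ pairs against $\int \nabla_x G_D\,\tw_{\e,i}^2$); the surviving contributions are $-\e\hx_i^2\nabla_{p_i}G_D(p_i,p_i)$, which vanishes by the periodicity convention, and $-\e\sum_{j\neq i}\hx_j^2\nabla_{p_i}G_D(p_i,p_j)=-\e F_i(\p^\e)$, which the paper absorbs into the error term. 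Your route instead observes that $\sum_k\psi_{\e,k}$ and $\e\oh_\e'$ satisfy the same equation up to small sources, so the diagonal entry reduces to $-\sum_{k\neq i}\psi_{\e,k}(p_i^\e)=-\e F_i(\p^\e)+h.o.t.$ by the already-established off-diagonal estimates. The two arguments rest on exactly the same two facts ($\nabla_{p_i}G_D(p_i,p_i)=0$ and smallness of $F_i$ at the constructed equilibrium), but yours avoids expanding $G_D$ across its on-diagonal singularity and makes explicit where the equilibrium condition enters, something the paper leaves implicit when it drops the $-\e F_i$ term into $o(\e^2)$. One caveat applies equally to both proofs: $F_i(\p^\e)$ is only known to be $O(\e^{2s-1})$ from the reduced problem, so the resulting contribution is $O(\e^{2s})$ rather than $O(\e^2)$ for $s<1$; since this term cancels in the row sums that ultimately feed into $\mathcal{M}(\p^0)$, it is harmless, but your bookkeeping is no worse than the paper's on this point.
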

	
	\begin{proof}
		Note that for $i\neq j$, we have
		\begin{equation}
		\label{8.18}
		\begin{aligned}
		(\psi_{\e,j}-\e\oh'_{\e}\delta_{ji})(p_i^\e)=\psi_{\e,j}(p_i^\e)=2c_\e\int_{-1}^1G_D(p_i^\e,\zeta)\ow_{\e}\tw'_{\e,j}d\zeta
		=-\e\hx_j^2\nabla_{p_j^\e}G_D(p_i^\e,p_j^\e) +O(\e^{1+2s}).
		\end{aligned}	
		\end{equation}	
		Next we compute $\psi_{\e,i}-\oh'_\e$ near $p_i^\e$:
		\begin{equation*}
		\begin{aligned}
		\bar v_{\e}(x)=c_\e\int_{-1}^1G_D(x,\zeta)\ow^2\left(\frac{\zeta}{\e}\right)d\zeta	
		=\e c_\e\int_{-\infty}^{\infty}G_D(x,\e z)\tw_{\e,i}^2(z)dz+
		c_\e\sum_{j\neq i}\int_{-1}^1G_D(x,\zeta)\tw^2_{\e,j}\left(\frac{\zeta}{\e}\right)d\zeta+O(\e^{1+2s}).	
		\end{aligned}	
		\end{equation*}	
		So
		\begin{equation*}
		\begin{aligned}
		\oh'_\e(x)=~&\e c_\e\int_{-\infty}^\infty \nabla_xG_D(x,\e z)\tw_{\e,i}^2(z)dz+c_\e\sum_{j\neq i}\int_{-1}^1\nabla_xG_D(x,\zeta)\tw^2_{\e,j}\left(\frac{\zeta}{\e}\right)d\zeta+O(\e^{1+2s}).	
		\end{aligned}	
		\end{equation*}
		Thus
		\begin{equation*}
		\begin{aligned}
		\psi_{\e,i}(x)-\e\oh'_\e(x)=~&2\e c_\e\int_{-\infty}^{\infty}G_D(x,\e z)\tw_{\e,i}\tw'_{\e,i}dz-\e^2c_\e\int_{-\infty}^\infty \nabla_xG_D(x,\e z)\tw^2_{\e,i}(z)dz\\
		&-\e c_\e\sum_{j\neq i}\int_{-1}^1\nabla_xG_D(x,\zeta)\tw^2_{\e,j}\left(\frac{\zeta}{\e}\right)d\zeta
		+O(\e^{1+2s}).	
		\end{aligned}		
		\end{equation*}
		Therefore we have,
		\begin{equation}
		\label{8.19}
		\begin{aligned}
		\psi_{\e,i}(p_i^\e)-\e\oh'_\e(p_i^\e)=~&2\e c_\e\int_{-\infty}^{\infty}G_D(p_i^\e,\e z)\tw_{\e,i}\tw'_{\e,i}dz-\e^2c_\e\int_{-\infty}^\infty \nabla_{p_i^\e}G_D(p_i^\e,\e z)\tw^2_{\e,i}(z)dz\\
		&-\e^2 c_\e\sum_{j\neq i}\int_{-\infty}^\infty\nabla_{p_i^\e}G_D(p_i^\e,\e z)\tw_{\e,j}^2(z)dz+O(\e^{1+2s})\\
		=&-2\e\hx_i^2\nabla_{p_i^\e}G_D(p_i^\e,p_i^\e)-\e\sum_{j\neq i}\hx_j^2\nabla_{p_i^\e}G_D(p_i^\e,p_j^\e)+o(\e^{2})\\
		=&-\e\hx_i^2\nabla_{p_i^\e}G_D(p_i^\e,p_i^\e)+o(\e^2).	
		\end{aligned}	
		\end{equation}
		Equation \eqref{8.17} then follows from solving \eqref{8.18} and \eqref{8.19}.
	\end{proof}
	
	\begin{lemma}
		\label{le8.2}	
		{Let $q_{ji}$ be defined as in \eqref{2.q}. Then we have}
		\begin{equation}
		\label{8.20}
		\begin{aligned}
		(\psi_{\e,i}-\e\oh'_\e\delta_{ji})(p_j^\e+\e z)-(\psi_{\e,i}-\e\oh'_\e\delta_{ji})(p_j^\e)=
		-\e^2 z\left(\nabla_{p_j^\e}\nabla_{p_i^\e}G_D(p_j^\e,p_i^\e)
		+q_{ji}\delta_{ji}\right)\hat\xi_i^2+o(\e^2).
		\end{aligned}		
		\end{equation}	
	\end{lemma}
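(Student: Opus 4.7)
The plan is to sharpen the computation behind Lemma~\ref{le8.1} by one additional order in $\e$, extracting the coefficient of $\e^2 z$ in the Taylor expansion of $(\psi_{\e,i}-\e\oh'_\e\delta_{ji})(p_j^\e+\e z)$ around $p_j^\e$. My starting point is the integral representation
\[
\psi_{\e,i}(x) = 2c_\e\int_{-1}^{1} G_D(x,\zeta)\ow_\e(\zeta/\e)\tw'_{\e,i}(\zeta/\e)\,d\zeta
\]
together with the analogous formula $\oh'_\e(x) = c_\e\int_{-1}^{1}\nabla_1 G_D(x,\zeta)\ow_\e^2(\zeta/\e)\,d\zeta$ obtained by differentiating the Green's function representation of $\oh_\e$. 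After changing variables $\zeta = p_k^\e+\e\tau$ and using $\ow_\e\tw'_{\e,i}\sim\hx_i^2 w(\tau)w'(\tau)$ near $p_i^\e$ (cross-spike contributions being $O(\e^{1+2s_1})$ by the algebraic decay of Proposition~\ref{pr3.1}), the problem reduces to integrating Taylor expansions of $G_D$ against moments of $ww'$ and $w^2$.

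The case $i\neq j$ is handled directly. Taylor expanding $G_D(p_j^\e+\e z,p_i^\e+\e\tau)$ about the off-diagonal point $(p_j^\e,p_i^\e)$ and using the moment identities $\int_\R ww'\,d\tau = 0$, $\int_\R\tau ww'\,d\tau = -\tfrac12\int_\R w^2\,d\tau$, $\int_\R\tau^2 ww'\,d\tau = 0$, together with $\e c_\e\int_\R w^2 = 1$, yields
\[
\psi_{\e,i}(p_j^\e+\e z) = -\e\hx_i^2\nabla_{p_i^\e}G_D(p_j^\e,p_i^\e) - \e^2 z\hx_i^2\nabla_{p_j^\e}\nabla_{p_i^\e}G_D(p_j^\e,p_i^\e) + o(\e^2).
\]
Subtracting the value at $z=0$ gives the stated formula, since $\delta_{ji}=0$ means the $\oh'_\e$ term is absent and $q_{ji}=0$ by \eqref{2.q}.

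For $i=j$ both Taylor variables of $G_D$ sit close to the diagonal; writing $G_D(x,\zeta)=g(x-\zeta)$, the evenness of $g$ forces $g'(0)=0$, while the hypothesis $s_2>1/2$ ensures that $g''(0)$ is finite through the Fourier series of $G_D$. Carrying out the near-field computation for $\psi_{\e,i}$ gives
\[
\psi_{\e,i}(p_i^\e+\e z) - \psi_{\e,i}(p_i^\e) = \e^2 z\hx_i^2 g''(0) + o(\e^2).
\]
Splitting the integral for $\oh'_\e$ by spike index and expanding, the $k=i$ near-field contribution produces $\e z\hx_i^2 g''(0)$, while the $k\neq i$ far-field contributions produce $\e z\sum_{k\neq i}\hx_k^2\nabla_{p_i^\e}^2 G_D(p_i^\e,p_k^\e)$. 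Multiplying by $\e$ and subtracting from the $\psi_{\e,i}$ increment, the $\e^2 z\hx_i^2 g''(0)$ pieces cancel exactly, leaving
\[
-\e^2 z\sum_{k\neq i}\hx_k^2 \nabla_{p_i^\e}^2 G_D(p_i^\e,p_k^\e) + o(\e^2) = -\e^2 z\bigl(\nabla_{p_i^\e}\nabla_{p_i^\e}G_D(p_i^\e,p_i^\e)+q_{ii}\bigr)\hx_i^2 + o(\e^2),
\]
where I have used the convention $\nabla_{p_i^\e}^2 G_D(p_i^\e,p_i^\e)=0$ (consistent with $G_D$ being constant on the diagonal, as already invoked in the definition of $\nabla\mathcal{G}_D$) and the definition of $q_{ii}$ in \eqref{2.q}.

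The main technical obstacle is justifying the near-diagonal Taylor expansion of $g(\e(z-\tau))$ and of $g'(\e(z-\tau))$: this relies crucially on the assumption $s_2>1/2$ to ensure $g''(0)$ is finite, and on the evenness of $w$ to produce the vanishing moments $\int_\R\tau^2 ww'\,d\tau=0$ and $\int_\R\tau w^2\,d\tau = 0$ that yield the exact cancellation between the $k=i$ self-term contributions to $\psi_{\e,i}$ and to $\e\oh'_\e$. This cancellation is precisely what leaves only the cross contributions over $k\neq i$ in the final expression, matching the $q_{ji}\delta_{ji}$ correction in the statement.
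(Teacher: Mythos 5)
Your outline is structurally on the right track: changing variables to localize at each spike, using the vanishing odd moments of $ww'$ and $w^2$ to read off the coefficient of $\e^2 z$, treating $i\neq j$ by Taylor expansion at an off-diagonal point of $G_D$, and recognizing that in the $i=j$ case the self-interaction contributions of $\psi_{\e,i}$ and $\e\oh'_\e$ must cancel so that only cross terms survive. The $i\neq j$ computation is correct as stated.

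The $i=j$ case, however, contains a genuine gap: the claim that \textquotedblleft the hypothesis $s_2>1/2$ ensures that $g''(0)$ is finite through the Fourier series of $G_D$\textquotedblright{} is false. From the expansion in Appendix~\ref{app:greens-func}, $g(z)=G_D(z,0)=\mathfrak{a}_{s}|z|^{2s-1}+(\text{smoother})$ near $z=0$, so for $\tfrac12<s<1$ the function $g$ is merely H\"older continuous at the diagonal; $g'(z)\sim(2s-1)\mathfrak{a}_s|z|^{2s-2}\operatorname{sgn} z$ already blows up, and $g''(z)\sim|z|^{2s-3}$ is non-integrable. Equivalently, the Fourier series for $\partial_x^2 G_D$ at $x=z$ has general term $\sim \ell^{2-2s}$ and diverges for all $s<1$. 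Writing the self contributions to $\psi_{\e,i}$ and $\e\oh'_\e$ as $\e^2 z\hx_i^2 g''(0)$ each and claiming they \textquotedblleft cancel exactly\textquotedblright{} is an $\infty-\infty$ argument and not rigorous.

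The cancellation you want is real but must be established \emph{before} any near-diagonal Taylor expansion. The clean route is integration by parts: since $\bar u_\e\tw'_{\e,i}\approx\hx_i^2\,\tfrac12\,\partial_\tau(w^2)$ in the inner variable, the representation of $\psi_{\e,i}$ can be integrated by parts in $\tau$ to give
$\psi_{\e,i}(x)\big|_{\text{self}}=\e^2 c_\e\hx_i^2\int\nabla_1 G_D(x,p_i^\e+\e\tau)w(\tau)^2\,d\tau$,
which is \emph{identically equal} to $\e\oh'_\e(x)\big|_{\text{self}}$, term by term in the integrand, before passing to any limit. The singular self-kernels therefore cancel at the integral level; what remains of $(\psi_{\e,i}-\e\oh'_\e)(p_i^\e+\e z)$ is only the sum over $j\neq i$ of regular cross terms, and those can be Taylor expanded at the bounded argument $p_i^\e-p_j^\e$ exactly as in your $i\neq j$ case, yielding $-\e^2 z\sum_{j\neq i}\hx_j^2 g''(p_i^\e-p_j^\e)+o(\e^2)=-\e^2 z\,q_{ii}\hx_i^2+o(\e^2)$, in agreement with \eqref{8.20} once one adopts the convention $\nabla_{p_i}\nabla_{p_i}G_D(p_i,p_i)=0$. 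If you replace the $g''(0)$ cancellation with this integration-by-parts identity, the proof closes.
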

	
	We next study the asymptotic expansion of $\phi_\e^\perp$. Let us first {define}
	\begin{equation}
	\label{8.21}
	\phi_{\e,i}^1=\sum_{j=1}^2\nabla_{p_i^\e}\hat\xi_jw\left(y-\frac{p_j^\e}{\e}\right), \quad \phi_\e^1:=-\e\sum_{i=1}^2a_i^\e\phi_{\e,i}^1.
	\end{equation}
	Then we have {the following lemma}.
	\begin{lemma}
		\label{le8.3}
		Let be $\e$ sufficiently small. Then
		\begin{equation}
		\label{8.22}
		\|\phi_\e^\perp-\phi_\e^1\|_{H^{2s}\dd}=o(\e).
		\end{equation}	
	\end{lemma}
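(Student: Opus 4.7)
The plan is to derive the equation that $\phi_\e^\perp$ must satisfy at leading order, identify $\phi_\e^1$ as the correct approximate solution by exploiting a ``derivative of the steady state'' identity, and then apply the invertibility of the linearized operator on the orthogonal subspace.

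First I would rearrange equation \eqref{8.11} into the form
\begin{equation*}
\tilde L_\e \phi_\e^\perp = -J_0 + \lambda_\e \phi_\e^\perp + \lambda_\e\sum_{i=1}^{2}a_i^\e \tw_{\e,i}' + h.o.t.,
\end{equation*}
with $\tilde L_\e$ and $J_0$ as in \eqref{8.13} and \eqref{8.12}. Because the small eigenvalues will turn out to satisfy $\lambda_\e = O(\e^2)$, the $\lambda_\e$-terms are $o(\e)$ in $L^2\dd$. Lemmas \ref{le8.1} and \ref{le8.2} together with the definition of $J_0$ show that $J_0 = O(\e)$, so that the leading-order forcing for $\phi_\e^\perp$ is exactly $-J_0$.

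Next I would verify that $\phi_\e^1$ reproduces this forcing. The key observation is that formally differentiating the approximate steady-state identity $S[\ow_\e] = h.o.t.$ with respect to $p_i^\e$ yields $S'[\ow_\e]\bigl(\partial_{p_i^\e}\ow_\e\bigr) = h.o.t.$, and since $\tilde L_\e = -S'[\ow_\e]$ I obtain
\begin{equation*}
\tilde L_\e\bigl(\e\phi_{\e,i}^1 - \tw_{\e,i}'\bigr) = o(\e),
\end{equation*}
because $\e\,\partial_{p_i^\e}\ow_\e = \e\phi_{\e,i}^1 - \tw_{\e,i}' + o(\e)$ in view of the definition \eqref{8.21}. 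Multiplying by $-a_i^\e$ and summing over $i$ yields $\tilde L_\e \phi_\e^1 = -\sum_i a_i^\e \tilde L_\e \tw_{\e,i}' + o(\e)$. A direct calculation using \eqref{8.6} and the defining equation \eqref{8.9} for $\psi_{\e,i}$, together with the fact that cross terms $\tw_{\e,i}\tw_{\e,j}$ for $i\neq j$ are exponentially small by the separation of the spike supports, then produces the matching identity
\begin{equation*}
\sum_{i=1}^{2} a_i^\e \tilde L_\e \tw_{\e,i}' = J_0 + o(\e),
\end{equation*}
so that $\tilde L_\e \phi_\e^1 = -J_0 + o(\e)$.

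Combining the two steps above yields $\tilde L_\e(\phi_\e^\perp - \phi_\e^1) = o(\e)$ in $L^2\dd$. A quick orthogonality check using $\int_\R w\,w'\,dy = 0$ shows that $\phi_\e^1$ lies in $\mathcal{K}_{\e,\p,new}^\perp$ modulo $o(\e)$-corrections, so that the difference $\phi_\e^\perp - \phi_\e^1$ may be taken orthogonal to the approximate kernel. An invertibility result analogous to Proposition \ref{pr5.1} applied to $\tilde L_\e$ (which differs from $L_{\e,\p}$ only by $O(\|\phi_{\e,\p^\e}\|_{H^{2s}})$ corrections) then delivers the desired bound $\|\phi_\e^\perp - \phi_\e^1\|_{H^{2s}\dd} = o(\e)$. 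The main obstacle will be the careful bookkeeping needed to establish $\sum_i a_i^\e \tilde L_\e \tw_{\e,i}' = J_0 + o(\e)$: one must precisely match the nonlocal $\psi_{\e,i}$ contributions coming from Lemmas \ref{le8.1}--\ref{le8.2} with the local $\oh_\e'$ term in \eqref{8.6}, and verify that the off-diagonal spike-interaction terms cancel at the required order.
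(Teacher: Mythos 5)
Your argument is correct in outline, but it takes a genuinely different route from the paper. You rely on the ``derivative of the steady state'' identity: differentiate the reduced equation $S[w_{\e,\p}+\phi_{\e,\p}]=\sum_j c_j(\p)\,\partial_y w_j$ in $p_i$ at $\p^\e$, use $c_j(\p^\e)=0$ to kill the $c_j\partial_{p_i}(\partial_y w_j)$ terms, and deduce $\tl_\e(\e\partial_{p_i^\e}\ow_\e)=-\e\sum_j(\nabla_{p_i}c_j)\partial_y w_j=O(\e^2)$. You then recognize $\e\partial_{p_i^\e}\ow_\e\approx \e\phi_{\e,i}^1-\tw'_{\e,i}$, combine with the identity $\sum_i a_i^\e\tl_\e\tw'_{\e,i}=J_0+h.o.t.$ (which indeed follows from \eqref{8.6}, \eqref{8.9}, \eqref{8.12}, and the smallness of cross terms), and close with the invertibility of $\tl_\e$ on the orthogonal complement. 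The paper proceeds differently: after establishing $\|\phi_\e^\perp\|_{H^{2s}}=O(\e)$ it rescales $\tilde\phi_{\e,i}=\e^{-1}\phi_\e^\perp\chi$, extracts a weak limit $\Phi_0$, derives the limiting forced NLEP \eqref{8.29}, solves it explicitly as $\Phi_0=-\mathcal{P}(\nabla\mathcal{G}_D)^T\mathcal{H}^2\mathbf{a}^0 w$, and finally matches this with the limit of $\e^{-1}\phi_\e^1$ via the formula \eqref{2.x-2} for $\nabla\hat\xi$. Your approach is the more conceptual one familiar from classical Gierer--Meinhardt spike analysis, and it avoids having to solve the limiting equation; the paper's approach produces the explicit formula for $\Phi_0$ which is then reused in the estimates for $\mathbf{J}_2$ in Lemma \ref{le8.4}.

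Two places in your write-up deserve a caveat. First, you invoke $\la_\e=O(\e^2)$, which is the eventual conclusion of the small-eigenvalue analysis and so is circular at this point; all that is needed (and all that is assumed at the start of the subsection) is $\la_\e\to 0$, which combined with the a priori estimate $\|\phi_\e^\perp\|_{H^{2s}}=O(\e)$ already makes $\la_\e\phi_\e^\perp=o(\e)$, and the $\la_\e\sum a_i^\e\tw'_{\e,i}$ term drops out after projecting onto $\mathcal{C}_{\e,\p,new}^\perp$. Second, the claimed approximation $\e\partial_{p_i^\e}\ow_\e=\e\phi_{\e,i}^1-\tw'_{\e,i}+o(\e)$ requires some care: the term $\e\partial_{p_i^\e}\phi_{\e,\p}$ is a priori only $O(\e^{2s-1})$ via the chain rule (since $\phi_{\e,\p}$ varies on the $\e^{-1}$ scale), which is not $o(\e)$; it is saved because $\tw'_{\e,i}$ already contains $\chi\partial_y\phi_{\e,\p^\e}$, and these two terms cancel to leading order, leaving only the ``direct'' $\p$-dependence of $\phi_{\e,\p}$, which is genuinely $o(\e)$ after multiplication by $\e$. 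This cancellation is what you label ``careful bookkeeping'' at the end, and it is the part that would need to be spelled out to make the argument fully rigorous.
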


	\begin{proof}
		We first derive a relation between $\psi_\e^\perp$ and $\phi_\e^\perp$. Note that similar to the proof of Proposition \ref{pr5.1}, $\tilde{L}_\e$ is invertible from $\mathcal{K}_{\e,\p,new}$ to $\mathcal{C}_{\e,\p,new}$. By Lemma \ref{le8.1} and the fact that $\tilde L_\e$ is invertible, we deduce that
		\begin{equation}
		\label{8.23}
		\|\phi_\e^\perp\|_{H^{2s}\dd}=O(\e).
		\end{equation}	
		Let us decompose
		\begin{equation}
		\label{8.24}
		\tilde \phi_{\e, i}=\frac{\phi_\e^\perp}{\e}\chi\left(\frac{\e y-p_i^\e}{r_0}\right).	
		\end{equation}		
		Then
		\begin{equation}
		\label{8.25}
		\phi_\e^\perp=\e\sum_{i=1}^2\tilde\phi_{\e,i}+h.o.t.
		\end{equation}
		Suppose that
		\begin{equation}
		\label{8.26}
		\tilde\phi_{\e,i}\to\phi_i\quad \mathrm{in}\quad H^1\dd.
		\end{equation}	
		{By} the equation for $\psi_\e^\perp$ (similar to the proof of Lemma \ref{le8.1})
		\begin{equation}
		\label{8.27}
		\begin{aligned}
		\psi_\e^\perp(p_i^\e)=2\e c_\e\sum_{j=1}^2\int_{-1}^1G_D(p_i^\e,z)\ow_\e\tilde\phi_{\e,j}(z)dz+o(\e)
		=2\e\sum_{j=1}^2G_D(p_i^\e,p_j^\e)\hat\xi_j\frac{\int_{\R}w\phi_j dx}{\int_{\R}w^2dx}+o(\e),
		\end{aligned}	
		\end{equation}
		{and therefore}
		\begin{equation}
		\label{8.28}
		(\psi_\e^\perp(p_1^\e),\psi_\e^\perp(p_2^\e))^T
		=2\e\mathcal{G}_D\mathcal{H}\frac{\int_{\R}w\Phi_0dx}{\int_{\R}w^2dx}+o(\e),
		\end{equation}	
		where {$\Phi_0=(\phi_1,\phi_2)^T$}. Substituting \eqref{8.28} into \eqref{8.11} and using Lemma \ref{le8.1}, we have that $\Phi_0$ satisfies	
		\begin{equation}
		\label{8.29}
		\begin{aligned}
		(-\Delta)^s\Phi_0+\Phi_0-2w\Phi_0+2\mathcal{G}_D\mathcal{H}\frac{\int_{\R}w\Phi_0dx}{\int_{\R}w^2dx}w^2-(\nabla\mathcal{G}_D)^T\mathcal{H}^2\mathbf{a}^0w^2=0,	
		\end{aligned}	
		\end{equation}	
		where
		$${\bf a}^0=\lim\limits_{\e\to0}{\bf a}^\e=\lim\limits_{\e\to0}(a_1^\e,a_2^\e)^T.$$	
		Thus
		\begin{equation}
		\label{8.30}
		\begin{aligned}
		\Phi_0
		%=-\left(I-2\mathcal{G}_D\mathcal{H}\right)^{-1}(\nabla\mathcal{G}_D)^T\mathcal{H}^2{\bf a}^0w
		=-\left(I-2\mathcal{G}_D\mathcal{H}\right)^{-1}(\nabla\mathcal{G}_D)^T\mathcal{H}^2{\bf a}^0w
		=-\mathcal{P}(\nabla \mathcal{G}_D)^T\mathcal{H}^2{\bf a}^0w,
		\end{aligned}	
		\end{equation}	
		where
		$$\mathcal{P}=(I-2\mathcal{G}_D\mathcal{H})^{-1}.$$
		Now we compare $\Phi_0$ with $\phi_\e^1$. By definition
		\begin{equation}
		\label{8.31}
		\phi_\e^1
		%=-\e\sum_{i=1}^2a_i^\e\sum_{j=1}^2\left(\nabla_{p_i^\e}\hat\xi_j\tw_{\e,j}\right)	
		=-\e\sum_{j=1}^2\sum_{i=1}^2a_i^\e\nabla_{p_i^\e}\hat\xi_jw\left(y-\frac{p_j^\e}{\e}\right).	
		\end{equation}
		On the other hand,
		\begin{equation}
		\label{8.31-1}
		\phi_\e^\perp=\e\sum_{i=1}^2\tilde\phi_{\e,i}+h.o.t.
		=\e\sum_{i=1}^2\phi_i\left(y-\frac{p_i^\e}{\e}\right)
		+o(\e).
		\end{equation}	
		{The lemma is proved by} using \eqref{8.30} and comparing \eqref{8.31} and \eqref{8.31-1}.
	\end{proof}
	
	From Lemma \ref{le8.3} we have that
	\begin{equation}
	\label{8.32}
	(\psi_\e^\perp(p_1^\e),\psi_\e^\perp(p_2^\e))^T	
	=-2\e\mathcal{G}_D\mathcal{H}\mathcal{P}(\nabla\mathcal{G}_D)^T\mathcal{H}^2{\bf a}^0+o(\e)		
	\end{equation}
	and
	\begin{equation}
	\label{8.33}
	\begin{aligned}
	\psi_\e^\perp(p_i^\e+\e z)-\psi_\e^\perp(p_i^\e)=
	2\e^2z\sum_{j=1}^2\nabla_{p_i^\e}G_D(p_i^\e,p_j^\e)\hat\xi_j\frac{\int_{\R}w\phi_jdx}{\int_{\R}w^2dx}+\e\Pi_i(y)
	+o(\e^2),
	\end{aligned}	
	\end{equation}
	where $\Pi_i(y)$~ {is an even function in}~$y.$

	With the above three lemmas we {can now} derive the following results concerning the three terms $\mathbf{J}_1,\mathbf{J}_2,\mathbf{J}_3$ {defined in \eqref{eq:J-def}.}
	\begin{lemma}
		\label{le8.4}	
		{Let $\mathcal{G}_D$, $\mathcal{H}$, $\mathcal{Q}$, and ${\bf a}_\e$ be given by \eqref{2.matrix-0}, \eqref{2.h}, \eqref{2.q}, \eqref{8.14} respectively. Then}
		\begin{equation}
		\label{8.34}
		\begin{aligned}
		\mathbf{J}_1=c_1\e^2\mathcal{H}(\nabla^2\mathcal{G}_D+\mathcal{Q})\mathcal{H}^2{\bf a}_\e+o(\e^2),~\qquad ~	
		\mathbf{J}_2=2c_1\e^2\mathcal{H}\nabla \mathcal{G}_D\mathcal{H}\mathcal{P}(\nabla\mathcal{G}_D)^T\mathcal{H}^2{\bf a}_\e+o(\e^2),	
		\end{aligned}		
		\end{equation}
		and $\mathbf{J}_3=o(\e^2)$, where $c_1=\frac13\int_{\R}w^3dy$. 
	\end{lemma}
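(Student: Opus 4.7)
\medskip

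The plan is to evaluate each $J_{i,l}$ by localizing near the $l$-th spike, Taylor expanding the various factors in the inner variable $z=(x-p_l^\e)/\e$, and exploiting parity together with the basic integral identity
\[
\int_\R z\, w^2(z)\, w'(z)\, dz \;=\; \tfrac{1}{3}\int_\R z\,(w^3)'(z)\,dz \;=\; -\tfrac{1}{3}\int_\R w^3(z)\,dz \;=\; -c_1.
\]
In all three cases, $\tw'_{\e,l}$ is supported in a fixed neighborhood of $p_l^\e$, so any term in which $\tw_{\e,i}^2$ is centered at $p_i^\e$ with $i\neq l$ contributes only $h.o.t.$ Thus in $J_{1,l}$ only the diagonal term $i=l$ of the double sum survives, and $\tw_{\e,l}^2/\oh_\e^2\approx w^2(z)$ while $\tw'_{\e,l}\approx\hat\xi_l w'(z)$ to leading order.

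For $\mathbf{J}_1$: using Lemma \ref{le8.1} to evaluate $(\psi_{\e,j}-\e\oh'_\e\delta_{lj})(p_l^\e)$ and Lemma \ref{le8.2} to expand one step further in $z$, the constant-in-$z$ piece pairs with the odd integrand $w^2 w'$ and gives $h.o.t.$, while the linear-in-$z$ coefficient is precisely $-\e^2(\nabla_{p_l}\nabla_{p_j}G_D(p_l,p_j)+q_{lj}\delta_{lj})\hat\xi_j^2$; the integral identity above then collapses this to $c_1\e^2\hat\xi_l\sum_j(\nabla_{p_l}\nabla_{p_j}G_D+q_{lj}\delta_{lj})\hat\xi_j^2 a_j^\e$, which is the $l$-th entry of $c_1\e^2\mathcal{H}(\nabla^2\mathcal{G}_D+\mathcal{Q})\mathcal{H}^2\mathbf{a}_\e$. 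For $\mathbf{J}_2$: Lemma \ref{le8.3} combined with \eqref{8.33} shows $\psi_\e^\perp$ differs from $\psi_\e^\perp(p_l^\e)$ by an $\e$-times-even term plus an $\e^2 z$-linear term with coefficient $2\sum_j\nabla_{p_l}G_D(p_l,p_j)\hat\xi_j\langle w,\phi_j\rangle/\|w\|_2^2$. The constant part and the even $\e\Pi_l(y)$ drop out by parity, and the same integral identity yields the linear-in-$z$ contribution; substituting the explicit form $\Phi_0=-\mathcal{P}(\nabla\mathcal{G}_D)^T\mathcal{H}^2\mathbf{a}^0 w$ from \eqref{8.30} reassembles the answer into $2c_1\e^2\mathcal{H}\nabla\mathcal{G}_D\mathcal{H}\mathcal{P}(\nabla\mathcal{G}_D)^T\mathcal{H}^2\mathbf{a}_\e+o(\e^2)$.

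The subtlest piece is $\mathbf{J}_3$, because the naive bounds $\phi_\e^\perp=O(\e)$ in $H^{2s}$ and $\oh'_\e=O(\e^{2s-2})$ in the inner region would only yield $J_{3,l}=O(\e^{2s})$, which is not better than $o(\e^2)$ when $s$ is close to $1/2$. Here I would use parity directly: the expansion \eqref{4.8} shows that in the inner variable $\oh_\e(p_l^\e+\e z)-\oh_\e(p_l^\e)$ decomposes as an even function $P_l(z)$ plus a term $\e z F_l(\p^\e)$, so $\oh'_\e$ near $p_l^\e$ is the sum of an odd function of $z$ of order $\e^{2s-2}$ and the constant $F_l(\p^\e)$. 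By Lemma \ref{le8.3} the leading behavior of $\phi_\e^\perp$ near $p_l^\e$ is the even function $-\e(\sum_i a_i^\e\nabla_{p_i}\hat\xi_l)w(z)$; paired against the odd principal part of $\oh'_\e$, parity annihilates the apparently $O(\e^{2s})$ contribution modulo $h.o.t.$ The remainder involves either the constant piece $F_l(\p^\e)$, which is $O(\e^{2s-1})$ because the reduced equation $W_\e(\p^\e)=0$ and \eqref{6.12} force it, or the subleading odd part of $\phi_\e^\perp-\phi_\e^1=o(\e)$; either way the product is $o(\e)$ and then the prefactor $\e$ gives $J_{3,l}=o(\e^2)$.

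Overall, the main obstacle is not any single computation but the careful simultaneous bookkeeping of (i) which terms are localized near which spike, (ii) the parity in $z$ of each factor, and (iii) the precise asymptotic orders of $P_l$, $F_l(\p^\e)$, $\phi_\e^\perp$, and $\psi_\e^\perp$; once this is organized the three matrix expressions drop out mechanically from the universal constant $c_1=\tfrac13\int_\R w^3\,dy$ produced by the single integral identity above.
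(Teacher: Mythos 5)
For $\mathbf{J}_1$ and $\mathbf{J}_2$ your computation matches the paper's: you subtract the value at $p_l^\e$, invoke Lemmas \ref{le8.1}--\ref{le8.3} together with \eqref{8.30}--\eqref{8.33}, use parity to kill the even-in-$z$ pieces, and the single integral identity $\int_\R z\,w^2w'\,dz=-\tfrac13\int_\R w^3=-c_1$ produces the constant; this is exactly \eqref{8.35}--\eqref{8.36}.

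The estimate for $\mathbf{J}_3$ is where a genuine gap appears. You correctly note (unlike the paper, which asserts the odd increment of $\oh'_\e$ has order $\e$) that the odd principal part of $\oh'_\e$ in the inner variable is of order $\e^{2s-2}$, and that the constant part is $F_l(\p^\e)=O(\e^{2s-1})$ by the reduced equation. But your closing claim that ``either way the product is $o(\e)$'' fails for the cross term: the odd principal part of $\oh'_\e$, which is $O(\e^{2s-2})$, paired with the error $\phi_\e^\perp-\phi_\e^1=o(\e)$ from Lemma \ref{le8.3} (a compactness argument that yields only a qualitative $o(\e)$, not a rate), produces $o(\e^{2s-1})$ after integration; the prefactor $\e$ then gives $o(\e^{2s})$, which is \emph{not} $o(\e^2)$ for $\tfrac12<s<1$. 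To close this you would need either a quantitative improvement $\|\phi_\e^\perp-\phi_\e^1\|=O(\e^{3-2s})$ (or better), or an extra cancellation --- e.g.\ showing that the odd component of $\phi_\e^\perp-\phi_\e^1$ is itself $O(\e^{2s+1})$ via the orthogonality $\phi_\e^\perp\perp\tw'_{\e,l}$ and a refined expansion of the source term in \eqref{8.11} --- and neither is supplied. It is worth noting that the paper's own one-line treatment of $\mathbf{J}_3$ asserts the odd part of $\oh'_\e(p_l^\e+\e y)-\oh'_\e(p_l^\e)$ is of order $\e$, which is not obviously consistent with $P_l(z)=O(\e^{2s-1})$ stated after \eqref{4.8}; so your more careful bookkeeping may in fact have exposed an imprecision in the original proof rather than introduced a new one, but as written your argument does not reach the claimed $o(\e^2)$ either.
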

	
	\begin{proof}	
		The computation of $\mathbf{J}_1$ follows from Lemma \ref{le8.2}. In fact since $\oh'_\e=o(1)$
		\begin{equation}
		\label{8.35}
		\begin{aligned}
		J_{1,l}=&~\sum_{j=1}^2a_j^\e\int_{-\frac{1}{\e}}^{\frac{1}{\e}}\frac{\tw_{\e,l}^2}{\oh_\e^2}(\psi_{\e,j}-\e\oh'_\e\delta_{jl})\tw'_{\e,l}+h.o.t.\\
		=&~\sum_{j=1}^2a_j^\e\int_{-\frac{1}{\e}}^{\frac{1}{\e}}\frac{\tw_{\e,l}^2}{\oh_\e^2}\left([\psi_{\e,j}(y)-\e\oh'_\e(y)\delta_{jl}]-[\psi_{\e,j}(p_l^\e)-\e\oh'_{\e}(p_l^\e)\delta_{jl}]\right)\tw'_{\e,l}dy+o(\e^2)\\
		=~&-\e^2\hat\xi_l\int_{\R}(yw^2w'(y))dy\sum_{j=1}^2a_j^\e\left(\nabla_{p_l^\e}\nabla_{p_j^\e}G_D(p_l^\e,p_j^\e)+q_{lj}\delta_{lj}\right)\hat\xi_j^2+o(\e^2)\\
		=~&c_1\e^2\hat\xi_l\sum_{j=1}^2a_j^\e\left(\nabla_{p_l^\e}\nabla_{p_j^\e}G_D(p_l^\e,p_j^\e)+q_{lj}\delta_{lj}\right)\hat\xi_j^2+o(\e^2),		
		\end{aligned}	
		\end{equation}	
		which, by Lemma \ref{le8.1}, proves the first estimate of \eqref{8.34}. {The estimate for $\mathbf{J}_2$ follows from}
		\begin{equation}
		\label{8.36}
		\begin{aligned}
		{J}_{2,l}=~&\int_{-\frac{1}{\e}}^{\frac{1}{\e}}\frac{\tw_{\e,l}^2}{\oh_\e^2}\psi_\e^\perp\tw'_{\e,l}dy\\	
		=~&\int_{-\frac{1}{\e}}^{\frac{1}{\e}}\frac{\tw_{\e,l}^2}{\oh_{\e}^2}\psi_\e^\perp(p_l^\e)\tw'_{\e,l}dy+\int_{-\frac{1}{\e}}^{\frac{1}{\e}}\frac{\tw_{\e,l}^2}{\oh_{\e}^2}(\psi_\e^\perp(x)-\psi_\e^\perp(p_l^\e))\tw'_{\e,l}dy\\
		=~&\int_{-\frac{1}{\e}}^{\frac{1}{\e}}\frac{\tw_{\e,l}^2}{\oh_{\e}^2}(\psi_\e^\perp(x)-\psi_\e^\perp(p_l^\e))\tw'_{\e,l}dx+o(\e^2),
		\end{aligned}		
		\end{equation}
		{together with \eqref{8.30}, \eqref{8.32}-\eqref{8.33}}. The estimate on $J_{3,l}$ follows from Lemma \ref{le8.3}, the fact that {$\oh_\e(p_l^\e)=\hat\xi_l+O(\e^{2s-1})$ at $p_l^\e$} and the leading order of $\oh'_\e(p_l^\e+\e y)-\oh'_\e(p_l^\e)$ is an odd function of order $\e$.
	\end{proof}
	
	{We can now provide an estimate on the small eigenvalue. From Lemma \ref{le8.4} we have}
	\begin{equation*}
	\begin{split}
	\mathbf{J}_1+\mathbf{J}_2+\mathbf{J}_3
	& =c_1\e^2\mathcal{H}\left((\nabla^2\mathcal{G}_D+\mathcal{Q})\mathcal{H}^2+2\nabla\mathcal{G}_D\mathcal{H}\mathcal{P}(\nabla\mathcal{G}_D)^T\mathcal{H}^2\right){\bf a}_\e+o(\e^2)
	\\
	& =c_1\e^2\mathcal{H}^2\mathcal{M}(\p^\e){\bf a}_\e+o(\e^2),
	\end{split}
	\end{equation*}
	{and therefore by combining} \eqref{8.15} and \eqref{8.16}, we obtain
	\begin{equation}
	\label{8.39}
	c_1\e^2\mathcal{H}^2\mathcal{M}(\p^\e){\bf a}_\e+o(\e^2)=-\la_\e\mathcal{H}^2{\bf a}_\e\int_{\R}(w'(y))^2dy(1+o(1)).
	\end{equation}
	From this equation we see that
	$$\la_\e=-\e^2c_2\lambda_{\mathcal{M}(\p^0)}(1+o(1)),$$	
	where $c_2$ is a positive constant and $\lambda_{\mathcal{M}(\p^0)}$ is the non-zero left eigenvalue of $\mathcal{M}(\p^0)$ given in \eqref{2.me}. In Appendix \ref{app:greens-func} we derive a quickly converging series expression for the Green's function for which we can interchanging summation and differentiation to calculate its second derivatives. Numerical calculations then indicate that $\partial_x^2G_D(x,0)>0$ at $x=1$ and using \eqref{2.me} we therefore deduce that the non-zero left eigenvalue of $\mathcal{M}(\p^0)$ is positive. The small $\la_\varepsilon$ is therefore negative (stable) so that the two spike pattern is linearly stable with respect to the small eigenvalues. In particular, linear stability is determined solely by the eigenvalues of $\mathcal{B}$ and the proof of Theorem \ref{th1.stability} is therefore complete.

\section{Conclusion and Open Problems}\label{sec:conclusion}

In this paper we have proven the existence and rigorously analyzed the stability of both symmetric and asymmetric two spike equilibrium solutions of the fractional one-dimensional Gierer-Meinhardt system \eqref{1.fgm} with periodic boundary conditions. In addition, by using a combination of formal asymptotic and numerical methods we have calculated asymptotic approximations for $N$-spike quasi-equilibrium solutions and derived a system of ODEs governing their slow dynamics on an $\mathcal{O}(\varepsilon^{-2})$ timescale as well as a system of NLEPs governing their linear stability on an $\mathcal{O}(1)$ timescale. Our findings indicate that a single spike solution may be destabilized or stabilized with respect to oscillatory instabilities by decreasing the fractional exponents for the activator, $s_1$, or inhibitor, $s_2$, respectively. On the other hand we found that decreasing the fractional exponent for the inhibitor, $s_2$, has a stabilizing effect on the stability of symmetric two-spike solutions with respect to competition instabilities. Finally we determined that asymmetric two-spike solutions are always linearly unstable with respect to competition instabilities. In all one- and two-spike cases we found that the equilibrium spike patterns are linearly stable with respect to the slow dynamics and that this is a consequence of the choice of periodic boundary conditions.

We conclude this section with an outline of open problems and directions for future research. The first open problem is to prove the existence and to provide a complete classification of all $N$-spike equilibrium solutions to the fractional one-dimensional Gierer-Meinhardt model. In particular a key question is whether, as in the classical Gierer-Meinhardt model \cite{wei_2007_existence}, asymmetric $N$-spike solutions are generated by sequences of spikes of two types. In addition it would be interesting to extend our results to the fractional Gierer-Meinhardt model with different boundary conditions (e.g.\@ Neumann or Dirichlet) for which our analysis needs to be extended in order to provide regularity estimates at the boundaries $x=\pm1$. Another interesting direction for future research is to investigate the behaviour of solutions to the fractional GM model in the $D\ll 1$ regimes for which the classical GM model is known to exhibit distinct behaviour such as spike splitting and clustering. Finally a detailed analysis, either rigorous or formal, of localized solutions for different reaction-kinetics as well as in higher-dimensional domains would be a fruitful direction of future research.

\addcontentsline{toc}{section}{References}
\bibliographystyle{abbrv}
\bibliography{biblio}

\begin{thebibliography}{10}

\bibitem{bressloff_2014}
P.~Bressloff.
\newblock {\em Stochastic Processes in Cell Biology}.
\newblock Interdisciplinary Applied Mathematics. Springer International
  Publishing, 2014.

\bibitem{chen_2011}
W.~Chen and M.~J. Ward.
\newblock The stability and dynamics of localized spot patterns in the
  two-dimensional gray–scott model.
\newblock {\em SIAM Journal on Applied Dynamical Systems}, 10(2):582--666,
  2011.

\bibitem{dancer_2001_hopf}
E.~N. Dancer.
\newblock On stability and {H}opf bifurcations for chemotaxis systems.
\newblock volume~8, pages 245--256. 2001.
\newblock IMS Workshop on Reaction-Diffusion Systems (Shatin, 1999).

\bibitem{NIST:DLMF}
{\it NIST Digital Library of Mathematical Functions}.
\newblock http://dlmf.nist.gov/, Release 1.0.28 of 2020-09-15.
\newblock F.~W.~J. Olver, A.~B. {Olde Daalhuis}, D.~W. Lozier, B.~I. Schneider,
  R.~F. Boisvert, C.~W. Clark, B.~R. Miller, B.~V. Saunders, H.~S. Cohl, and
  M.~A. McClain, eds.

\bibitem{frank_2013_uniqueness}
R.~L. Frank and E.~Lenzmann.
\newblock Uniqueness of non-linear ground states for fractional {L}aplacians in
  {$\Bbb{R}$}.
\newblock {\em Acta Math.}, 210(2):261--318, 2013.

\bibitem{gierer_1972}
A.~Gierer and H.~Meinhardt.
\newblock A theory of biological pattern formation.
\newblock {\em Kybernetik}, 12(1):30--39, Dec 1972.

\bibitem{golovin_2008}
A.~A. Golovin, B.~J. Matkowsky, and V.~A. Volpert.
\newblock Turing pattern formation in the brusselator model with
  superdiffusion.
\newblock {\em SIAM Journal on Applied Mathematics}, 69(1):251--22, 2008.
\newblock Copyright - Copyright] © 2008 Society for Industrial and Applied
  Mathematics; Last updated - 2012-01-15.

\bibitem{gomez_2019}
D.~Gomez, M.~J. Ward, and J.~Wei.
\newblock The linear stability of symmetric spike patterns for a bulk-membrane
  coupled {G}ierer-{M}einhardt model.
\newblock {\em SIAM J. Appl. Dyn. Syst.}, 18(2):729--768, 2019.

\bibitem{gui_1999}
C.~Gui and J.~Wei.
\newblock Multiple interior peak solutions for some singularly perturbed
  {N}eumann problems.
\newblock {\em J. Differential Equations}, 158(1):1--27, 1999.

\bibitem{gui_2000}
C.~Gui, J.~Wei, and M.~Winter.
\newblock Multiple boundary peak solutions for some singularly perturbed
  {N}eumann problems.
\newblock {\em Ann. Inst. H. Poincar\'{e} Anal. Non Lin\'{e}aire},
  17(1):47--82, 2000.

\bibitem{henry_2005}
B.~Henry, T.~Langlands, and S.~Wearne.
\newblock Turing pattern formation in fractional activator-inhibitor systems.
\newblock {\em Physical review. E, Statistical, nonlinear, and soft matter
  physics}, 72:026101, 09 2005.

\bibitem{huang_2013}
Y.~Huang and A.~Oberman.
\newblock Numerical methods for the fractional laplacian: A finite
  difference-quadrature approach.
\newblock {\em SIAM Journal on Numerical Analysis}, 52, 11 2013.

\bibitem{iron_2001}
D.~Iron, M.~J. Ward, and J.~Wei.
\newblock The stability of spike solutions to the one-dimensional
  {G}ierer-{M}einhardt model.
\newblock {\em Phys. D}, 150(1-2):25--62, 2001.

\bibitem{kolokolnikov_2020}
T.~Kolokolnikov, F.~Paquin-Lefebvre, and M.~J. Ward.
\newblock {Stable asymmetric spike equilibria for the Gierer–Meinhardt model
  with a precursor field}.
\newblock {\em IMA Journal of Applied Mathematics}, 85(4):605--634, 07 2020.

\bibitem{kolokolnikov_2009}
T.~Kolokolnikov, M.~J. Ward, and J.~Wei.
\newblock Spot self-replication and dynamics for the schnakenberg model in a
  two-dimensional domain.
\newblock {\em J. Nonlinear Sci.}, 19(1):1--56, 2009.

\bibitem{metzler_2004}
R.~Metzler and J.~Klafter.
\newblock The restaurant at the end of the random walk: recent developments in
  the description of anomalous transport by fractional dynamics.
\newblock {\em Journal of Physics A: Mathematical and General},
  37(31):R161--R208, jul 2004.

\bibitem{murray_2003}
J.~D. Murray.
\newblock {\em Mathematical biology. {II}}, volume~18 of {\em Interdisciplinary
  Applied Mathematics}.
\newblock Springer-Verlag, New York, third edition, 2003.
\newblock Spatial models and biomedical applications.

\bibitem{nec_2012_levi}
Y.~Nec.
\newblock Spike-type solutions to one dimensional gierer–meinhardt model with
  lévy flights.
\newblock {\em Studies in Applied Mathematics}, 129(3):272--299, 2012.

\bibitem{nec_2012_sub}
Y.~Nec and M.~Ward.
\newblock Dynamics and stability of spike-type solutions to a one dimensional
  gierer–meinhardt model with sub-diffusion.
\newblock {\em Physica D: Nonlinear Phenomena}, 241:947–963, 05 2012.

\bibitem{oliveira_2019}
F.~Oliveira, R.~Ferreira, L.~Lapas, and M.~Vainstein.
\newblock Anomalous diffusion: A basic mechanism for the evolution of
  inhomogeneous systems, 02 2019.

\bibitem{reverey_2015}
J.~Reverey, J.-H. Jeon, H.~Bao, M.~Leippe, R.~Metzler, and C.~Selhuber-Unkel.
\newblock Superdiffusion dominates intracellular particle motion in the
  supercrowded space of pathogenic acanthamoeba castellanii.
\newblock {\em Scientific reports}, 5:11690, 07 2015.

\bibitem{ruuth_1995}
S.~J. Ruuth.
\newblock Implicit-explicit methods for reaction-diffusion problems in pattern
  formation.
\newblock {\em J. Math. Biol.}, 34(2):148--176, 1995.

\bibitem{ward_2002_asymmetric}
M.~J. Ward and J.~Wei.
\newblock Asymmetric spike patterns for the one-dimensional
  {G}ierer-{M}einhardt model: equilibria and stability.
\newblock {\em European J. Appl. Math.}, 13(3):283--320, 2002.

\bibitem{ward_2003}
M.~J. Ward and J.~Wei.
\newblock Hopf bifurcation and oscillatory instabilities of spike solutions for
  the one-dimensional gierer-meinhardt model.
\newblock {\em J. Nonlinear Science}, 13(2):209--264, 2003.

\bibitem{wei_1998}
J.~Wei.
\newblock On the interior spike layer solutions to a singularly perturbed
  neumann problem.
\newblock {\em Tohoku Math. J. (2)}, 50(2):159--178, 1998.

\bibitem{wei_2001_gm_2d_weak}
J.~Wei and M.~Winter.
\newblock Spikes for the two-dimensional gierer-meinhardt system: the weak
  coupling case.
\newblock {\em J. Nonlinear Sci.}, 11(6):415--458, 2001.

\bibitem{wei_2002_gm_2d_strong}
J.~Wei and M.~Winter.
\newblock Spikes for the {G}ierer-{M}einhardt system in two dimensions: the
  strong coupling case.
\newblock {\em J. Differential Equations}, 178(2):478--518, 2002.

\bibitem{wei_2007_existence}
J.~Wei and M.~Winter.
\newblock Existence, classification and stability analysis of multiple-peaked
  solutions for the {G}ierer-{M}einhardt system in {$\bold R^1$}.
\newblock {\em Methods Appl. Anal.}, 14(2):119--163, 2007.

\bibitem{wei_2014_book}
J.~Wei and M.~Winter.
\newblock {\em Mathematial aspects of pattern formation in biological systems},
  volume 189.
\newblock Applied Mathematical Sciences Series, Springer, 2014.

\bibitem{wei_2019_multi_bump}
J.~Wei and W.~Yang.
\newblock Multi-bump ground states of the fractional {G}ierer-{M}einhardt
  system on the real line.
\newblock {\em J. Dynam. Differential Equations}, 31(1):385--417, 2019.

\bibitem{winter_2009}
M.~Winter and J.~Wei.
\newblock On the gierer-meinhardt system with precursors.
\newblock 2009.

\end{thebibliography}

\appendix

\section{The nonlocal eigenavlue problem}\label{app:nonlocal}
In this section, we prove Theorem \ref{th3.stability}. We consider the eigenvalue problem
\begin{equation}
\label{a.1}
\begin{aligned}
(-\Delta)^s\phi+\phi-2w\phi+\gamma\frac{\int_{\R}w\phi dx}{\int_{\R}w^2dx}+\alpha\phi=0,\quad \phi\in H^{2s}(\R).
\end{aligned}
\end{equation}
Our aim is to show that the above eigenvalue problem has an eigenvalue with real part when $\gamma\in(0,1)$ and the real part of the eigenvalue is always negative if $\gamma>1$ and $s>\frac14$.

Before we give the proof of Theorem \ref{th3.stability}, we first present the following result.
\begin{proposition}[\cite{frank_2013_uniqueness}]
	\label{pra.1}
	The  eigenvalue problem
	\begin{equation}
	\label{a.eigen}
	(-\Delta)^s\phi+\phi-2w\phi+\mu\phi=0~\mathrm{in}~\R,\quad\phi\in H^{2s}(\R)
	\end{equation}
	admits the following set of eigenvalues:
	\begin{equation}
	\label{a.eigen1}
	\mu_1>0,\quad \mu_2=0,\quad \mu_{3}<0,\quad \cdots.
	\end{equation}
	Moreover, the eigenfunction corresponding to $\mu_1$ is radial and of costant sign.
\end{proposition}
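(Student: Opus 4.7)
The plan is to establish this as a consequence of the standard spectral theory for the linearized operator $L_0 := (-\Delta)^s + 1 - 2w$, combined with the non-degeneracy result of Frank–Lenzmann–Silvestre (which is the cited reference). Observe that the eigenvalues $\mu$ in \eqref{a.eigen} are in one-to-one correspondence with eigenvalues of $L_0$ via $\sigma(L_0) = -\sigma(\{\mu\})$, so $\mu_1>0$ means $L_0$ has a negative ground-state eigenvalue, $\mu_2=0$ is the translational kernel, and $\mu_i<0$ for $i\geq 3$ means the rest of the discrete spectrum of $L_0$ lies strictly above zero.

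\textbf{Step 1 (functional setup and essential spectrum).} First I would set up $L_0$ as a self-adjoint operator on $L^2(\mathbb{R})$ with form domain $H^s(\mathbb{R})$. By Proposition \ref{pr3.1}(a) the ground state $w$ satisfies $w(x)=O(|x|^{-1-2s})$, so multiplication by $-2w$ is a compact (indeed relatively compact) perturbation of $(-\Delta)^s+1$. Weyl's theorem then yields $\sigma_{\mathrm{ess}}(L_0)=[1,\infty)$, so everything strictly below $1$ is discrete spectrum of finite multiplicity.

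\textbf{Step 2 (the negative principal eigenvalue $-\mu_1$).} Testing the Rayleigh quotient with $w$ itself and using the equation $(-\Delta)^s w + w - w^2 = 0$ gives
\begin{equation*}
\langle L_0 w, w\rangle = \int_{\mathbb{R}}\bigl(w^2 - 2w^3\bigr)dx = -\int_{\mathbb{R}}w^3 dx < 0,
\end{equation*}
so $\inf \sigma(L_0) < 0$ and the infimum is attained by a minimizer $\phi_1$. To get that $\phi_1$ is simple, radial and of one sign I would invoke the strong maximum/positivity principle for the fractional operator $(-\Delta)^s+V$ (a Perron–Frobenius type argument, available in the Frank–Lenzmann framework): since the principal eigenfunction can be taken $\geq 0$ (replace $\phi_1$ by $|\phi_1|$, which does not increase the fractional Dirichlet form), and then strict positivity plus simplicity follow from the strong maximum principle for $(-\Delta)^s + (1-2w+\mu_1)$. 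Radial symmetry is obtained from the uniqueness of the positive ground state of the perturbed operator, together with the radial symmetry of the potential $w$. This gives $\mu_1>0$ with a one-dimensional, positive, radial eigenspace.

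\textbf{Step 3 (the zero eigenvalue and non-degeneracy).} Differentiating the equation $(-\Delta)^s w + w - w^2 = 0$ in $x$ yields $L_0 w' = 0$, and $w' \not\equiv 0$ by Proposition \ref{pr3.1}(a), so $0 \in \sigma(L_0)$, giving $\mu_2=0$. The core non-trivial input is the non-degeneracy result of Frank–Lenzmann–Silvestre, which states $\mathrm{Ker}(L_0) = \mathrm{span}\{w'\}$; this is exactly Proposition \ref{pr3.1}(b). Since $\phi_1$ is even and $w'$ is odd, $\phi_1$ and $w'$ are orthogonal and distinct, so $\mu_1$ and $\mu_2$ are genuinely two different eigenvalues.

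\textbf{Step 4 (the remaining discrete spectrum is positive).} By self-adjointness and the min-max principle, any eigenvalue $\mu_k$ for $k\geq 3$ of \eqref{a.eigen} (equivalently, eigenvalue $-\mu_k$ of $L_0$) corresponds to an eigenfunction orthogonal to $\phi_1$ and $w'$. Using the min-max characterization on the orthogonal complement and the fact that $0$ is isolated in the discrete spectrum (by non-degeneracy from Step 3), every such eigenvalue of $L_0$ must be strictly positive, giving $\mu_k<0$ for all $k\geq 3$.

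The main obstacle is clearly Step 3: the non-degeneracy $\mathrm{Ker}(L_0)=\mathrm{span}\{w'\}$ for the fractional problem is a delicate result (it fails to follow from the local ODE arguments used in the classical $s=1$ case, since the fractional Laplacian is nonlocal). This is precisely why the proposition defers to \cite{frank_2013_uniqueness}. Once that non-degeneracy is granted, Steps 1, 2, and 4 are essentially routine self-adjoint spectral theory combined with the variational/positivity arguments sketched above.
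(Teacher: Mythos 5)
The paper does not prove Proposition \ref{pra.1}; it simply cites it as a known result from \cite{frank_2013_uniqueness}, so there is no ``paper's proof'' to match against. Your reconstruction is substantively correct in Steps 1--3: the Weyl-perturbation argument for $\sigma_{\mathrm{ess}}(L_0)=[1,\infty)$, the Rayleigh quotient computation $\langle L_0 w, w\rangle=-\int w^3<0$, the Perron--Frobenius type positivity and simplicity of the principal eigenfunction, and the identification $L_0 w'=0$ together with the Frank--Lenzmann--Silvestre non-degeneracy $\mathrm{Ker}(L_0)=\mathrm{span}\{w'\}$ are all standard and correctly deployed.

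However, Step 4 has a genuine gap, and it is in fact circular. What must be shown is that $L_0$ has \emph{exactly one} negative eigenvalue (Morse index one), i.e.\ that there is no second eigenvalue strictly between $\lambda_1=-\mu_1$ and $0$. You argue by min-max on the orthogonal complement of $\{\phi_1,w'\}$, but to conclude from this that the third eigenvalue of $L_0$ is positive you would need to already know that $\langle L_0\phi,\phi\rangle\geq 0$ on $\{\phi_1,w'\}^\perp$ --- and that statement \emph{is} the Morse index bound you are trying to prove. Non-degeneracy of the kernel (that $0$ is a simple eigenvalue, hence isolated) tells you nothing about how many eigenvalues lie below it; a priori $L_0$ could have several negative eigenvalues with a one-dimensional kernel on top. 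The Morse index bound is itself a nontrivial output of the Frank--Lenzmann framework: it follows from the fact that $w$ arises as a minimizer of the Gagliardo--Nirenberg--Sobolev (Weinstein) quotient, whose second variation is then non-negative modulo a codimension-one constraint, forcing $n_-(L_0)\leq 1$ (see, e.g., Lemma C.3 in Frank--Lenzmann's Acta Math.\ paper). You should explicitly import this minimization-based Morse index estimate from \cite{frank_2013_uniqueness}, alongside the non-degeneracy, rather than attempt to derive it from self-adjointness and the min-max principle alone.
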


\begin{proof}[Proof of Theorem \ref{th3.stability}-(1).]
	The original problem is equivalent to finding a positive zero root  of the function $\mathcal{F}(\alpha)$ defined by
	\begin{equation*}
	\mathcal{F}(\alpha)=\int_{\R}w^2dx+\gamma\int_{\R}w(L_0+\alpha)^{-1}w^2dx,
	\end{equation*}
	where
	$$L_0\phi=(-\Delta)^s\phi+\phi-2w\phi.$$
	By the above proposition, $L_0$ has a unique eigenvalue $\mu_1>0$ with an eigenfunction of constant sign. We now consider $\mathcal{F}(\alpha)$ in the interval $(0,\mu_1).$ Since $L_0^{-1}w^2=-w$, we deduce that
	\begin{equation}
	\label{a.small}
	\mathcal{F}(0)=(1-\gamma)\int_{\R}w^2dx>0,
	\end{equation}
	provided $\gamma<1$. Next, as $\alpha\to\mu_1^{-}$, we have that
	\begin{equation}
	\label{a.small1}
	\int_{\R}w(L_0+\alpha)^{-1}w^2dx\to -\infty.
	\end{equation}
	Hence, we get from \eqref{a.small1} that
	\begin{equation}
	\label{a.small2}	
	\mathcal{f}(\alpha)\to-\infty\quad \mathrm{as}\quad \alpha\to\mu_1^{-1},	
	\end{equation}
	when $\gamma\in(0,1)$. By \eqref{a.small}, \eqref{a.small2} and the continuity of $\mathcal{f}(\alpha)$, we can find a $\alpha_0\in(0,\mu_1)$ such that $f(\alpha_0)=0$ whenever $\gamma\in(0,1).$
\end{proof}

Next, we shall study \eqref{a.1} when $\gamma>1$. We shall prove that the real part of the eigenvalue is negative in any case. To this end, we introduce some notation and make some preparations. Set
\begin{equation}
\label{a.2}
\mathcal{L}\phi:=L_0\phi+\gamma\frac{\int_{\mathbb{R}}w\phi dx}{\int_{\mathbb{R}}w^2dx}w^2,\quad  \phi\in H^{2s}(\R).
\end{equation}

According to the definition of $L$, we can easily see that $L$ is not self-adjoint. Let
$$X_0:=\mathrm{kernel}(L_0)=\mathrm{Span}\left\{\frac{\partial w}{\partial x}\right\}.$$
Then
\begin{equation}
\label{a.3}
L_0w=-w^2,\quad L_0\left(w+\frac{1}{2s}x\cdot\nabla w\right)=-w.
\end{equation}
Hence
\begin{equation}
\label{a.4}
\int_{\R}(L_0^{-1}w)wdx=\int_{\R}\left(-\frac{1}{2s}x\cdot\nabla w-w\right)wdx=\frac{1-4s}{4s}\int_{\R}w^2dx,
\end{equation}
and
\begin{equation}
\label{a.5}
\int_{\R}(L_0^{-1}w)w^pdx=-\int_{\R}L_0^{-1}wL_0wdx=-\int_{\R}w^2dx.
\end{equation}

Before we give the proof of Theorem \ref{th3.stability}. We present the following important lemma.

\begin{lemma}
	\label{lea.2}
	Let $L_1$ be an operator defined by
	\begin{equation}
	\label{a.6}
	L_1\phi=L_0\phi+\frac{\int_{\R}w\phi dx}{\int_{\R}w^2dx}w^2+
	\frac{\int_{\R}w^2\phi dx}{\int_{\R}w^2dx}w-\frac{\int_{\R}w^3dx\int_{\R}w\phi dx}{\left(\int_{\R}w^2dx\right)^2}w.
	\end{equation}
	Then we have
	\begin{enumerate}
		\item [(1)] $L_1$ is self-adjoint and the kernel of $L_1$ (denoted by $X_1$) is $\mathrm{Span}\left\{w,~\frac{\partial w}{\partial x}\right\}$.
		
		\item [(2)] There exists a positive constant $a_1>0$ such that
		\begin{equation}
		\label{a.7}
		\begin{aligned}
		L_1(\phi,\phi):=~&\int_{\R}\left(|(-\Delta)^{\frac{s}{2}}\phi|^2+\phi^2-2w\phi^2\right)dx
		+2\frac{\int_{\R}w\phi dx\int_{\R}w^2\phi dx}{\int_{\R}w^2dx}-\frac{\int_{\R}w^3dx\left(\int_{\R}w\phi dx\right)^2}{\left(\int_{\R}w^2dx\right)^2}\\
		\geq~&a_1 d_{L^2(\R)}^2(\phi,X_1),
		\end{aligned}
		\end{equation}
		for all $\phi\in H^{2s}(\R)$, where $d_{L^2(\R)}$ means the distance in $L^2$-norm.
	\end{enumerate}
\end{lemma}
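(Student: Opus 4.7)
\textbf{Proof proposal for Lemma \ref{lea.2}.} Part (1) is a direct computation. Write $L_1 = L_0 + N$ with $N$ collecting the three nonlocal terms. Since $L_0$ is manifestly self-adjoint and the three double integrals produced by $\langle N\phi,\psi\rangle$ match those in $\langle \phi, N\psi\rangle$ after interchanging the roles of $\phi$ and $\psi$, $L_1$ is self-adjoint. To verify $X_1 \subset \ker L_1$: using $L_0 w = -w^2$ from \eqref{a.3} we compute
\[
L_1 w = -w^2 + w^2 + \tfrac{\int w^3}{\int w^2}w - \tfrac{\int w^3}{\int w^2}w = 0,
\]
while $L_1\partial_x w = 0$ follows from $L_0 \partial_x w = 0$ together with $\int w\, \partial_x w = \tfrac{1}{2}\int (w^2)' = 0$ and $\int w^2 \partial_x w = \tfrac{1}{3}\int (w^3)' = 0$, which kill every nonlocal term. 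For the reverse inclusion, suppose $L_1\phi = 0$ and set $c_1 = \int w\phi/\int w^2$, $c_2 = \int w^2\phi/\int w^2$, $c_3 = \int w^3/\int w^2$. Then $L_0\phi = -c_1 w^2 + (c_1 c_3 - c_2)w$, and inverting $L_0$ via \eqref{a.3} gives
\[
\phi = c_1 w + (c_2 - c_1 c_3)\bigl(w + \tfrac{1}{2s}x\cdot\nabla w\bigr) + \beta\,\partial_x w
\]
for some $\beta$. Taking the $L^2$ inner product with $w$ and using \eqref{a.4}, i.e.\ $\int w(w+\tfrac{1}{2s}x\cdot\nabla w) = \tfrac{4s-1}{4s}\int w^2$, yields the compatibility condition $(c_2 - c_1 c_3)\tfrac{4s-1}{4s} = 0$. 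Since $s > 1/4$, we get $c_2 = c_1 c_3$ and therefore $\phi = c_1 w + \beta \partial_x w \in X_1$.

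For part (2), the decisive observation is that when $\phi$ is $L^2$-orthogonal to $w$, every nonlocal contribution to $L_1(\phi,\phi)$ carries a factor $\int w\phi$ and hence vanishes, leaving $L_1(\phi,\phi) = L_0(\phi,\phi)$. Decompose an arbitrary $\phi \in H^{2s}(\R)$ as $\phi = aw + b\,\partial_x w + \phi^\perp$ with $\phi^\perp \in X_1^\perp$ in $L^2$ (possible since $\langle w, \partial_x w\rangle = 0$). Self-adjointness of $L_1$ and $L_1 X_1 = \{0\}$ give $L_1(\phi,\phi) = L_1(\phi^\perp,\phi^\perp)$, and $d_{L^2}^2(\phi,X_1) = \|\phi^\perp\|_{L^2}^2$. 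Thus the statement reduces to the coercivity
\[
L_0(\phi,\phi) \ge a_1\|\phi\|_{L^2}^2 \qquad\text{for all } \phi \in X_1^\perp \cap H^{2s}(\R).
\]

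The main obstacle is proving this last inequality. I plan a variational/contradiction argument. Set $\lambda^* = \inf\{L_0(\phi,\phi) : \|\phi\|=1,\ \phi \perp w,\ \phi \perp \partial_x w\}$; a minimizer (whose existence follows from the compactness of $\phi \mapsto \int w\phi^2$ on bounded sets of $H^s(\R)$ thanks to the algebraic decay of $w$) satisfies $L_0\phi = \lambda^*\phi + \mu w$ by Lagrange multipliers. If $\mu = 0$, then $\phi$ is an eigenfunction of $L_0$; the unique negative eigenvalue of $L_0$ (Proposition \ref{pra.1}) has a constant-sign eigenfunction $\Phi_1$, so $\langle w,\Phi_1\rangle \neq 0$, meaning $\phi \neq c\Phi_1$ and hence $\lambda^* \ge 0$. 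If $\mu \neq 0$, then $\phi = \mu(L_0 - \lambda^*)^{-1}w$ and the constraint $\int w\phi = 0$ is equivalent to $h(\lambda^*) = 0$, where $h(\lambda) := \int w(L_0 - \lambda)^{-1}w$. Expanding in the eigenbasis $\{\Psi_k\}$ of $L_0$ with eigenvalues $\nu_1 < \nu_2 = 0 < \nu_3 \le\cdots$, one has $h(\lambda) = \sum_k a_k^2/(\nu_k - \lambda)$ with $a_k = \langle w,\Psi_k\rangle$; in particular $a_2 = 0$ because $\Psi_2 \propto \partial_x w$ is odd.

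The sign structure of $h$ now forces the contradiction. Since $h'(\lambda) = \sum_k a_k^2/(\nu_k - \lambda)^2 > 0$ on each component of its domain, $h$ is strictly increasing there. On $(-\infty, \nu_1)$ every term is positive so $h > 0$; and on $(\nu_1, 0]$, the vanishing $a_2 = 0$ makes $h$ continuous through $\lambda = 0$, so $h$ increases from $-\infty$ at $\nu_1^+$ to $h(0)$. By \eqref{a.4}, $h(0) = \tfrac{1-4s}{4s}\int w^2 < 0$ precisely because $s > 1/4$; hence $h < 0$ on all of $(\nu_1,0]$. Thus $h(\lambda^*) = 0$ has no solution with $\lambda^* \le 0$, ruling out the case $\mu \neq 0$ and giving $\lambda^* \ge 0$. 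Strict positivity $\lambda^* > 0$ follows by rerunning the Lagrange analysis at $\lambda^* = 0$: $\mu = 0$ would force $\phi \in \ker L_0 = \mathrm{Span}\{\partial_x w\} \subset X_1$, contradicting $\phi \in X_1^\perp \setminus\{0\}$; $\mu \neq 0$ would force $\int w\phi = -\mu\tfrac{4s-1}{4s}\int w^2 \neq 0$, again a contradiction. This yields the desired $a_1 > 0$ and completes the proof.
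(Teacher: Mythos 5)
Your proof is correct and, while it rests on the same two pillars as the paper's (the identities $L_0 w = -w^2$, $L_0(w+\tfrac{1}{2s}x\cdot\nabla w)=-w$ from \eqref{a.3}--\eqref{a.4}, and the sign of $\int_\R w L_0^{-1}w\,dx = \tfrac{1-4s}{4s}\int_\R w^2\,dx$), it is organized in a genuinely different way. For the reverse inclusion $\ker L_1 \subset X_1$ in (1), the paper extracts the constraint on the coefficient of $w+\tfrac{1}{2s}x\cdot\nabla w$ by applying the linear functional $c_1(\cdot)$ to both sides of the decomposition and using $c_1(w)=c_1(\partial_x w)=0$ to get a fixed-point relation; you instead pair the decomposition with $w$ in $L^2$ and read off $(c_2-c_1c_3)\tfrac{4s-1}{4s}=0$ directly. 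Both hinge on $s>\tfrac14$, but your route is cleaner and sidesteps a coefficient mismatch in the paper's displayed self-consistency computation. For (2), the paper argues by contradiction: assuming coercivity fails, it produces a negative eigenvalue $-\alpha$ of $L_1$ with eigenfunction in $X_1^\perp$, uses $\int w\phi=0$ to collapse the nonlocal terms, and shows the resulting scalar condition $h_1(\alpha)=\int_\R w(L_0+\alpha)^{-1}w\,dx=0$ is impossible for $\alpha>0$ since $h_1(0)<0$ and $h_1'<0$. You make the simplification explicit up front -- on $X_1^\perp$ the quadratic form $L_1(\phi,\phi)$ reduces to $L_0(\phi,\phi)$ -- and then run a variational minimization with Lagrange multipliers, arriving at the same scalar function (your $h(\lambda)$ is the paper's $h_1(-\lambda)$) and the same sign obstruction. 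The two are spectrally equivalent, but yours makes the reduction to a pure $L_0$-coercivity statement transparent and handles the endpoint $\lambda^*=0$ a bit more carefully. One small thing worth noting: the paper's proof contains a sign typo ("due to $s<\tfrac14$" where it should read $s>\tfrac14$); your write-up uses the correct hypothesis throughout.
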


\begin{proof}
	By \eqref{a.7}, $L_1$ is self-adjoint. It is easy to see that $w,\frac{\partial w}{\partial y}\in\mathrm{Kernel}(L_1)$. On the other hand, if $\phi\in\mathrm{Kernel}(L_1)$, then by Proposition \ref{pra.1}
	\begin{equation*}
	L_0\phi=-c_1(\phi)w-c_2(\phi)w^2=c_1(\phi)L_0(w+\frac{1}{2s}x\cdot\nabla w)+c_2(\phi)L_0(w),
	\end{equation*}
	where
	\begin{equation*}
	c_1(\phi)=\frac{\int_{\R}w^2\phi dx}{\int_{\R}w^2dx}-\frac{\int_{\R}w^3dx\int_{\R}w\phi dx}{\left(\int_{\R}w^2dx\right)^2},\quad
	c_2(\phi)=\frac{\int_{\R}w\phi dx}{\int_{\R}w^2dx}.
	\end{equation*}
	Hence
	\begin{equation}
	\label{a.8}
	\phi-c_1(\phi)(w+\frac{1}{2s}x\cdot\nabla w)-c_2(\phi)w \in \mathrm{kernel}(L_0).
	\end{equation}
	Note that
	\begin{align*}
	c_1(\phi)=~&c_1(\phi)\frac{\int_{\R}w^2\left(w+\frac{1}{2s}
		x\cdot\nabla w\right)dx}{\int_{\R}w^2dx}
	-c_1(\phi)\frac{\int_{\R}w^3dx\int_{\R}w(w+\frac{1}{2s}x\cdot\nabla w)dx}{\left(\int_{\R}w^2dx\right)^2}\\
	=~&c_1(\phi)-c_1(\phi)(1-\frac{1}{4s})\frac{\int_{\R}w^3dx}{\int_{\R}w^2dx}
	\end{align*}
	by \eqref{a.4} and \eqref{a.5}. This implies that $c_1(\phi)=0$ for $s>\frac14$. By \eqref{a.8} and Proposition \ref{pra.1}, we prove the first conclusion.
	\medskip
	
	It remains to prove (2). Suppose it is not true. Then by the first conclusion there exists $(\alpha,\phi)$ such that
	$(i)$ $\alpha$ is real and positive, $(ii)$ $\phi\perp w,~\phi\perp\frac{\partial w}{\partial x}$ and $(iii)$ $L_1(\phi)+\alpha\phi=0.$
	
	We shall show the above conclusion is not possible. From $(ii)$ and $(iii)$ we have
	\begin{equation}
	\label{a.9}
	(L_0+\alpha)\phi+\frac{\int_{\R}w^2\phi dx}{\int_{\R}w^2dx}w=0.
	\end{equation}
	First we claim that $\int_{\R}w^2\phi\neq 0$. In fact if $\int_{\R}w^2\phi=0$, then $-\alpha<0$ is an eigenvalue of $L_0$. By Proposition \ref{pra.1}, $-\alpha=\mu_1$ and $\phi$ has costant sign. This contradicts with the fact that $\phi\perp w$. Therefore $-\alpha\neq\mu_1,0$ and hence $L_0+\alpha$ is invertible in $X_0^\perp$. So \eqref{a.9} implies
	\begin{equation*}
	\phi=-\frac{\int_{\R}w^2\phi dx}{\int_{\R}w^2}(L_0+\alpha)^{-1}w.
	\end{equation*}
	Thus
	\begin{equation*}
	\int_{\R}w^2\phi dx=-\frac{\int_{\R}w^2\phi dx}{\int_{\R}w^2dx}\int_{\R}((L_0+\alpha)^{-1}w)w^2dx,
	\end{equation*}
	which implies
	\begin{equation*}
	\int_{\R}w^2dx=-\int_{\R}((L_0+\alpha)^{-1}w)w^2dx
	=\int_{\R}((L_0+\alpha)^{-1}w)((L_0+\alpha)w-\alpha w)dx,
	\end{equation*}
	hence
	\begin{equation}
	\label{a.10}
	\int_{\R}((L_0+\alpha)^{-1}w)wdx=0.
	\end{equation}
	Let $h_1(\alpha)=\int_{\R}((L_0+\alpha)^{-1}w)wdx$, then $$h_1(0)=\int_{\R}(L_0^{-1}w)wdx=-\int_{\R}(w+\frac{1}{2s}x\cdot\nabla w)w=(\frac{1}{4s}-1)\int_{\R}w^2dx<0,$$
	due to $s<\frac14$. Moreover,
	$$h_1'(\alpha)=-\int_{\R}(L_0+\alpha)^{-2}ww=-\int_{\R}((L_0+\alpha)^{-1}w)^2dx<0.$$
	This shows that $h_1(\alpha)<0$ for all $\alpha\in(0,\mu_1)$. Clearly, $h_1(\alpha)>0$ for all $\alpha\in(\mu_1,\infty)$ since $\lim\limits_{\alpha\to+\infty}h_1(\alpha)=0$. This is a contradiction to \eqref{a.10} and we finish the proof.
\end{proof}

\begin{proof}[Proof of Theorem \ref{th3.stability}-(2)-(3).]
	We now finish the proof of Theorem \ref{th3.stability}-(2) and (3). First, we prove (2). Let
	$\alpha_0=\alpha_R+i\alpha_I$ and $\phi=\phi_R+i\phi_I$. Since $\alpha_0\neq0$, we can choose $\phi\perp\mathrm{kernel}(L_0)$. Then we can obtain two equations
	\begin{equation}
	\label{a.11}
	\begin{cases}
	L_0\phi_R+\gamma\frac{\int_{\R}w\phi_Rdx}{\int_{\R}w^2dx}w^2=-\alpha_R\phi_R+\alpha_I\phi_I,\\
	L_0\phi_I+\gamma\frac{\int_{\R}w\phi_Idx}{\int_{\R}w^2dx}w^2=-\alpha_R\phi_I-\alpha_I\phi_R,
	\end{cases}
	\end{equation}
	Multiplying the first equation of \eqref{a.11} by $\phi_R$ and the second one of \eqref{a.11} by $\phi_I$ and adding them together, we obtain
	\begin{equation}
	\label{a.12}
	\begin{aligned}
	-\alpha_R\int_{\R}(\phi_R^2+\phi_I^2)dx=~&L_1(\phi_R,\phi_R)+L_1(\phi_I,\phi_I)
	+\dfrac{\int_{\R}w^3dx}{\left(\int_{\R}w^2dx\right)^2}
	\left[\left(\int_{\R}w\phi_Rdx\right)^2+\left(\int_{\R}w\phi_Idx\right)^2\right]\\
	&+(\gamma-2)\frac{\int_{\R}w\phi_Rdx\int_{\R}w^2\phi_Rdx+\int_{\R}w\phi_Idx\int_{\R}w^2\phi_Idx}{\int_{\R}w^2dx}
	\end{aligned}
	\end{equation}
	Multiplying both equations of \eqref{a.11} by $w$ and adding together, we get
	\begin{equation}
	\label{a.13}
	\begin{aligned}
	\int_{\R}w^2\phi_Rdx-\gamma\frac{\int_{\R}w\phi_Rdx}{\int_{\R}w^2dx}\int_{\R}w^3dx=\alpha_R\int_{\R}w\phi_Rdx-\alpha_I\int_{\R}w\phi_Idx,\\
	\int_{\R}w^2\phi_Idx-\gamma\frac{\int_{\R}w\phi_Idx}{\int_{\R}w^2dx}\int_{\R}w^3dx=\alpha_R\int_{\R}w\phi_Idx+\alpha_I\int_{\R}w\phi_Rdx.
	\end{aligned}
	\end{equation}
	We multiply the first equation of \eqref{a.13} by $\int_{\R}w\phi_Rdx$ and the second one of \eqref{a.13} by $\int_{\R}w\phi_Idx$ and add them together, we obtain
	\begin{equation}
	\label{a.14}
	\int_{\R}w\phi_Rdx\int_{\R}w^2\phi_Rdx+\int_{\R}w\phi_Idx\int_{\R}w^2\phi_Idx
	=\left(\alpha_R+\gamma\frac{\int_{\R}w^3dx}{\int_{\R}w^2dx}\right)\left((\int_{\R}w\phi_Rdx)^2+(\int_{\R}w\phi_Idx)^2\right).
	\end{equation}
	Therefore, we have
	\begin{equation}
	\label{a.15}
	\begin{aligned}
	-\alpha_R\int_{\R}(\phi_R^2+\phi_I^2)dx
	=~&L_1(\phi_R,\phi_R)+L_1(\phi_I,\phi_I)+\frac{\int_{\R}w^3dx}{(\int_{\R}w^2dx)^2}\left[(\int_{\R}w\phi_Rdx)^2+(\int_{\R}w\phi_Idx)^2\right]\\
	&+(\gamma-2)\left(\alpha_R+\gamma\frac{\int_{\R}w^3dx}{\int_{\R}w^2dx}\right)\frac{(\int_{\R}w\phi_Rdx)^2+(\int_{\R}w\phi_Idx)^2dx}{\int_{\R}w^2dx}.
	\end{aligned}
	\end{equation}
	Set
	$$\phi_R=c_Rw+\phi_R^\perp,~\phi_R^\perp\perp X_1,\quad
	\phi_I=c_Iw+\phi_I^\perp,~\phi_I^\perp\perp X_1.$$
	Then
	$$\int_{\R}w\phi_Rdx=c_R\int_{\R}w^2dx,\quad \int_{\R}w\phi_Idx=c_I\int_{\R}w^2dx,$$
	and
	$$d_{L^2(\R)}^2(\phi_R,X_1)=\|\phi_R^\perp\|_{L^2}^2,\quad
	d_{L^2(\R)}^2(\phi_I,X_1)=\|\phi_I^\perp\|_{L^2}^2.$$
	By some simple computations we have
	\begin{equation*}
	L_1(\phi_R,\phi_R)+L_1(\phi_I,\phi_I)+\alpha_R(\gamma-1)
	(c_R^2+c_I^2)\int_{\R}w^2dx+(c_R^2+c_I^2)(\gamma-1)
	^3\int_{\R}w^3dx+\alpha_R(\|\phi_R^\perp\|_{L^2}^2+\|\phi_I^\perp\|_{L^2}^2)=0.
	\end{equation*}
	By Lemma \ref{lea.2},
	\begin{equation*}
	\alpha_R(\gamma-1)(c_R^2+c_I^2)\int_{\R}w^2dx+(\gamma-1)^2(c_R^2+c_I^2)\int_{\R}w^3dx+\alpha_R(\|\phi_R^\perp\|_{L^2}^2+\|\phi_I^\perp\|_{L^2}^2)\leq0.
	\end{equation*}
	Since $\gamma<1$, we have $\alpha_R<0$, which proves Theorem \ref{th3.stability}-(2).
	
	It remains to prove the last conclusion. Since $\phi$ satisfies
	\begin{equation}
	\label{a.16}
	\s\phi+\phi-2w\phi+\gamma\frac{\int_{\R}w\phi dx}{\int_{\R}w^2dx}w^2=0.
	\end{equation}
	Then $L_0\phi=-c_3(\phi)w$,	where  $c_3(\phi)=\gamma\frac{\int_{\R}w\phi dx}{\int_{\R}w^2dx}.$
	Hence  $\phi-c_3(\phi)w\in \mathrm{Kernel}{(L_0)}.$
	Thus
	\begin{equation}
	\label{a.17}	
	c_3(\phi)\gamma=\gamma\frac{\int_{\R}w\phi dx}{\int_{\R}w^2dx}=c_3(\phi).
	\end{equation}
	So if $\gamma\neq1$, we get $c_3(\phi)=0$. Then $\phi\in\mathrm{Kernel}{(L_0)}$ and we complete the proof.
\end{proof}
\medskip

\section{Overview of Numerical Calculations}\label{app:numerical}

In this section we briefly outline the numerical calculation of solutions to the core problem \eqref{eq:core-problem} and the time-dependent fractional GM system with periodic boundary conditions \eqref{1.fgm}. In both cases we use the finite difference-quadrature discretization for the fractional Laplacian with piecewise linear interpolants developed by Huang and Oberman \cite{huang_2013}. When discretizing \eqref{eq:core-problem} we approximate the fractional Laplacian on a truncated domain using the far-field behaviour presented in Proposition \ref{pr3.1} to capture the nonlocal behaviour outside the truncated domain. On the other hand, when spatially discretizing \eqref{1.fgm} we use the spatial periodicity of the system to simplify the expression for the discrete fractional Laplacian. Time stepping of the spatially discretized system is then performed using a second-order semi-implicit backwards difference scheme \cite{ruuth_1995}. In the remainder of this section we provide additional details for both of these cases.

% numerial calculation of the core solution. Step one-spatial discretization of the nonlinear system. Step 2, numerial continuation staring with s=1/2 and working our way up and down.

First we consider the numerical calculation of solutions to the core problem \eqref{eq:core-problem}. Since the domain for \eqref{eq:core-problem} is $-\infty<y<\infty$ we need to both truncate and then discretize the truncated domain to obtain a numerical calculation. Outside of the truncated domain we use the far-field behaviour from Proposition \ref{pr3.1} to impose a Dirichlet boundary condition. Specifically, letting $L>0$ we approximate solutions to \eqref{eq:core-problem} by solving the truncated problem
\begin{equation*}
(-\Delta)^s U + U - U^2 = 0,\quad |y|<L,\qquad U(y) = U(L)(L/y)^{1+2s},\quad |y|\geq L,
\end{equation*}
where we have replaced $\mathfrak{b}_s$ with $U(L) L^{1+2s}$ since we do not yet know the value of $\mathfrak{b}_s$. To account for the nonlocal contributions outside of the truncated domain we discretize a computational domain that extends beyond the truncated domain. Specifically we discretize the computational domain $-2L\leq y\leq 2L$ by letting $y_i = ih$ for $i=-2N,...,2N$ where $h=1/N$. Seeking symmetric solutions we impose $U_i = U_{|i|}$ for all $i=-2N,...,2N$ which reduces the unknown values to $U_0,...,U_N$. Note in addition that $U_i = (L/y_{|i|})^{1+2s} U_N $ for all $N<|i|\leq 2N$. The fractional Laplacian can then be approximated by (see \S5 in \cite{huang_2013})
\begin{equation}\label{eq:core-problem-discretized}
(-\Delta)^sU(y_i) \approx (-\Delta_h)^sU_i = \sum_{j=-2N}^{2N}(U_i-U_{|i-j|})w_{j} +  C^{II}U_i - C_{i}^{III}U_N,\qquad i=0,...,N
\end{equation}
where the first term accounts for integration inside of the truncated domain and
\begin{equation}\label{eq:discretized_weights}
w_j = \frac{C_s}{2s(2s-1)h^{2s}}\begin{cases} 2^{1-2s}-2+(1-s)^{-1}s , & j=\pm 1, \\ |j+1|^{1-2s} - 2|j|^{1-2s} + |j-1|^{1-2s}, & \text{otherwise},\end{cases}\quad (j=\pm 1, \pm 2, ...),
\end{equation}
where we note that the value of $w_0$ is never needed in the discretization. The remaining two terms $C^{II}$ and $C_i^{III}$ account for contributions outside of the computational domain and are respectively given by
\begin{gather*}
C^{II} = \frac{C_s}{s(2L)^{2s}}, \\
C_i^{III} = \frac{C_s L^{2s+1}}{(4s+1)(2L)^{4s+1}}\biggl( \prescript{}{2}{F}_1\bigl(2s+1,4s+1;4s+2,\tfrac{y_i}{2L}\bigr) + \prescript{}{2}{F}_1\bigl(2s+1,4s+1;4s+2,-\tfrac{y_i}{2L}\bigr) \biggr),
\end{gather*}
where $\prescript{}{2}{F}_1$ is the Gaussian hypergeometric function.

With the above discretization it is then possible to approximate solutions to \eqref{eq:core-problem} by solving the nonlinear algebraic system \eqref{eq:core-problem-discretized} for the $N+1$ unknowns $U_0,...,U_N$. To numerically solve this nonlinear system we use the \texttt{fsolve} function in the Python 3.6.8 SciPy library. Our initial guess for the nonlinear solver is obtained by numerical continuation in $s$ starting with $s=1/2$ for which the exact solution $w_s=2/(1+y^2)$ is known. In this way we may numerically calculate the core solution for an arbitrary value of $s$ and in Figure \ref{fig:core_problem-sol} we plot the resulting core solutions for select values of $s$ where we have used $N=2000$ for the spatial discretization. From these solutions we may also extract the value of the far-field decay coefficient $\mathfrak{b}_s$ and this is plotted in Figure \ref{fig:core_problem-constant}. We conclude by remarking that no nontrivial solution to the core problem \eqref{eq:core-problem} exists for $s\leq 1/6$ (see for example \cite{frank_2013_uniqueness}) and our numerical computations failed to yield solutions for $s\approx 0.2$ and below because of this.

\begin{figure}
	\centering
	\begin{subfigure}{0.5\textwidth}
		\centering
		\includegraphics[scale=0.675]{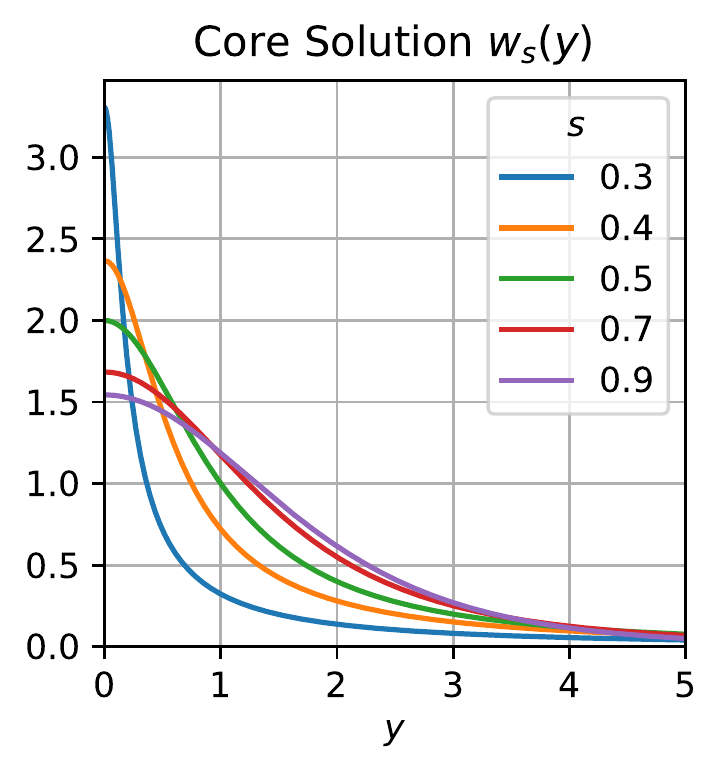}
		\caption{}\label{fig:core_problem-sol}
	\end{subfigure}%
	\begin{subfigure}{0.5\textwidth}
		\centering
		\includegraphics[scale=0.675]{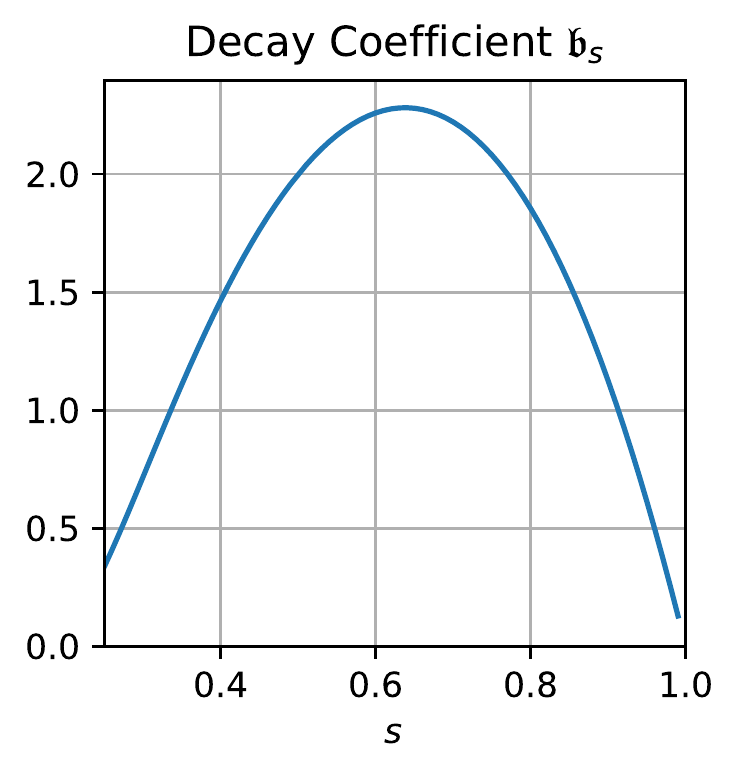}
		\caption{}\label{fig:core_problem-constant}
	\end{subfigure}
	\caption{(A) Sample plots of numerically computed solutions to the core problem \eqref{eq:core-problem}. (B) Far-field decay coefficient $\mathfrak{b}_s$ in the core problem \eqref{eq:core-problem}.}\label{fig:core_problem}
\end{figure}

Turning now to the numerical solution of \eqref{1.fgm} we discretize the interval $-1<x<1$ into $N$ uniformly distributed points given by $x_i = -1 + 2ih$ for $i=0,...,N-1$ where $h=1/N$. Assuming that $\phi(x)$ is a $2-$periodic function on $-1<x<1$ and letting $\phi_i\equiv \phi(x_i)$ for each $i=0,...,N-1$ we calculate (see equation (FLh) in \cite{huang_2013})
\begin{equation}\label{eq:discretized_gm_1}
(-\Delta)^s\phi(x_i)\approx(-\Delta_h)^s\phi_i = \sum_{j=-\infty}^\infty (\phi_i - \phi_{i-j})w_j = \sum_{j=0}^{N-1} W_{i-j} (\phi_i - \phi_j),
\end{equation}
where the final equality follows from the periodicity of $\phi$ and where
$$
W_\sigma \equiv w_\sigma + \sum_{k=1}^\infty(w_{\sigma+Nk} + w_{\sigma-Nk}),
$$
with each weight $w_i$ ($i\in\mathbb{Z}$) being given by \eqref{eq:discretized_weights}. In our numerical calculations we truncate the sum after $500$ terms. From \eqref{eq:discretized_gm_1} it is then straightforward to deduce the entries of the matrix $(-\Delta_h)^s$ which we remark is dense in contrast to the tridiagonal matrix obtained by applying a finite-difference approximation to the one-dimensional Laplacian. With this spatial discretization we can then approximate \eqref{1.fgm} with the $2N$-dimensional system of ODEs
\begin{equation}\label{eq:discretized_gm_2}
\frac{d\Phi}{dt} + \mathcal{A}\Phi + \mathcal{N}(\Phi) = 0,
\end{equation}
where $\Phi(t) = (u_0(t),...,u_{N-1}(t),v_0(t),...,v_{N-1}(t))^T$, $\mathcal{A}=\text{diag}(\varepsilon^{2s_1}(-\Delta_h)^{s_1}, \tau^{-1}D(-\Delta_h)^{s_2})$, and $\mathcal{N}(\Phi)$ is the $2N$-dimensional array that accounts for the nonlinearities in \eqref{1.fgm}. To integrate \eqref{eq:discretized_gm_2} we employ a second-order semi-implicit backwards difference scheme (2-SBDF) \cite{ruuth_1995} that uses second-order backward difference time-stepping for the fractional Laplace term and explicit (forward) time-stepping for the nonlinear term. Specifically, given a time-step size $\Delta t>0$ and denoting by $\Phi_n = \Phi(t_n)$ where $t_n=n\Delta t$ the 2-SBDF scheme becomes
\begin{equation}
(3\mathcal{I}-2\Delta t {A})\Phi_{n+1} = 4\Phi_n - \Phi_{n-1} + 4\Delta t \mathcal{N}(\Phi_n) - 2\Delta t \mathcal{N}(\Phi_{n-1}),
\end{equation}
where $\mathcal{I}$ is the $2N\times 2N$ identity matrix. Given an initial condition $\Phi_0$ (based on the asymptotic approximations of \S\ref{sec:formal-results}) we also need $\Phi_1$ to initiate time-stepping with 2-SBDF. We calculate $\Phi_1$ by using a first-order semi-implicit backwards difference scheme (1-SBDF) \cite{ruuth_1995} given by
\begin{equation}
(\mathcal{I}-\Delta t \mathcal{A})\Phi_{n+1} = \Phi_n + \Delta t \mathcal{N}(\Phi_n),
\end{equation}
with which we perform five time steps with a step size that is one-fifth that used in our main 2-SBDF scheme. Throughout the numerical simulations of \eqref{1.fgm} in \S\ref{subsec:example-1} and \S\ref{subsec:example-2} we used a mesh consisting of $N=2000$ points and a time-step size of $\Delta t = 0.001$.

\section{A Rapidly Converging Series for the Fractional Green's Function}\label{app:greens-func}

In this section we provide a quickly converging series expansion of the Green's function $G_D(x,z)$ satisfying \eqref{3.green}. In particular, by adding and subtracting appropriate multiples of $|x-z|^{2s-1}$ and $|x-z|^{4s-1}$ as outlined below we obtain the series expansion
\begin{equation}
\begin{split}\label{eq:greens-fast-series}
G_D(x,z) & =  \mathfrak{a}_s\bigl(|x-z|^{2s-1}-\tfrac{1}{2s}\bigr) - \mathfrak{b}_s\bigl(|x-z|^{4s-1}-\tfrac{1}{4s}\bigr) - \tfrac{1}{2}\bigl((2s-1)\mathfrak{a}_s - (4s-1)\mathfrak{b}_s\bigr)\bigl(|x-z|^2-\tfrac{1}{3}\bigr) \\
& + \frac{1}{2} + \frac{1}{D^3}\sum_{n=1}^\infty\biggl(1 + \frac{1}{D(n\pi)^{2s}}\biggr)^{-1}\frac{\cos n\pi |x-z|}{(n\pi)^{6s}} + 2\sum_{n=1}^\infty \biggl(\frac{\mathfrak{b}_sb_n}{(n\pi)^{4s}} - \frac{\mathfrak{a}_sa_n}{(n\pi)^{2s}}\biggr)\cos n\pi |x-z|,
\end{split}
\end{equation}
where
\begin{equation}
\mathfrak{a}_s = -\frac{2}{\pi D}s\Gamma(-2s)\sin(\pi s),\quad \mathfrak{b}_s\equiv -\frac{4}{\pi D^2}s\Gamma(-4s)\sin(2\pi s),
\end{equation}
and
\begin{gather}\label{eq:an-bn-def}
a_n = (2s-1)(2s-2)\int_{n\pi}^\infty x^{2s-3}\cos xdx, \\
b_n = -(4s-1)(4s-2)(4s-3)\biggl((-1)^n(n\pi)^{4s-4} + (4s-4)\int_{n\pi}^\infty x^{4s-5}\cos x dx\biggr).
\end{gather}
The key reason for considering this expansion is that the coefficients of $\cos n\pi|x-z|$ converge to zero sufficiently fast to allow the order of summation and second-differentiation to be interchanged. In particular using \eqref{eq:greens-fast-series} we can numerically calculate that $\partial_x^2G_D(x,0)$ is strictly positive at $x=1$.

To derive \eqref{eq:greens-fast-series} we use integration by parts to calculate the coefficients in the Fourier series
\begin{equation}
|x|^{\beta-1} = \frac{1}{\beta} + 2\sum_{n=1}^\infty \frac{c_{n,\beta}}{(\pi n)^\beta}\cos n\pi x,\qquad c_{n,\beta} = \int_0^{n\pi}x^{\beta-1}\cos x dx,
\end{equation}
where $\beta=2s\in(1,2)$ or $\beta=4s\in(2,4)$. Specifically we calculate
\begin{align*}
c_{n,2s} & = -(2s-1)\int_0^{n\pi} x^{2s-2}\sin x dx  = -(2s-1)\int_0^\infty x^{2s-2}\sin xdx + (2s-1)\int_{n\pi}^\infty x^{2s-2}\sin x dx \\
& = -(2s-1)\int_0^\infty x^{2s-2}\sin xdx + (-1)^n (2s-1)(n\pi)^{2s-2} + a_n,
\end{align*}
for $\beta=2s$ and
\begin{align*}
c_{n,4s} = & (4s-1)(4s-2)(4s-3)\int_0^\infty x^{4s-4}\sin x dx + (-1)^{n}(4s-1)(n\pi)^{4s-2} + b_n,
\end{align*}
for $\beta=4s$ and where $a_n$ and $b_n$ are defined by \eqref{eq:an-bn-def}. The definite integrals appearing in $c_{n,2s}$ and $c_{n,4s}$ can then be written in terms of $\mathfrak{a}_s$ and $\mathfrak{b}_s$ respectively by using the integral representation of the Gamma function $\int_0^\infty x^{z-1}\sin x dx = \Gamma(z)\sin\bigl(\tfrac{\pi z}{2}\bigr)$ for $-1<\Re(z)<1$ together with the reflection formula $z\Gamma(z)\Gamma(-z) = -\pi/\sin\pi z$ for all $z\notin\mathbb{Z}$ (see equations 5.9.7 and 5.5.3 in \cite{NIST:DLMF} respectively).

\section{Derivation of the Slow Dynamics}\label{app:slow-dynamics}

In this appendix we outline the derivation of the system of ODEs \eqref{eq:slow-dynamics} governing the slow dynamics of the multi-spike quasi-equilibrium solutions considered in \S\ref{subsec:formal-equilibrium}. Letting $x=x_i+\varepsilon y$ with $y=\mathcal{O}(1)$ we obtain \eqref{eq:quasi-eq-sol-v} together with \eqref{eq:greens-fast-series} (with $s=s_2$)
\begin{equation}
v\sim \varepsilon^{-1}\omega_{s_1}\biggl(\sum_{j=1}^N\xi_j^2G_D(x_i,x_j) + \mathfrak{a}_{s_2} \xi_i^2 \varepsilon^{2s_2-1} |y|^{2s_2-1} + \varepsilon b_i y + \mathcal{O}(\varepsilon^{\min\{2,4s_2-1\}}) \biggr),
\end{equation}
where $b_i\equiv \sum_{j\neq i}\xi_j^2\nabla_1 G_D(x_i,x_j)$. It follows that the first order correction term in the inner expansion must be $\mathcal{O}(\varepsilon^{2s_2-1})$ and in particular for $x=x_i+\varepsilon y$ and $y=\mathcal{O}(1)$
\begin{equation*}
u\sim\varepsilon^{-1}\bigl(\xi_i w_{s_1}(y) + \varepsilon^{2s_2-1} U_{i1} + o(\varepsilon^{2s_2-1}) \bigr),\quad v\sim\varepsilon^{-1}\big(\xi_i + \varepsilon^{2s_2-1}V_{i1}  + o(\varepsilon^{2s_2-1})\bigr),
\end{equation*}
By repeatedly using the method of matched asymptotic expansions we determine that the fractional power $\varepsilon^{2s_2-1}$ initiates a chain of corrections at powers of $\varepsilon$ that are multiples of $2s_2-1$. In particular for each $i=1,...,N$ the inner expansion when $x=x_i+\varepsilon y$ with $y=\mathcal{O}(1)$ takes the form
\begin{gather}\label{eq:higher-order-inner}
u\sim \varepsilon^{-1}\bigl(\xi_iw_{s_1}(y) + \sum_{k=1}^{k_{\max}-1} \varepsilon^{k(2s_2-1)}U_{ik} + \varepsilon U_{ik_{\max}} + o(\varepsilon)  \bigr),\\ v\sim \varepsilon^{-1}\bigl(\xi_i + \sum_{k=1}^{k_{\max}-1} \varepsilon^{k(2s_2-1)}V_{ik} + \varepsilon V_{ik_{\max}} + o(\varepsilon)\bigr)
\end{gather}
where $k_{\max}$ is the smallest integer such that $k_{\max}(2s_2-1)\geq 1$. Importantly, since $V_{ik} \sim C_k|y|^{2s_2-1}$ as $|y|\rightarrow\infty$ for $1\leq k<k_{\max}$ each of these corrections are even in $y$. On the other hand when $k=k_{\max}$ we have the far-field behaviour
\begin{equation}
V_{ik_{\max}}\sim \omega_{s_1}b_i y  + \delta_{1,k_{\max}(2s_2-1)} C_{k_{\max}}|y|^{2s_2-1}\qquad |y|\rightarrow\infty,
\end{equation}
where $\delta_{i,j}$ is the discrete Kronecker delta function. Therefore we can write $V_{ik_{\max}}=\omega_{s_1}b_iy + V_{ik_{\max}}^e$ where $V_{ik_{\max}}^e$ is an even function in $y$. Assuming that each $x_i = x_i(t)$ and substituting \eqref{eq:higher-order-inner} into \eqref{1.fgm} with $x=x_i+\varepsilon y$ we obtain
\begin{equation}\label{eq:slow-temp-eq-1}
-\frac{1}{\varepsilon}\xi_i\frac{dw_{s_1}}{dy}\frac{dx_i}{dt}+ \sum_{k=1}^{k_{\max}-1}\varepsilon^{k(2s_2-1)}L_0 U_{ik} + \varepsilon L_0 U_{ik_{\max}} + \mathcal{N}_\varepsilon + \varepsilon w_{s_1}^2 \bigl( \omega_{s_1}b_iy + V_{ik_{\max}}^e \bigr) + o(\varepsilon) = 0,
\end{equation}
where $\mathcal{N}_\varepsilon$ is an even function of $y$ that consists of the residual nonlinear combinations of $U_{ik}$ and $V_{ik}$ for $1\leq k<k_{\max}$. Recalling that $dw_{s_1}/dy$ spans the kernel of $L_0$ we impose a solvability condition on \eqref{eq:slow-temp-eq-1} by multiplying it with $dw_{s_1}/dy$ and integrating to obtain
\begin{equation*}
\frac{dx_i}{dt} = \varepsilon^2 \frac{\omega_{s_1}b_i\int_{-\infty}^\infty w_{s_1}^2 \tfrac{dw_{s_1}}{dy} y dy}{\xi_i \int_{-\infty}^\infty |dw_{s_1}/dy|^2 dy} = -\varepsilon^2 \frac{\omega_{s_1}\int_{-\infty}^\infty w_{s_1}^3}{3\xi_i \int_{-\infty}^\infty|dw_{s_1}/dy|^2dy}b_i
\end{equation*}
where we have used integration by parts to obtain the second equality. This establishes \eqref{eq:slow-dynamics}.

\end{document}